\documentclass[journal,twoside,web]{ieeecolor}

\usepackage{generic}
\usepackage{graphicx,algorithm,algcompatible,amsmath,amsfonts,verbatim,mathtools,enumerate,bm,hyperref,dsfont,pgfplots,subcaption}
\captionsetup{font=small}
\usepackage{algpseudocode}
\usepackage{tikz}

\def\BibTeX{{\rm B\kern-.05em{\sc i\kern-.025em b}\kern-.08em
    T\kern-.1667em\lower.7ex\hbox{E}\kern-.125emX}}
\markboth{\hskip25pc IEEE TRANSACTIONS AND JOURNALS TEMPLATE}
{Wei \MakeLowercase{\textit{et al.}}: Heterogeneous Multi-Agent Task-Assignment with Uncertain Execution Times and Preferences}

\newcommand{\N}{{\mathbb N}}
\newcommand{\1}{{\mathds{1}}}

\newcommand{\mc}{\mathcal}

\newcommand{\ul}{\underline}
\newcommand{\ol}{\overline}
\DeclareMathOperator{\E}{\mathbb{E}}

\newcommand{\Expect}{\mathop{\bf E{}}}


\newcommand{\argmin}{\mathop{\rm argmin}}
\newcommand{\argmax}{\mathop{\rm argmax}}


\newcommand{\mnorm}[1]{{\left\vert\kern-0.25ex\left\vert\kern-0.25ex\left\vert #1 
    \right\vert\kern-0.25ex\right\vert\kern-0.25ex\right\vert}}

\newtheorem{definition}{Definition} 
\newtheorem{theorem}{Theorem}
\newtheorem{lemma}{Lemma}

\newtheorem{remark}{Remark}
\newtheorem{proposition}{Proposition}

\newtheorem{example}{Example}

\begin{document}
\title{Heterogeneous Multi-Agent Task-Assignment with Uncertain Execution Times and Preferences
}
\author{Qinshuang Wei, \IEEEmembership{Member, IEEE}, Vaibhav Srivastava, \IEEEmembership{Senior Member, IEEE}, Vijay Gupta, \IEEEmembership{Fellow, IEEE}
\thanks{Q. Wei, and V. Gupta are with the Elmore Family School of Electrical and Computer Engineering, Purdue University, West Lafayette, IN 47907, USA (emails: wei473@purdue.edu, gupta869@purdue.edu). V. Srivastava is with the Electrical and Computer Engineering, Michigan State University, East Lansing, MI 48824 (email: vaibhav@egr.msu.edu).
This work was supported in part by the ONR award N00014-22-1-2813. }
}


\maketitle

\begin{abstract}
While sequential task assignment for a single agent has been widely studied, such problems in a multi-agent setting, where the agents have heterogeneous task preferences or capabilities, remain less well-characterized. We study a multi-agent task assignment problem where a central planner assigns recurring tasks to multiple members of a team over a finite time horizon. For any given task, the members have heterogeneous capabilities in terms of task completion times, task resource consumption (which can model variables such as energy or attention), and preferences in terms of the rewards they collect upon task completion. We assume that the reward, execution time, and resource consumption for each member to complete any task are stochastic with unknown distributions. The goal of the planner is to maximize the total expected reward that the team receives over the problem horizon while ensuring that the resource consumption required for any assigned task is within the capability of the agent. We propose and analyze a bandit algorithm for this problem. Since the bandit algorithm relies on solving an optimal task assignment problem repeatedly, we analyze the achievable regret in two cases: when we can solve the optimal task assignment exactly and when we can solve it only approximately. 
\end{abstract}

\section{Introduction}
\label{sec:intro}


A multi-agent task assignment problem aims to maximize the reward from task completion by assigning tasks to agents. We consider a setting where tasks reoccur and agents  may have heterogeneous capabilities and preferences. However, the planner who assigns the tasks to the agents may not be aware of the capabilities and preferences of the agents and has to learn this information over time based on the results from previous task assignments and executions. As an example, we can consider crowdsourcing in which the planner strives to optimize task assignments while considering the realized utility~\cite{zhang2015bandit, zhang2015task,ul2016efficient}, worker reliability~\cite{rangi2018multi}, or preferences~\cite{hassan2014multi,zhao2019preference,zhao2020preference} within a limited time or budget~\cite{rangi2018multi,tran2014efficient}. 


The main challenge in such multi-agent task assignment is that agents have heterogeneous and a priori unknown (to the planner) preferences in terms of the reward they obtain from completing a particular task, characteristics in terms of the time they spend on a task, and resources that they need to consume to successfully complete a task. 
We study a task assignment problem in an online multi-agent setting where the tasks recur over time upon completion. 
A central planner assigns tasks to the agents. The reward, completion time, and resource needed by each agent for any task follow distributions that are unknown to the planner. Assigning tasks to agents who do not have sufficient resources to complete the tasks incurs no reward and instead generates a penalty that is counted separately. The goal of the planner is to maximize the expected total reward that all the agents together can receive over a given finite time horizon from the completed tasks. 
We formulate and solve this problem in a bandit framework to learn the optimal sequential task assignment over the problem horizon.



Although existing literature has studied sequential task assignment problems with uncertainty in rewards, task completion times, and task resource consumption individually, the combination of all these features has not been studied so far. When each agent can execute an arbitrary number of tasks at the same time, the problem can be cast as a combinatorial multi-armed bandit (CMAB) in which the planner pulls multiple arms simultaneously. Here, \cite{chen2013combinatorial} provides a framework that minimizes an approximate regret (with respect to a non-optimal benchmark),~\cite{dong2021efficient} considers the situation when the tasks and agents have uncertain waiting times, and~\cite{kveton2015tight, kveton2014matroid} propose computationally efficient algorithms for stochastic combinatorial semi-bandits. However, unlike this line of work, we additionally consider that the agents consume a stochastic amount of resources to execute any task that precludes the basic assumption made in such works that an arbitrary number of tasks can be assigned to the same agent and brings in a combinatorial aspect to the problem.

Similarly, the requirement of an uncertain amount of resources for the agents to execute any task in a multi-armed bandit (MAB) formulation has also been considered. For instance,~\cite{ding2013multi, cayci2020budget} study the problem when the planner needs to ensure that the cost for pulling any arm remains within a given budget. The studies~\cite{kagrecha2023constrained, amani2019linear} explore a constrained MAB in settings that are compatible with uncertain resource constraints.~\cite{zhou2018budget} considers the case when a fixed number of tasks that each incur an uncertain cost have to be selected at each time when budget constraints are present at each agent.
However, none of the works in this direction consider stochastic and unknown task-completion times as we do, which makes the problem of online task assignment much more difficult since uncertainty in completion time results in any task blocking an agent either fully or partially. 

Some works have also separately explored unknown completion time in the context of blocking bandits. The problem setup in~\cite{ito2024bandit, gyorgy2007continuous} considers stochastic completion times for an agent to complete tasks. While~\cite{gyorgy2007continuous} does not consider multi-tasking for the agent,~\cite{ito2024bandit} considers multi-tasking but only for a single agent. Similarly, the works~\cite{basu2019blocking,basu2021contextual,bishop2020adversarial} have considered blocking bandits, but with fixed blocking times for task completion. Recently,~\cite{atsidakou2021combinatorial} extended blocking bandits to consider uncertain blocking times and~\cite{papadigenopoulos2021recurrent} characterized the regret with respect to a sub-optimal benchmark. However, these works do not consider that an agent may require an uncertain amount of resources to execute any task, which complicates the problem for the planner.



Our main contribution is to formulate, solve, and analyze a sequential multi-agent task assignment problem with heterogeneous agents and recurring tasks in a bandit framework, where the task-completion times, rewards obtained from completing the tasks, and the resources required by the agents to complete any task are all stochastic and follow distributions that are unknown to the planner. The task assignment algorithm we propose is not limited to a particular number of agents and tasks. It utilizes a routine that computes the task assignments given the information known to the planner. Note that an optimal task assignment problem is usually computationally complex, even if the distributions of the various variables are known exactly. For problems of small size, this routine could be one that calculates the optimal task assignment, while for larger problems, the routine can result in approximately optimal task assignments. Our algorithm is robust to such choices. 
We characterize both the regret of our proposed sequential task assignment algorithm (where the reward is counted only over tasks that can be completed given the resource constraints of the agents) and the penalty that the planner incurs due to assigning tasks to agents beyond their resource capacity (the so-called violation penalty). The regret is with respect to a benchmark that is appropriate in the sense that it relies on the optimal or sub-optimal task assignments, depending on the optimal or sub-optimal routine used in our algorithm. 
We show that both the violation penalty and the regret can be upper bounded as $O(\ln T)$, where $T$ is the time horizon, demonstrating the effectiveness of the task assignment algorithm. We validate the proposed algorithm through a case study with three teams of different sizes.

Considering all three sources of uncertainty---rewards, task completion times, and task resource consumptions---together requires solving many technical challenges. First, in the analysis of the violation penalty, the planner needs to assign the tasks based on an estimated lower bound of the resource consumed by an agent for multiple tasks. Note that this lower bound is not simply a summation of the bounds for individual tasks as in traditional constrained MAB studies (e.g., ~\cite{kagrecha2023constrained}), rather a maximum over these individual terms. 
Second, in the regret analysis, as compared to state-of-the-art MAB studies (e.g.,~\cite{ito2024bandit}), on top of the update of the estimated resource consumption, we also need to estimate (and remove) the rewards that the agents obtain from tasks that consume more resources than are available at any agent. The coupling between these two terms makes the regret computation more involved.
Finally, in the regret analysis, the task assignments in every round must account for the fact that we may only obtain an approximately optimal task assignment due to the computational complexity of the problem, and that we have to select feasible task assignments based on our current estimation of resource constraints at the agents. Neglecting this fact in the regret calculation can result in assignments that amplify the approximation error.


The paper is organized as follows. We formulate the sequential task assignment problem in Section~\ref{sec:formulation}, and then propose a bandit algorithm for task assignments in Section~\ref{sec:bandit_alg}, followed by theoretical analysis of the regrets and constraint violation penalty incurred by the algorithm in Section~\ref{sec:thm}. Finally, we demonstrate our algorithm with two case studies in Section~\ref{sec:numerical}.


\emph{Notation:}
The $(i,j)$-th entry of a matrix $M$ is denoted by $M_{ij}$. For two vectors $a,b$ (resp. matrices $A,B$)  of the same size, $a \leq b$ (resp. $A\leq B$) means that each entry of $a$ (resp. $A$) is less or equal than the corresponding entry of $b$ (resp. $B$). Given two $n\times m$ matrices $A,B$, we let $ C = A \odot B$ represent the dot product of two matrices, i.e., $C_{ij} = A_{ij} B_{ij}$. Further, we define the function $\Sigma(A)$ that sums up all entries in the matrix $A$.  We let $\mathbf{1}_N$ (resp. $\mathbf{0}_N$) represent the $N$-dimensional vector of all ones (resp. all zeros), and let $\mathbf{I}_N$ represent the $N \times N$ identity matrix. We denote the complement of a set $S$ as $S^c$. $\1[x]$ denotes the indicator function that is 1 if $x$ is true and 0 otherwise.
\section{Problem Formulation}
\label{sec:formulation}



Consider a team consisting of several agents that need to execute a set of tasks. Each task needs a certain amount of resources from an agent to complete, which can correspond to quantities such as energy or attention from the agent. Both the resource consumed by an agent for a task and the time that an agent takes to complete a task can vary among different agents for a given task, among different tasks for the same agent, and in every recurrence of the same task assigned to the same agent. Each agent has a resource constraint in terms of the total amount of resources it can devote at any time, which can vary among the agents. After completion of each task, the team receives a reward, which can vary with the task and the agent that completed that task. Each task becomes available anew after completion to be assigned again. The planner can assign a task to an agent if both the task and agent are available. The aim is to maximize the expected total team reward over a given time horizon.

More formally, the problem operates in discrete time steps $t=1,\cdots,T$ for a given horizon $T$.  We denote the set of all tasks by $\mc T := (v_1,v_2,\ldots,v_{N})$, where ${N}$ is the total number of tasks. Only one copy of the task is active at any time. However, once the task is finished by an agent, another copy becomes available to be assigned. For simplicity, we assume that the tasks are finished, regenerated, or assigned at the beginning of any time step. Denote the set of all agents by $\mc M := \{\eta_1, \eta_2, \ldots, \eta_{M}\}$, where $M$ is the total number of agents or team members. For notational ease, we define the index sets $I_\mc T := \{1,2,\ldots, N\}$ and $I_\mc M := \{1,2,\ldots, M\}$. For simplicity, we call the agent $\eta_m$ as agent $m$, and the task $v_i$ as task $i$. We can summarize all the {\em task assignments} at any time $t$ through an $N\times M$ binary matrix $a(t),$ whose $(i,m)$-th entry denotes whether the task $i$ has been assigned to agent $m$ (if the entry $a_{im}(t)=1$) or not (if the entry $a_{im}(t)=0$). Note that the task assignment may change over time.  Define the set of all tasks assigned to agent $m$ at time $t$ as $\mc T_m(t) := \{i \in I_\mc T \mid a_{im}(t)=1 \}.$  

Each agent $m$ has a constraint $L_m \geq 0$ in terms of resources it has at its disposal. Define $L := (L_1,L_2,\ldots,L_{M}).$ Once a task ${i}$ is assigned to an agent $m$ at time $t$, the agent must devote a stochastic amount of resource to complete the task at every step till the task completion. We denote by $f_{im}(t)$ the amount of resources that agent $m$ allocate to the task $i$ at time $t$. 
For simplicity and without loss of generality, we scale the amount of resources such that the probability distribution from which $f_{im}(t)$ is sampled for every $i$, $m$, and $t$ has the support $[0, 1].$ We also assume that this distribution has a time-invariant mean $ \bar{f}_{im}.$ 
At each round $t$, a feasible task assignment $a(t)$ is defined below.

\begin{definition}[Feasible Task Assignment]
\label{def:feasible_A}
    A task assignment $a(t)$ is \emph{feasible} if it satisfies the following two properties:
    \begin{enumerate}
        \item For all $m \in I_\mc M$, the agent $m$ can simultaneously execute the set of tasks assigned to her given the resources at her disposal. Given that the amount of resources assigned to a task is stochastically sampled, we impose the expected constraint on $\mc T_{m}(t)$ that $\sum_{i\in I_\mc T} \bar{f}_{im}  a_{im}(t)\leq L_m.$ 
        \item Each task is assigned to at most one agent at the same time. In other words, $\sum_{m\in I_\mc M} a_{im}(t)\leq 1$. 
    \end{enumerate}
\end{definition}
We denote the set of all {\em feasible} task assignments as $\mc A \subseteq \{0,1\}^{N \times M}$. Note that $\mc A$ is closed under inclusion, that is, if $a \in \mc A$ and $b\leq a$, then $b \in \mc A$. Throughout the paper, we assume that $\mc A \neq \emptyset$. The set of all {\em possible} task assignments is defined as the set of task assignments that satisfy the second condition in Definition~\ref{def:feasible_A} and denoted by $\mc A^0 \subseteq \{0,1\}^{N \times M}.$ We denote the index sets of all feasible task assignments by $I_\mc A$ and that of all possible task assignments by $I_{\mc A^0}$. 

Thus, the timeline of the problem is as follows. For a given time horizon $T \in \N$, at each round $t = 1, \cdots, T$, the planner observes the unassigned tasks and the current task assignment, and chooses a feasible task assignment $a(t) \in \mc A$. 
Each agent $m$ needs to spend time $c_{im}(t)$ to execute the tasks assigned to it at round $t$. We assume that whenever task $i$ is assigned to agent $m$, the time $c_{im}(t)$ is sampled from a discrete distribution with support over the set $\{C_l,C_l+1,\cdots, C_u\}$, where $C_l,C_u \in \mc \N$ and $1\leq C_l \leq C_u$. Thus, the agent $m$ completes the task $i$ assigned to it at time $t$ at time $t+c_{im}(t)$. During this time interval that the agent $m$ is executing the task $i$, it allocates an amount of resources $f_{im}(t) \in [0, 1]$ to the task at every time $t$. Finally, upon completion, the agent (equivalently, the team or the planner) receives a reward $r_{im}(t)$. Once again, we assume that the reward is sampled from a distribution with support over the interval $[0,1]$.  Once the task is completed at time $t+c_{im}(t)$, it becomes available for reassignment. 

For pedagogical ease, we assume that the allocation of resources $f_{im}(t)$, the amount of time needed to complete a task $c_{im}(t)$ and the reward $r_{im}(t)$ are chosen from their respective distributions in an independent and identically distributed manner across time, agents, and tasks. However, we note that our arguments below can easily be generalized to consider the completion times and rewards to be correlated at the cost of extra notation. We define $\bar r_{im},\bar c_{im},\bar f_{im}$ to be the expectations of $r_{im}(t), c_{im}(t), f_{im}(t),$ respectively. We can collate the completion times $c_{im}(t)$, resource allocations $f_{im}(t)$ and rewards $f_{im}(t)$ active at any time $t$ into the completion time matrix $c(t)$, resource matrix $f(t)$ and the reward matrix $r(t)$ in a straightforward manner. 

Note that at any time $t$, the planner needs to decide the assignment of only the new tasks regenerated and available at time $t$. However, the assignment of any new tasks must respect the constraints imposed by all tasks already assigned but uncompleted. Define the set of all assignments corresponding to tasks that are yet uncompleted at time $t$ by the binary matrix $b(t)\in\mc A$ with $b_{im}(t)$ as its $(i,m)$-th entry. Then, we can define the set $\mc A(t)$ of task assignments available for the newly regenerated tasks at time $t$ through the relations
\begin{equation}
    \label{eq:available_task}
\begin{split}
    &b_{im}(t) = \sum_{s=1}^{t-1} a_{im}(s)\1[s+c_{im}(s)>t] \,, \\
    &\mc A(t) = \{a \in \mc A \mid a+b(t) \in \mc A \}\,.
\end{split}
\end{equation}
Note that in Definition~\ref{def:feasible_A}, we have imposed only an expected constraint since the planner does not know the precise amount of resources (nor the distribution) allocated to the task. Thus, the task assignment at any time may result in violation of the total resources available at some of the agents. We measure this violation through a penalty term of the form 
\begin{equation}
    \label{eq:violation_penalty}
    V_{T}=\mathbb{E}
    \left[\sum_{t=1}^T \sum_{m=1}^M  \max\{0, \bar{f}_{im} (a_{im}(t)+b_{im}(t)) - L_m\}\right].
\end{equation}

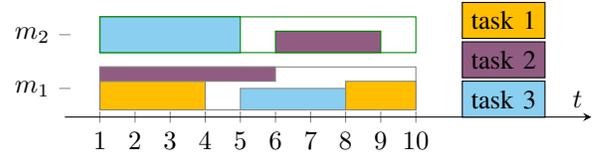
\begin{figure}[!t]
\centering
%
%
\definecolor{task1}{rgb}{1.0, 0.75, 0.0}
\definecolor{task2}{rgb}{0.57, 0.36, 0.51}
\definecolor{task3}{rgb}{0.54, 0.81, 0.94}
\definecolor{m1}{rgb}{0.52, 0.52, 0.51}
\definecolor{m2}{rgb}{0.0, 0.5, 0.0}

\begin{tikzpicture}

\begin{axis}[%
width=7cm,
height=2cm,
at={(0.52in,0.49in)},
scale only axis,
xmin=0,
xmax=15,
ymin= -0.2, ymax=4,
axis lines=middle,
xtick = {0,1,2,3,4,5,6,7,8,9,10},
xlabel = \(t\),
axis y line=left,
y axis line style={opacity=0},
ytick={0.8,2.3},
yticklabels={$m_1$,$m_2$}
]
\draw[m1] (axis cs:1,0.2) rectangle (axis cs:10,1.4);
\draw[m2] (axis cs:1,1.8) rectangle (axis cs:10,2.8);

\fill[task1] (axis cs:1,0.2) rectangle (axis cs:4,0.2+0.8);
\draw[m1] (axis cs:1,0.2) rectangle (axis cs:4,0.2+0.8);

\fill[task2] (axis cs:1,0.2+0.8) rectangle (axis cs:6,1.4);
\draw[m1] (axis cs:1,0.2+0.8) rectangle (axis cs:6,1.4);

\fill[task3] (axis cs:5,0.2) rectangle (axis cs:8,0.2+0.6);
\draw[m1] (axis cs:5,0.2) rectangle (axis cs:8,0.2++0.6);

\fill[task1] (axis cs:8,0.2) rectangle (axis cs:10,0.2+0.8);
\draw[m1] (axis cs:8,0.2) rectangle (axis cs:10,0.2+0.8);

\fill[task3] (axis cs:1,1.8) rectangle (axis cs:5,1.8+1);
\draw[m2] (axis cs:1,1.8) rectangle (axis cs:5,1.8+1);

\fill[task2] (axis cs:6,1.8) rectangle (axis cs:9,1.8+0.6);
\draw[m2] (axis cs:6,1.8) rectangle (axis cs:9,1.8+0.6); 

\node[draw,rectangle,fill =task1,  minimum size=0.5] at(axis cs:12.5,2.7){task 1};

\node[draw,rectangle,fill =task2,  minimum size=0.5] at(axis cs:12.5,1.6){task 2};
\node[draw,rectangle,fill =task3,  minimum size=0.5] at(axis cs:12.5,0.5){task 3};

\end{axis}


\end{tikzpicture}%
\caption{A feasible sequential task assignment in Example~\ref{ex:feasible}. The orange, purple, and blue filled rectangles represent task 1, 2, and 3, respectively. The width of the filled rectangle shows the task completion time, and the height shows the expected resources the member needs to spend on the corresponding task. While the grey and green unfilled rectangles represent the total resources that members 1 and 2 have, respectively.}
\label{fig:feasible}
\end{figure}
\begin{example}
\label{ex:feasible}
    In Fig.~\ref{fig:feasible} we show a feasible sequential task assignment for a team with 2 members $\mc M = \{m_1, m_2\}$ and 3 tasks $\mc V = \{v_1,v_2,v_3\}$, where $L_1= 1.2, L_2=1$, $\bar f_{11} = 0.8$, $\bar f_{21} = \bar f_{22} = 0.6$, $\bar f_{31} = \bar f_{12} = 0.4$, and $\bar f_{32}=1$. The completion times $c_{11}(1) =c_{31}(5) =c_{22}(6)= 3$, $c_{21}(1) = 5$, $c_{11}(8) = 2$, $c_{32}(1) = 4$.
\end{example}

The sequential task assignment problem for the planner is to generate a policy $\pi(t)$ for assigning tasks at every time $t\in[1,T]$ so as to maximize the expected accumulated reward for the team as given by 
\begin{equation}
    \label{eq:expected_accumulated_reward}
   E_{T}= \mathbb{E}\left[\sum_{t=1}^{T} \Sigma\big(a(t) \odot r(t)\big)\mathds{1}\{a(t)+b(t)\in \mc A \}\right].
\end{equation} The policy must be causal and feasible, meaning that it is in the set $\Pi$ of (possibly stochastic) functions that chooses a feasible task assignment $a(t)$ that satisfies~\eqref{eq:available_task} based on information available at the planner prior to round $t$, but not on that after round $t$. Note that the rewards from any tasks that are assigned to an agent without sufficient resources are not counted in $E_{T}$. 

Even if the distribution of the completion times, rewards, and resource consumptions for each agent to perform each task is given, it is still not easy for the central planner to choose the optimal tasks for each agent at each round. We therefore first discuss the optimal sequential task assignment problem when all the distributions information are known to the central planner.

\subsection{Known Distributions of Task Completion Times, Rewards, and Resource Consumption}
\label{sec:known_info}

If the distributions of the completion times, rewards, and resource consumptions are known, the past realizations of these variables are not relevant to the assignment policy. If, further, the completion times for all agents and all tasks are identical, the assignment problem faced by the planner at each time $t$ is a version of the well-studied generalized task assignment problem which is solvable via a mixed integer linear program. With heterogeneous completion times, 
the planner has to assign any new task at time $t$ while considering the assignment $b(t)$ of the tasks  that are still under execution at round $t$ and their respective starting times denoted as $s(b(t))$. For ease of notation, we denote $h(t):=\{b(t),s(b(t))\};$ note that it is sufficient to consider assignment policies $\pi(t)$ such that the assignment at time $t$ depends only on $h(t)$.   
The optimization problem in this case can be written as
\begin{align}
\label{eq:opt_reward}    OPT := & \max_{\substack{\{\pi(1),\cdots,\pi(T)\}\\ \pi(t) \in \Pi }} \E \Big[\sum_{t=1}^T \Sigma( r(t) \odot \pi(t)(h(t))) \Big]\,,
\end{align}
where we note that the history $h(t)$ is itself a function of the previous tasks assignments from policies $\pi(1),\cdots,\pi(t-1).$

Even in the case when the distributions of the completion times, rewards, and resource consumptions are known, solving this maximization problem is computationally challenging. 
However, an upper bound on the objective $OPT$ can be proven along the line of~\cite[Proposition 2.2]{ito2024bandit} by generalizing the setting to a multi-player scenario as in Proposition~\ref{prop:r2q}. Define the `average' or `per-round' reward obtained when agent $m$ completes task $i$ as $q_{im} = \frac{\ol r_{im}}{\ol c_{im}}.$ We also collate these variables in a matrix $q \in [0,1]^{N\times M}.$ 
\begin{proposition}[{{\cite[Proposition 2.2]{ito2024bandit}}}]
\label{prop:r2q}
Consider the problem formulation from Section~\ref{sec:intro} for the case when the distributions of the completion times, rewards, and resource consumptions are known. Define the average reward obtained when agent $m$ completes task $i$ as $q_{im} = \frac{\ol r_{im}}{\ol c_{im}}$ and collate these variables in a matrix $q \in [0,1]^{N\times M}.$  For any feasible task assignment policy, we have 
$$ \E \Big[\sum_{t=1}^T \Sigma( r(t) \odot \pi(t)(h(t))) \Big] 
\leq (T+C_u) \max_{a\in \mc A} \Sigma( q \odot a).$$
\end{proposition}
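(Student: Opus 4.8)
The plan is to carry over the single‑agent argument of \cite[Proposition 2.2]{ito2024bandit}, with the one change that the single‑agent occupancy constraint is replaced throughout by membership in the feasible‑assignment set $\mc A$ (which also encodes that each task occupies at most one agent). Fix a feasible policy and write $a(t):=\pi(t)(h(t))\in\mc A(t)$. Expanding $\Sigma(r(t)\odot a(t))=\sum_{i,m} r_{im}(t)\,a_{im}(t)$ and exchanging the finite sums with the expectation, the objective equals $\sum_{i,m}\E\big[\sum_{t=1}^{T} r_{im}(t)\,a_{im}(t)\big]$. I would then group the terms according to the successive assignments of each pair $(i,m)$: let $n_{im}(T)\le T$ be the number of times task $i$ is assigned to agent $m$ during $[1,T]$, and let $r_{im}^{(k)},c_{im}^{(k)}$ be the reward and completion‑time samples drawn at the $k$‑th such assignment. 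Because $\{n_{im}(T)\ge k\}$ is determined by data revealed strictly before the $k$‑th assignment and is therefore independent of $\big(r_{im}^{(k)},c_{im}^{(k)}\big)$, Wald's identity gives $\E\big[\sum_{k=1}^{n_{im}(T)} r_{im}^{(k)}\big]=\bar r_{im}\,\E[n_{im}(T)]$ and $\E\big[\sum_{k=1}^{n_{im}(T)} c_{im}^{(k)}\big]=\bar c_{im}\,\E[n_{im}(T)]$. Using $\bar r_{im}=q_{im}\bar c_{im}$, the objective becomes $\sum_{i,m} q_{im}\,\E\big[\sum_{k=1}^{n_{im}(T)} c_{im}^{(k)}\big]$.

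Next I would convert ``number of assignments times mean duration'' into ``time occupied.'' On a fixed sample path, the intervals during which the successive copies of task $i$ are under execution by agent $m$ are pairwise disjoint (a copy must be completed and regenerated before task $i$ can be reassigned), each of length $c_{im}^{(k)}\le C_u$, and each begins at a time $s_k\le T$; hence they all lie in $\{1,\dots,T+C_u\}$. Defining the occupancy matrix $\tilde x_{im}(t):=\sum_{s\le\min(t,T)} a_{im}(s)\,\mathds{1}[s+c_{im}(s)>t]$ for $t\in\{1,\dots,T+C_u\}$ (so that $\tilde x(t)=a(t)+b(t)$ for $t\le T$), this yields the pathwise identity $\sum_{k=1}^{n_{im}(T)} c_{im}^{(k)}=\sum_{t=1}^{T+C_u}\tilde x_{im}(t)$, and therefore the objective equals $\E\big[\sum_{t=1}^{T+C_u}\Sigma(q\odot\tilde x(t))\big]$. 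The step I expect to need the most care is showing $\tilde x(t)\in\mc A$ for every such $t$. For $t\le T$ this is precisely the feasibility of the policy at round $t$, since $\tilde x(t)=a(t)+b(t)$ with $a(t)\in\mc A(t)$. For $T<t\le T+C_u$, any copy counted in $\tilde x(t)$ was assigned at some $s\le T$ and, running without interruption until after time $t$, is in particular still running at time $T$; hence $\tilde x(t)\le a(T)+b(T)$ entrywise, and since $a(T)+b(T)\in\mc A$ and $\mc A$ is closed under inclusion, $\tilde x(t)\in\mc A$.

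Finally, bounding each summand pathwise by $\Sigma(q\odot\tilde x(t))\le\max_{a\in\mc A}\Sigma(q\odot a)$ and summing over the $T+C_u$ time indices gives $\E\big[\sum_{t=1}^{T}\Sigma(r(t)\odot\pi(t)(h(t)))\big]\le(T+C_u)\max_{a\in\mc A}\Sigma(q\odot a)$, which is the claim. Relative to the single‑agent proof in \cite{ito2024bandit}, the only genuinely new ingredient is the verification that the post‑horizon residual occupancy $\tilde x(t)$ for $t>T$ still lies in $\mc A$ (which is where closedness of $\mc A$ under inclusion enters); the remainder is the Wald reduction together with the observation that per‑round occupancy is itself a feasible assignment.
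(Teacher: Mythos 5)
Your proof is correct and takes essentially the same route as the paper: the paper's own proof is a one-line reduction observing that the multi-agent setup is equivalent to a single-agent problem with $NM$ tasks and at most $\ol L$ simultaneous tasks, after which it invokes \cite[Proposition 2.2]{ito2024bandit} directly, and your argument is precisely the content of that cited proof (the Wald reduction from rewards to expected occupied time, followed by per-round occupancy accounting) transcribed into the multi-agent notation. The one ingredient you make explicit that the paper's reduction leaves implicit --- verifying that the post-horizon residual occupancy $\tilde x(t)$ for $T<t\le T+C_u$ still lies in $\mc A$ via closedness under inclusion --- is handled correctly and is exactly where the multi-agent feasibility structure enters.
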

For pedagogical ease, proofs of all our results are collected in the Appendix.

\subsection{Unknown Distributions of Task Completion Times, Rewards, and Resource Consumption}

If the distributions specifying the reward $r_{im}(t)$, the task processing time $c_{im}(t)$, and the resource consumption $f_{im}(t)$ are unknown to the planner, the past realizations of these variables become important since the planner needs to assign tasks to maximize the team reward while learning these distributions. In the online learning framework that we use, regret with respect to a benchmark algorithm with known distributions is a natural performance metric. Specifically, we compare the expected cumulative reward of our algorithm with the benchmark $\frac{1}{1+\alpha} OPT$ where $\alpha \geq 0$. The choice of $\alpha>0$ encompasses the possibility that the planner may not solve for the optimal task assignment policy due to computational complexity even if the distributions were known.  

Specifically, following the concept of $\frac{1}{1+\alpha}$-regret defined in the literature on blocking bandits~\cite{basu2019blocking} and $\left(\frac{1}{1+\alpha}, 1\right)$-regret defined in the literature on combinatorial bandits~\cite{cesa2012combinatorial}, we evaluate the performance of our learning algorithm through the \emph{$\alpha$-sub-optimal regret} $R^\alpha_T$ defined 
as 
\begin{equation}
\label{eq:regret_subopt}    R^{\alpha}_T =  \frac{1}{1+\alpha}  OPT - E_{T}, 
\end{equation}
where the expected accumulated reward by the algorithm $E_{T}$ is defined in~(\ref{eq:expected_accumulated_reward}). Following the discussion above, when $\alpha=0,$ the regret $R^0_T$ is the \emph{exact regret} in which the planner solves the optimization problem $OPT$ exactly both in the benchmark and at every stage of our bandit algorithm. When $\alpha>0$, the regret $R^\alpha_T$ is the \emph{approximate regret} where the planner solves the optimization problems sub-optimally in both cases. Clearly, this more general formulation allows an apples to apples comparison of our algorithm with an appropriate benchmark. Apart from the regret, the violation penalty $V_{T}$ of any algorithm, as defined in~(\ref{eq:violation_penalty}) also serves as a performance measure.




%
\section{Task-Assignment Algorithm}
\label{sec:bandit_alg}
In this section, we propose a learning algorithm for the problem posed above. Calculating the regret $R_{T}^{\alpha}$ analytically seems computationally intractable. Instead, we upper bound the regret as follows.  
\begin{align}
\nonumber  R_T^{\alpha} &\leq \frac{1}{\alpha+1} (T+C_u) \max_{a\in \mc A} \Sigma(q \odot a) \\
\nonumber  & \qquad - \E\Big[\sum_{t=1}^T \Sigma(r(t)  \odot a(t)) \mathds{1}\{a_t+b_t\in \mc A \}\Big] \\
\nonumber &\leq   \frac{C_u \ol{L}}{(\alpha+1)C_l} + \E\Big[\sum_{t=1}^T  \Sigma \Big(\frac{1}{\alpha+1} q \odot a^*\Big) \\& -\Sigma \Big(r(t)\odot  a(t) \Big)\mathds{1}\{a(t)+b(t)\in \mc A \} \Big]
\label{eq:R_bound_q}\,,
\end{align}
where the first inequality follows from Proposition~\ref{prop:r2q} and in the second inequality, we have defined  $a^*$ as an optimal `per-round' task assignment $a^* \in \argmax_{a \in \mc A} \Sigma(a \odot q)$ and the maximum number of tasks the team can execute simultaneously as $\ol{L}=\max_{a\in \mc A} \sum_{(i,m) \in I_\mc A} a_{im}$ (which we assume is upper bounded by $N$). 

Based on~\eqref{eq:R_bound_q}, our learning algorithm (Algorithm~\ref{alg:bandit}) proceeds by estimating $q$ and $f$. The core idea is to implement a phased-update method combined with a UCB-based and a LCB-based policy update. We divide the time horizon into phases that may possibly be of unequal length. The algorithm updates the task assignment policy at the beginning of each phase, but utilizes the same policy for the rest of the phase.
\begin{algorithm}
\caption{Bandit Algorithm for Task Assigning}\label{alg:bandit}
\begin{algorithmic}[1]
\Require $C_l,C_u$: lower and upper bounds for the
processing time. $B$: length of initialization phase. 
\State Initialization: For all $i \in I_\mc V$ and $m \in I_\mc M$, let member $m$ execute the task $i$ for $B$ times. Record the completing round as $t_1$, and set $T_{im}(t_1) = B$, and $T^f_{im}(t_1)$ as the total rounds for member $m$ to complete the $B$ times of task $i$. \label{line:init} 
\For{$s = 1,2, \cdots, S $}     \label{line:s_start}
    \State Compute $\hat{q}_{im}(t_s)$ using~\eqref{eq:UCB_q} and $d^f_{am}(t_s)$ using~\eqref{eq:LCB_f}.   \label{line:set_q}
        \State Compute $l_s = C_l \min_{(i,m) \in I_{\mc A}} T_{im}(t_s)+ 2C_u$, and set $t_{s+1} = t_s + l_s$.      \label{line:set_length}
    \State $\mc A^\dagger_{s} = \{a\in \mc A^0 \mid \sum_{i \in I_\mc V} \hat{f}_{im}(t_s) a_{im} - d^f_{am}(t) \leq L_m \quad \forall m \}$. \label{line:A_est}
    \If{$\mc A^\dagger_s \neq \emptyset$}
        \State Compute \begin{equation}
            \label{eq:UCB_opt}
            a'_s \in \argmax_{a \in \mc A^\dagger_s} \Sigma(\hat{q}(t_s) \odot a)
            \end{equation} through a given oracle A.
        \label{line:act_select}
    \Else
        \State Compute \begin{equation}
            \label{eq:LCB_infeasible}
            a'_s \in \argmin_{a \in {\mc A}^0}\sum_{m} \max \{ \sum_{i} \hat{f}_{im}(t_s) a_{im} - d^f_{am}(t) ,0 \}
            \end{equation}
    \EndIf
    \For{$t = t_s, t_s+1, \cdots, t_{s+1}-1$}
        \State Compute $b_t$ using \eqref{eq:available_task}.    \label{line:act_avail}
        \State If $b(t) \leq a'_s$, then $a(t) \leftarrow a'_s-b(t)$; else $a(t) \leftarrow \{0\}^{N\times M}$.  \label{line:set_act_start}
        \State If member $m$ executes task $i$ at round $t$, we observe the corresponding resource occupation $f_{t,im}$ and update $\hat{f}_{im}(t)$. Set $T^f_{im}(t+1) = T^f_{im}(t)+1$. \label{line:update_f}
        \For{$i = 1,\cdots,N$}
            \State If member $m$ completes task $i$ at round t and we observe the corresponding reward completion time $r_{t',im}, c_{t',im}$, where $t'$ is the task starting round, update $\hat{r}_{im}(t)$, $\hat{c}_{im}(t)$, $V^c_{im}(t)$. Set $T_{im}(t+1) = T_{im}(t)+1$. \label{line:update}
        \EndFor
    \EndFor
\EndFor  \label{line:s_end}
\end{algorithmic}
\end{algorithm}

Specifically, the first phase (Phase $0$, given in Line~\ref{line:init}) of Algorithm~\ref{alg:bandit} serves as the initialization phase. The task assignment policy in this phase is that each member $m$ completes each task $i$ for $B$ times (where we let $B = \left\lceil 90 \frac{C_u}{C_l} \ln T \right\rceil$ and $T > MNBC_u$). 
We can then upper bound the total regret in phase $0$ by $\mc O(\frac{C_uMNB}{C_l}) = \mc O(\frac{C^2_u}{C^2_l}MN \ln T)$.

In every subsequent phase $s$ prior to the phase that contains $T$, 
the planner follows lines~\ref{line:s_start}--\ref{line:s_end}. We define $S$ as the phase that includes round $T$. We further define $T_{im}(t)$ as the times that agent $m$ completes task $i$ before time $t$, $\hat r_{im}(t)$ and $\hat c_{im}(t)$ as the empirical means of corresponding reward and cost, and $V^c_{t,im}$ as the empirical variance of the corresponding cost. Let $t_s = \sum_{k=1}^{s-1} l_k$ be the time at which phase $s$ begins. 
At the beginning of each phase, we compute the length $l_{s}$ of that phase as 
\begin{equation}
    \label{eq:ls}
    l_s = C_l \min_{(i,m) \in I_{\mc A}} T_{im}(t_s)+ 2C_u.
\end{equation} 
We maintain a UCB-type estimate, $\hat{q}_{im} (t_s)$, of $q_{im}$ 
through~\eqref{eq:UCB_q} below, where 
\begin{equation}
\label{eq:UCB_q}    
\hat q_{im}(t)  = \frac{\min\{1, \hat r_{im}(t)+d^r_{im}(t)\}}{\max\{C_l, \hat c_{im}(t)-d^c_{im}(t)\}},
\end{equation}
where $d^r_{im}(t) = \sqrt{1.5 \ln t /T_{im}(t)}$ and $d^c_{im}(t) = \sqrt{3 V^c_{t,im} \ln t /T_{im}(t)} + 9(C_u-C_l)\ln t/T_{im}(t)$.

Further, we let $\hat{f}_{im}(t)$ be the empirical mean of the resource occupation ${f}_{im}$ at round $t$. We assume that we obtain the information of the resource occupation, $f_{im}(t)$ for any member $m$ executing task $i$ at time $t$, and thus we let $T^f_{im}(t)$ be the number of times that $f_{im}$ gets estimated before round $t$, i.e., the number of rounds that the member $m$ has been executing the task $i$. 
We then set the lower confidence bound (LCB) of the expected resource required for an assignment $a$ at round $t$ as $\sum_{i \in I_\mc V} \hat{f}_{im}(t_s) a_{im} - d^f_{am}(t)$, where
\begin{equation}
\label{eq:LCB_f} 
 d^f_{am}(t)  = \sqrt{\frac{1.5\bar{L}^2 \ln t }{\min_{i:a_{im}=1}{T^f_{im}(t)}}}. 
\end{equation}
We define $d^f_{im}(t)  = \sqrt{\frac{1.5 \ln t }{T^f_{im}(t)}},$ and thus can alternatively write $d^f_{am}(t) = \bar{L}\max_{i:a_{im}=1} d^f_{im}(t)$.
We then estimate the set of feasible tasks in line~\ref{line:A_est} with the LCB estimate of resource occupation.

Having updated these estimates, we then calculate our task assignment policy $a_s'$ for phase $s$ using line~\ref{line:act_select}. This step requires solving the linear optimization problem~\eqref{eq:UCB_opt} via some given oracle A.
During phase $s$, the members will follow the task assignment in line~\ref{line:act_select} after completing all tasks that start in phase $s-1$. Note that each member $m$ repeats the task $i$ such that $a'_{s,im} = 1$ once it is completed.  



\begin{remark}[Oracle A]
    Algorithm~\ref{alg:bandit} is agnostic to the optimization solving oracle A used to solve the problem in Line~\ref{line:act_select}, whether the oracle solves the problem exactly or approximately. The regret guarantees of the algorithm may be different for these cases and we will analyze them under two sets of assumptions. The ability to consider an approximate solution to~\eqref{eq:UCB_opt} is particularly useful since~\eqref{eq:UCB_opt} is equivalent to a mixed-integer optimization problem (MIP) equivalent to a generalized assignment problem (GAP), which is known to be NP-hard in general and computationally efficient algorithms to determine the optimal solution may not exist. There are some algorithms and commercial solvers to determine a sub-optimal solution. For instance, the algorithm in~\cite{cohen2006efficient} guarantees an \emph{$(1+\alpha)$-approximate} solution $a'$, with the guarantee that 
    $$\Sigma(\hat{q}(t_s) \odot a') \geq \frac{1}{1+\alpha} \Sigma(\hat{q}(t_s) \odot \hat{a}^*),$$
    where $\hat{a}^*$ is an optimal solution to~\eqref{eq:UCB_opt}, and the MIP solver, Gurobi~\cite{gurobi}, guarantees an \emph{$(1+\alpha)$-approximate} solution by setting a proper MIP gap. 
\end{remark}
We call Algorithm~\ref{alg:bandit} an \emph{exact algorithm} when the oracle A in Line~\ref{line:act_select} will solve the optimization problem~\eqref{eq:UCB_opt} exactly. To obtain regret guarantees, we also consider the case when oracle A only guarantees an $(1+\alpha)$-approximate solution to~\eqref{eq:UCB_opt}, such as the one in~\cite{cohen2006efficient}. In this case, we call Algorithm~\ref{alg:bandit} an \emph{$(1+\alpha)$-approximate algorithm} or \emph{approximate algorithm}.

\section{Theoretical results}
\label{sec:thm}
In this section, we characterize the regret of the exact and approximate algorithms proposed above. Some parts of our proof techniques follow those in~\cite{ito2024bandit} and~\cite{kagrecha2023constrained}. However, given that~\cite{ito2024bandit} focuses on single-agent task assignment problems with the premise that Algorithm~\ref{alg:bandit} is an exact algorithm and the resource constraint is non-stochastic, and~\cite{kagrecha2023constrained} neither considers stochastic completion times nor a combinatorial setting, our derivation of results remains challenging in the following three aspects. First, when the Algorithm~\ref{alg:bandit} is an exact algorithm, we need to readapt the results in~\cite{ito2024bandit} into a heterogeneous multi-agent setup to provide an upper bound for the sub-optimal regret.
Second, when the Algorithm~\ref{alg:bandit} is an $(1+\alpha)$-approximate algorithm, we need to derive the approximate regret analysis with different techniques, as the task assignments at all rounds are based on that we only obtain an approximate solution to~\eqref{eq:UCB_opt}.
Third, we need to consider the stochastic resource constraints in both analysis of sub-optimal regrets and violation regrets, as even if we assign the tasks that have high per-unit rewards, that assignment may be infeasible, as sometimes $\mc A^\dagger_s \neq \mc A$.

As we set the initialization task completion times as $B = \left\lceil 90 \frac{C_u}{C_l} \ln T \right\rceil$, we have the properties~\eqref{eq:basic_time}--\eqref{eq:berns_ineq} for all $(i,m) \in I_\mc A$, and Lemmas~\ref{lem:phase_length}--\ref{lem:phase_num} below to derive the later analysis of regret bounds.
\begin{equation}
\label{eq:basic_time}
\begin{split}
   & T_{im}(t_s) \geq 90  \frac{C_u}{C_l} \ln T, \quad  l_s \geq 90 C_u \ln T \\
   & 90  C_u \ln T \leq t_1 = \mc O \Big(\frac{C^2_u N M \ln T}{C_l}  \Big)\,.
\end{split}
\end{equation}
\begin{equation}
\label{eq:berns_ineq}
\begin{split}
    \Pr[|\hat r_{im}(t)-\ol{r}_{im}| \geq d^r_{im}(t)] & \leq \frac{2}{t^2}\\
    \Pr[|\hat c_{im}(t)-\ol{c}_{im}| \geq d^c_{im}(t)] & \leq \frac{4}{t^2}\\
    \Pr[q_{im}\leq \hat q_{im}(t)] & \leq 1- \frac{6}{t^2} \,.
\end{split}
\end{equation}

\begin{lemma}[{{\cite[Lemma 3.1]{ito2024bandit}}}]
\label{lem:phase_length}
    For any $s \geq 1$ and $(i,m) \in A'_s$, the phase length $l_s \leq 2 C_u (T_{im}(t_s+1) - T_{im}(t_s))$ and the accumulated execution times $T_{im}(t_{s+1}) \leq 4T_{im}(t_s)$.
\end{lemma}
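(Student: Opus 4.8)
The plan is to adapt the single‑agent scheduling argument of \cite[Lemma~3.1]{ito2024bandit} to the assignment matrix $b(t)$ used here, reading the first bound as $l_s\le 2C_u\big(T_{im}(t_{s+1})-T_{im}(t_s)\big)$ and ``$(i,m)\in A'_s$'' as $a'_{s,im}=1$. Set $\Delta_s:=T_{im}(t_{s+1})-T_{im}(t_s)$, the number of completions of task $i$ by agent $m$ during phase $s$; both assertions reduce to two‑sided control of $\Delta_s$ by $l_s$. The key preliminary step is a \emph{schedule claim}: there is a round $\tau_s$ with $t_s\le\tau_s\le t_s+C_u$ such that agent $m$ is executing an instance of task $i$ at every round of $[\tau_s,t_{s+1})$. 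I would prove this from three facts about Algorithm~\ref{alg:bandit}: (i) any task uncompleted at round $t_s$ was started earlier and, since every completion time is at most $C_u$, finishes within $C_u$ rounds; (ii) by Lines~\ref{line:act_avail}--\ref{line:set_act_start}, either $b(t_s)\le a'_s$ and $a'_s$ is already in force at $t_s$, or the planner assigns nothing until the incompatible leftover tasks have cleared (at most $C_u$ rounds, by (i)) and then installs $a'_s$; (iii) once the running assignment lies inside the fixed matrix $a'_s$ it stays there, because each time an instance of task $i$ finishes we have $(i,m)\notin b(t)$ and $a(t)=a'_s-b(t)$ restarts task $i$ at $m$ --- here closure of $\mc A$ (and of $\mc A^0$, when $\mc A^\dagger_s=\emptyset$) under inclusion certifies that every intermediate $a(t)=a'_s-b(t)$ is an admissible assignment. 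Hence task $i$ runs without interruption on $[\tau_s,t_{s+1})$.

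Given the schedule claim, the first bound is a counting estimate: over the $\ge l_s-C_u$ rounds of uninterrupted execution, the $j$‑th completed instance of task $i$ lands within $jC_u$ rounds of $\tau_s$, so at least $\lfloor (l_s-C_u)/C_u\rfloor\ge l_s/C_u-2$ instances complete inside the phase, i.e.\ $\Delta_s\ge l_s/C_u-2$; since~\eqref{eq:basic_time} gives $l_s\ge 90C_u\ln T+2C_u\ge 4C_u$, this yields $\Delta_s\ge l_s/(2C_u)$, equivalently $l_s\le 2C_u\Delta_s$. For the second bound I would estimate $\Delta_s$ from above: consecutive completions of task $i$ by $m$ are at least $C_l$ rounds apart because each instance, once started, lasts at least $C_l$ rounds, so within the $l_s$‑round phase at most $l_s/C_l+1$ of them occur, i.e.\ $\Delta_s\le l_s/C_l+1$. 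Substituting $l_s=C_l\min_{(i',m')}T_{i'm'}(t_s)+2C_u\le C_l T_{im}(t_s)+2C_u$ (the minimum in~\eqref{eq:ls} ranging over agent--task pairs that includes $(i,m)$, as $a'_{s,im}=1$) gives $T_{im}(t_{s+1})=T_{im}(t_s)+\Delta_s\le 2T_{im}(t_s)+2C_u/C_l+1$, which is at most $4T_{im}(t_s)$ since $2C_u/C_l+1\le 2T_{im}(t_s)$ by the bound $T_{im}(t_s)\ge B$ from~\eqref{eq:basic_time}.

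I expect the schedule claim to be the only genuine obstacle: in the single‑agent world of \cite{ito2024bandit} the worker is never idle and the analogue is immediate, whereas here one must argue about the whole (possibly multi‑task) assignment matrix and show that the $b(t)\le a'_s$ test in Lines~\ref{line:act_avail}--\ref{line:set_act_start} neither leaves agent $m$ idle nor stuck on non‑$i$ work for more than $C_u$ rounds after a phase change, and that the regenerate‑and‑reassign loop then keeps task $i$ running gap‑free; the closed‑under‑inclusion property of $\mc A$ and $\mc A^0$ is exactly what makes the intermediate assignments legal. Everything after that is elementary interval/integer counting, where I would keep an $O(1)$ slack --- which is precisely what the $2C_u$ buffer in $l_s$ and the constant $90$ in $B$ are there to absorb.
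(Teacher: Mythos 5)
Your proof is correct, but it takes a genuinely different route from the paper's. The paper disposes of Lemma~\ref{lem:phase_length} in one line (Appendix~\ref{sec:proof_r2q}): it asserts that the multi-agent setup with $N$ tasks and $M$ members is equivalent to a single-agent problem with $NM$ tasks and capacity $\ol L$, and then imports \cite[Lemma 3.1]{ito2024bandit} wholesale. You instead reconstruct the argument from scratch, and the one piece of genuine content you identify --- the schedule claim that after at most $C_u$ rounds of clearing leftover work the test $b(t)\le a'_s$ in Lines~\ref{line:act_avail}--\ref{line:set_act_start} keeps agent $m$ executing task $i$ gap-free for the rest of the phase --- is exactly the point at which the paper's ``equivalence'' assertion is doing hidden work, since it is where the multi-agent bookkeeping (incompatible residual assignments from phase $s-1$, the all-or-nothing reassignment rule, regeneration of completed tasks) could in principle differ from the single-agent setting of \cite{ito2024bandit}. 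Your counting steps are sound: the lower bound $\Delta_s\ge l_s/C_u-2$ combined with $l_s\ge 4C_u$ (which follows from $l_s\ge C_lB+2C_u$ and the choice of $B$ in~\eqref{eq:basic_time}) gives $l_s\le 2C_u\Delta_s$, and the upper bound $\Delta_s\le l_s/C_l+1$ together with $l_s\le C_lT_{im}(t_s)+2C_u$ and $T_{im}(t_s)\ge B\ge 90C_u/C_l$ gives $T_{im}(t_{s+1})\le 4T_{im}(t_s)$. The trade-off is clear: the paper's reduction is shorter and reuses the cited proof verbatim, while your version is self-contained and actually certifies that Algorithm~\ref{alg:bandit}'s assignment dynamics support the claim rather than assuming the reduction is faithful.
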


\begin{lemma}[{{\cite[Lemma 3.2]{ito2024bandit}}}]
\label{lem:phase_num}
    For any $s \geq 1$, there exists $(i,m) \in A'_s$, such that $T_{im}(t_{s+1}) \geq (1+\frac{C_l}{C_u})T_{im}(t_s)$. As a result, for all $s \geq 1$, $t_s \geq C_l (1+\frac{C_l}{C_u})^{\frac{s}{NM}-2}$.
\end{lemma}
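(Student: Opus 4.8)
The two assertions will be established in order. For the first, the idea is a ``doubling'' argument built into the choice of the phase length $l_s$ in Line~\ref{line:set_length}. Within phase $s$ each task--agent pair $(i,m)\in A'_s$ is executed repeatedly: agent $m$ restarts task $i$ as soon as it completes, and every execution occupies at most $C_u$ consecutive rounds. Hence, after discarding the $O(C_u)$ rounds lost while the still-running phase-$(s-1)$ tasks drain before $a'_s$ takes effect (Line~\ref{line:set_act_start}) and the single execution possibly left unfinished at the end of the phase, the number of fresh completions $T_{im}(t_{s+1})-T_{im}(t_s)$ of $(i,m)$ during phase $s$ is at least $(l_s-2C_u)/C_u$, which is the within-phase count underlying Lemma~\ref{lem:phase_length}. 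Substituting $l_s = C_l\min_{(j,n)}T_{jn}(t_s)+2C_u$ gives, for every $(i,m)\in A'_s$, $T_{im}(t_{s+1})-T_{im}(t_s)\ge \tfrac{C_l}{C_u}\min_{(j,n)}T_{jn}(t_s)$. Taking $(i,m)$ to be a pair active in phase $s$ whose completion count $T_{im}(t_s)$ realizes (or is of the same order as) the minimum appearing in $l_s$, this increment is at least $\tfrac{C_l}{C_u}T_{im}(t_s)$, i.e.\ $T_{im}(t_{s+1})\ge(1+\tfrac{C_l}{C_u})T_{im}(t_s)$, which is the first assertion.

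For the second assertion I would iterate the first with a pigeonhole argument over task--agent pairs. Fix $s$ and, for each phase $1\le s'\le s-1$, pick a pair $p_{s'}\in A'_{s'}$ as furnished above, so that $T_{p_{s'}}(t_{s'+1})\ge(1+\tfrac{C_l}{C_u})T_{p_{s'}}(t_{s'})$. Since there are at most $NM$ task--agent pairs, some fixed $p^{\star}$ occurs as $p_{s'}$ for at least $(s-1)/(NM)$ of these indices; along each such phase the completion count of $p^{\star}$ is multiplied by $1+\tfrac{C_l}{C_u}$, and completion counts are nondecreasing in between, so $T_{p^{\star}}(t_s)\ge(1+\tfrac{C_l}{C_u})^{\,s/(NM)-1}$ once the initialization value $T_{p^{\star}}(t_1)=B\ge1$ and the pigeonhole ceiling are accounted for. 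Finally, the completions of the single pair $p^{\star}$ occur sequentially, each consuming at least $C_l$ rounds, so $t_s\ge C_l\,T_{p^{\star}}(t_s)$, whence $t_s\ge C_l(1+\tfrac{C_l}{C_u})^{\,s/(NM)-2}$, the extra unit in the exponent serving as slack for these roundings.

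The main obstacle is the identification step inside the first assertion. Lemma~\ref{lem:phase_length} and the within-phase count above only control completions for pairs in $A'_s$, whereas the quantity inflating $l_s$ is a minimum of $T_{jn}(t_s)$ over a possibly strictly larger index set; one must therefore show that a minimizing pair is in fact active in phase $s$, or that the least-explored pair within $A'_s$ has a completion count of the same order. The lever is that $a'_s$ maximizes $\Sigma(\hat q(t_s)\odot a)$ over $\mc A^{\dagger}_s$ with every entry of $\hat q(t_s)$ strictly positive, so $a'_s$ is a \emph{maximal} feasible assignment in $\mc A^{\dagger}_s$; combined with the inflated UCB value $\hat q_{im}(t_s)$ carried by a rarely-tried pair and the closure-under-inclusion structure of the feasibility constraints, one argues that a least-explored pair can be adjoined to $a'_s$ unless already present, handling the edge cases (an empty $\mc A^{\dagger}_s$, or the task already tied to another agent) separately. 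Once this is in place, keeping the additive $2C_u$ in $l_s$, the phase-transition delay, and the requirement that the exponential base not drop below $C_l/C_u$ under control is routine bookkeeping.
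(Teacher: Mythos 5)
Your pigeonhole derivation of the second assertion from the first, together with the observation that the $T_{p^\star}(t_s)$ completions of a single pair occupy at least $C_l T_{p^\star}(t_s)$ non-overlapping rounds (so $t_s\geq C_l T_{p^\star}(t_s)$), is correct, and it matches the mechanics of the cited result; note that the paper itself offers no argument for this lemma beyond reducing the multi-agent setup to a single agent with $NM$ tasks and invoking \cite[Lemma 3.2]{ito2024bandit}. The problem is the identification step you flag yourself, and the lever you propose does not close it. From $l_s = C_l\min_{(j,n)\in I_{\mc A}}T_{jn}(t_s)+2C_u$ you obtain, for every $(i,m)\in A'_s$, an increment of at least $\frac{C_l}{C_u}\min_{(j,n)\in I_{\mc A}}T_{jn}(t_s)$; this yields the multiplicative factor $(1+\frac{C_l}{C_u})$ only for a pair whose own count attains that minimum, and such a pair need not lie in $A'_s$. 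Maximality of $a'_s$ in $\mc A^\dagger_s$ only says the least-explored pair cannot be \emph{added} to $a'_s$ --- for instance its task may already be assigned to a different agent under condition 2 of Definition~\ref{def:feasible_A}, or the agent's capacity may be saturated --- it does not force that pair to be \emph{in} $a'_s$. Nor does UCB inflation rescue the argument: $\hat q_{im}(t)$ is capped at $1/C_l$ by construction of~\eqref{eq:UCB_q}, so a rarely-tried feasible pair can permanently lose to other combinations and keep its count at $B$ forever; in that scenario $l_s$ stays bounded, the counts of the pairs actually in $A'_s$ grow only additively, and no pair in $A'_s$ satisfies the claimed multiplicative growth.

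The first assertion is immediate, and your first paragraph becomes a complete proof, under the reading consistent with \cite{ito2024bandit}: the minimum defining $l_s$ should range over the pairs of the selected assignment, i.e.\ $\min_{(i,m):\,a'_{s,im}=1}T_{im}(t_s)$, which is also why the lemma is phrased over $A'_s$. With that reading the minimizer belongs to $A'_s$ by definition and the increment bound applies to it directly; the rest of your argument then goes through. As written, however, your attempt to establish the first assertion for the literal $\min_{(i,m)\in I_{\mc A}}$ cannot succeed, so you should either adopt the restricted minimum explicitly or import the lemma wholesale from \cite{ito2024bandit} as the paper does rather than re-derive it.
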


Lemma~\ref{lem:phase_num} and that $\ln (1+x) \geq x/2$ for any $x\in [0,1]$ infer that $\ln t_s \geq \ln C_l +  (\frac{s}{NM}-2) \ln (1+\frac{C_l}{C_u}) > \ln C_l + (\frac{s}{NM}-2)\frac{C_l}{2C_u}$. And hence 
\begin{equation}
    \label{eq:S_bound}
    S \leq NM \Big(\frac{2C_u}{C_l} \ln T +2 \Big)+1 = \mc O \Big(\frac{C_u}{C_l}NM\ln T \Big) \,.
\end{equation}

Further, we define an optimal per-round task assignment and its corresponding index set as $A^* = \{(i,m)\in I_\mc A \mid a^*_{im}=1 \}$.
Further, we define the sub-optimality gaps as
\begin{equation}
\label{eq:gap} 
\begin{split}
   & \Delta^\alpha_{a} = \frac{1}{1+\alpha} \sum_{a^*_{im}=1} q_{im} - \sum_{a_{im}=1} q_{im} \quad \forall a\in \mc A\\
   & \Delta^\alpha_{im} = \min_{a\in \mc A\mid a_{im}=1,  \Delta^\alpha_{a} > 0} \Delta^\alpha_a \quad \forall (m,i) \in I_\mc A\\
   & \ul \Delta^\alpha  = \begin{cases}
       \min_{(m,i)\in I_\mc A \backslash A^*} \Delta^0_{im}  \quad \text{ if } \alpha = 0 \\ 
       \min_{(m,i)\in I_\mc A} \Delta^\alpha_{im}  \quad \text{ if } \alpha >0 \,.
   \end{cases}
\end{split}
\end{equation} 
For all $m$ and $a$, we further define the violation gap as below:
\begin{equation}
\begin{split}
\label{eq:inf_gap}
    &\Delta^L_{am} = \max\{\sum_{i}\bar f_{im}(t)a_{im}-L_m, 0 \},\\
    &\Delta^L_{im} = \max\{\bar f_{im}(t)-L_m, 0 \},
    \Delta^L_{a} = \sum_m \Delta^L_{am}.
\end{split}
\end{equation}

Finally, we denote the variance of the processing time $c_{t,im}$ as $\sigma^2_{im} = \Expect[(c_{t,im}-\bar{c}_{im})^2].$ We note that 
\begin{align*}  
\nonumber    \sigma^2_{im} \leq \Expect[(c_{im}(t)-C_l)^2] &\leq (C_u - C_l)\Expect[c_{im}(t)-C_l] \\
 \label{eq:var_c}   & = (C_u - C_l)(\bar{c}_{im}-C_l) \,.
\end{align*}

We first show the performance of the feasible-set estimation in Theorem~\ref{thm:inf_regret_stc_f} below.
\begin{theorem}
\label{thm:inf_regret_stc_f}
If $T > MNBC_u$, the violation penalty $V_T$ defined in~\eqref{eq:violation_penalty} for Algorithm~\ref{alg:bandit} is bounded above by\footnote{Big O notation describes an upper bound on the growth rate of a function, and big Omega notation provides a lower bound.}
\begin{equation}  
\label{eq:inf_bound}
\begin{split}
         V_T & \leq \mc O\Big( \frac{C^2_u}{C_l}\ln T + \frac{\bar{L}^2 \ln T}{(\max_{m}\Delta^L_{am})^2} + \bar{L} \max_{a\in \mc A^c}\Delta^L_{a} \Big) \,.
\end{split}
\end{equation}
\end{theorem}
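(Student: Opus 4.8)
\textbf{Proof proposal for Theorem~\ref{thm:inf_regret_stc_f}.}

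The plan is to decompose the violation penalty $V_T$ over phases and, within each phase, to separate the contribution of phase $0$ (initialization) from that of the later phases. For phase $0$, each agent executes each task exactly $B$ times, and each such execution occupies at most $C_u$ rounds; hence the total number of rounds in phase $0$ is $\mc O(MNBC_u)$ and, since $B = \mc O(\tfrac{C_u}{C_l}\ln T)$, the per-round violation (bounded by $\bar L$ trivially, but more carefully by $\sum_m \Delta^L_{am} \le \bar L \max_m \Delta^L_{am}$ along the actual initialization assignments) contributes the $\mc O(\tfrac{C_u^2}{C_l}\ln T)$ term. This matches the first term in~\eqref{eq:inf_bound} up to the $\bar L$-dependence which I would absorb into constants or track explicitly.

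For the later phases $s \ge 1$, the key observation is that the assignment $a'_s$ used in phase $s$ is chosen from $\mc A^\dagger_s$ (the LCB-feasible set) when that set is nonempty, and otherwise from the penalty-minimizing rule~\eqref{eq:LCB_infeasible}. I would split into the corresponding two cases. In the first case ($\mc A^\dagger_s \ne \emptyset$), a violation along $a'_s$ can only occur because the LCB underestimates the true resource consumption, i.e.\ because the event $\{\sum_i \bar f_{im} a'_{s,im} > \sum_i \hat f_{im}(t_s) a'_{s,im} - d^f_{a'_s m}(t_s)\}$ holds for some $m$. By a Hoeffding/Bernstein bound on $\hat f_{im}(t_s)$ — analogous to~\eqref{eq:berns_ineq} but for the resource samples, using that $d^f_{im}(t) = \sqrt{1.5 \ln t / T^f_{im}(t)}$ — this ``bad'' event has probability $\mc O(1/t_s^2)$ for each $(i,m)$, and the associated per-round penalty is at most $\bar L$. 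Crucially, the ``width'' $d^f_{a'_s m}$ is a \emph{maximum} of individual terms scaled by $\bar L$, so that once $\min_{i:a'_{s,im}=1} T^f_{im}(t_s)$ is large enough — specifically once it exceeds $\mc O(\bar L^2 \ln t_s / (\max_m \Delta^L_{a'_s m})^2)$ — the confidence width shrinks below the violation gap and $a'_s$ can no longer be a violating assignment without triggering the bad event. Summing the $\mc O(1/t_s^2)$ bad-event contributions over all phases (finitely many, $S = \mc O(\tfrac{C_u}{C_l}NM\ln T)$ by~\eqref{eq:S_bound}, each of length $l_s \le 2C_u T_{im}(t_s+1)$ by Lemma~\ref{lem:phase_length}) and accounting for the ``learning'' phases where $T^f_{im}$ is still below the threshold yields the $\mc O(\bar L^2 \ln T / (\max_m \Delta^L_{am})^2)$ term. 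In the second case ($\mc A^\dagger_s = \emptyset$), the true feasible set restricted to violating-but-minimal structure forces any assignment — including the optimal penalty-minimizing one — to incur at least $\min_{a} \Delta^L_a$; but the rule~\eqref{eq:LCB_infeasible} picks (up to LCB error) the assignment minimizing the violation, so the incurred penalty is at most $\max_{a \in \mc A^c} \Delta^L_a$ plus the same $\mc O(1/t_s^2)$-type confidence error, and this situation can only persist while the LCB is loose, giving the $\bar L \max_{a\in\mc A^c}\Delta^L_a$ term (the $\bar L$ factor coming from the number of phases / rounds over which $\mc A^\dagger_s$ can be erroneously empty).

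I would organize the summation carefully: let $\mc E$ be the event that all confidence bounds hold at all relevant rounds; on $\mc E^c$, bound $V_T$ contribution by $\bar L \sum_t \Pr[\mc E^c_t] \cdot T = \mc O(\bar L N M)$ using $\sum_t 1/t^2 < \infty$, which is lower order; on $\mc E$, carry out the per-phase deterministic analysis above. For each agent $m$ and phase $s$ on $\mc E$, either $T^f_{im}(t_s)$ is below the threshold $\Theta(\bar L^2 \ln t_s/(\max_m \Delta^L_{am})^2)$ for some $i$ with $a'_{s,im}=1$ — and the number of rounds spent in such ``under-explored'' phases is controlled via Lemma~\ref{lem:phase_length} (doubling of $T_{im}$) to be $\mc O(\bar L^2 \ln T/(\max_m\Delta^L_{am})^2)$ — or else no violation occurs (case 1) or the residual $\max_{a\in\mc A^c}\Delta^L_a$ applies (case 2).

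\textbf{Main obstacle.} The delicate part is the handling of the LCB width $d^f_{am}(t) = \bar L \max_{i:a_{im}=1} d^f_{im}(t)$ being a \emph{maximum} rather than a sum. Unlike standard constrained MAB where the confidence term for a bundle is additive over items (so exploration of each item contributes separately), here the binding quantity is $\min_{i:a_{im}=1} T^f_{im}(t)$, which means a single under-explored task in a bundle can keep the whole confidence width large. The argument that the relevant $T^f_{im}$ nonetheless grows — hence the width shrinks below $\max_m \Delta^L_{am}$ after $\mc O(\bar L^2 \ln T/(\max_m\Delta^L_{am})^2)$ rounds — needs Lemma~\ref{lem:phase_length} applied to the specific $(i,m)$ that is the ``argmin'', together with the fact that a task assigned in phase $s$ is re-executed throughout the phase so $T^f_{im}$ accrues at rate $\Omega(1/C_u)$ per round. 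Getting the bookkeeping right so that the $\bar L^2$ and the gap $(\max_m \Delta^L_{am})^2$ land in exactly the claimed form, and confirming that the coupling between the ``empty $\mc A^\dagger_s$'' case and the ``violating $a'_s$'' case does not produce cross terms, is where I expect the real work to lie.
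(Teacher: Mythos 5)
Your proposal is correct and follows essentially the same route as the paper's proof: the same three-way decomposition (initialization phase, violating assignments that pass the LCB test, and phases where $\mc A^\dagger_s=\emptyset$), the same key step that on the concentration event a violating $a\in\mc A^\dagger_s$ forces $2d^f_{am}(t_s)\geq\Delta^L_{am}$ and hence $\min_{i:a_{im}=1}T^f_{im}(t_s)\leq 6\bar L^2\ln t_s/(\Delta^L_{am})^2$, and the same $\mc O(\bar L/t_s^2)$ bad-event bounds summed via $\sum_s l_s/t_s^2\leq 5/t_1$. The only cosmetic difference is that the paper obtains the $\bar L$ in the last term from the union bound over the at most $\bar L$ active $(i,m)$ pairs in $\Pr[a\notin\mc A^\dagger_s]$ rather than from counting phases, and it closes your ``main obstacle'' simply by noting $T^f_a(t)\leq T^f_{im}(t)$ for every $(i,m)$ with $a_{im}=1$.
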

On the other hand, we adopt~\cite[Theorem 6]{kagrecha2023constrained} to provide a lower bound for the violation penalty. For any joint distribution of rewards, completion times, and resource consumption for all tasks and agents, denoted as $D$, in $\mc D$, for each task assignment $a$ that is not optimal (either sub-optimal or infeasible), we first define that the rewards, completion times, and resource consumption for all tasks that are drawn from the distribution $D$ as $r(D)$, $c(D)$, and $f(D)$. We further let $KL(D,D')$ be the KL divergence between the two distributions $D, D'$. We let 
\begin{equation*}
\begin{split}
\eta_a(D,a^*, L, \mc D) = \inf_{D' \in \mc D} \{ KL(D,D') \mid 
\Sigma\Big(a\odot \frac{\bar{r}(D')}{\bar{c}(D')}\Big) \\ < \Sigma(a^* \odot q)  \,,
\sum_{i} \bar{f}(D')_{im} a_{im} \leq L_m, \, \forall m \} \,.
\end{split}
\end{equation*} 
Then we have
\begin{equation}
    \liminf_{T \to \infty} \frac{\E_{(r,c,f)}[T^f_a(T)]}{\ln T} \geq \frac{1}{\eta_a((r,c,f),a^*, L, \mc D)}\,,
\end{equation}
where $T^f_a(T)$ is the number of rounds the team executes the task assignment $a$. We therefore have
\begin{equation}
    V_T \geq \Omega \Big(\sum_{a \in \mc A^c} \frac{\Delta^L_a \ln T }{\eta_a((r,c,f),a^*, L, \mc D)} \Big )\,,
\end{equation}

We next estimate the performance of the exact and approximate algorithms by analyzing the bounds for both the exact and approximate regrets $R^\alpha_T$ defined as in~\eqref{eq:regret_subopt}. 

\begin{theorem}
\label{thm:sub_optimal_bound}
Given that $T > MNBC_u$ and Algorithm~\ref{alg:bandit} is an $(1+\alpha)$-approximate algorithm for some $\alpha \geq 0$, the regret $R^{\alpha}_T$ defined in~\eqref{eq:regret_subopt} for Algorithm~\ref{alg:bandit} is bounded above by
\begin{equation}  
\label{eq:subopt_bound}
\begin{split}
        R^{\alpha}_T & \leq \mc O\Big(\Big(\frac{1 }{\ul{\Delta}^\alpha } + C_u\Big)\frac{C_u}{C_l^2}NM\ol{L} \ln T \\ 
        & \hspace{2cm} +\sum_{a \in \mc A^c} \frac{\bar{L}^3\ln T}{C_l(\max_{m}\Delta^L_{am})^2} \Big) \,.
\end{split}
\end{equation}
Further, for any distribution of completion times and rewards\footnote{Gap-dependent regret (as in~\eqref{eq:subopt_bound}) refers to regret bounds that explicitly depend on the specific parameters of the underlying distributions (e.g., sub-optimal gaps in~\eqref{eq:gap}), while distribution-free regret here provides bounds that hold uniformly across all possible distributions.}, the regret is bounded as $R^{\alpha}_T = \mc O\Big(\frac{1}{C_l} \sqrt{ C_u N M\ol{L} T\ln T} + \frac{C^2_u}{C_l^2}NM\ol{L} \ln T  +\sum_{a \in \mc A^c} \frac{\bar{L}^3\ln T}{C_l(\max_{m}\Delta^L_{am})^2}\Big)$.
    
Moreover, when Algorithm~\ref{alg:bandit} is an exact algorithm, for any $N,M,T, L$ such that $T = \Omega(C_u)$, there exists a problem instance for which the optimal regret of any task assignment algorithm is lower bounded by $R^{0}_T= \Omega(\frac{1}{C_l} \min \{\sqrt{C_u NM \ol{L} T}, \ol{L}T\})$.
\end{theorem}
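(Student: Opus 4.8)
The plan is to establish Theorem~\ref{thm:sub_optimal_bound} in three parts, mirroring its statement: the gap-dependent upper bound on $R^\alpha_T$, the distribution-free upper bound, and the information-theoretic lower bound for the exact algorithm.

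\textbf{Gap-dependent upper bound.} Starting from the decomposition~\eqref{eq:R_bound_q}, I would bound the cumulative difference $\sum_t [\Sigma(\tfrac{1}{\alpha+1} q\odot a^*) - \Sigma(r(t)\odot a(t))\mathds{1}\{a(t)+b(t)\in\mc A\}]$. The first step is to replace the realized reward $r(t)\odot a(t)$ by its per-round proxy via the phase structure: within phase $s$, once the agents have cleared the residual tasks $b(t)$ from phase $s-1$ (which takes at most $C_u$ rounds, contributing an $\mc O(C_u)$ loss per phase, hence $\mc O(C_u S) = \mc O(\frac{C_u^2}{C_l}NM\ln T)$ by~\eqref{eq:S_bound}), the team repeatedly executes $a'_s$, so the accumulated reward in phase $s$ is $l_s \Sigma(q\odot a'_s)$ up to concentration error. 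Next I would split the per-phase regret $\frac{l_s}{\alpha+1}\Sigma(q\odot a^*) - l_s\Sigma(q\odot a'_s)$ according to two ``good events'': (i) the UCB/LCB concentration events in~\eqref{eq:berns_ineq} and the analogous event for $\hat f$, each failing with probability $\mc O(1/t^2)$, so their total contribution is $\mc O(1)$ after summing; and (ii) on the good event, $a^* \in \mc A^\dagger_s$ whenever $\mc A^\dagger_s$ contains $a^*$ — here I need the LCB on resources to be conservative enough that every feasible assignment with enough samples is admitted, which is where $\Delta^L_{am}$ and the term $\sum_{a\in\mc A^c}\frac{\bar L^3\ln T}{C_l(\max_m\Delta^L_{am})^2}$ enters, essentially by the same Bernstein argument as in Theorem~\ref{thm:inf_regret_stc_f} but now weighted by the per-round reward lost (bounded by $\bar L/C_l$) and the phase length. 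On the good event, optimism gives $\Sigma(\hat q(t_s)\odot a^*) \ge \Sigma(q\odot a^*)$ and the $(1+\alpha)$-oracle guarantee gives $\Sigma(\hat q(t_s)\odot a'_s) \ge \frac{1}{1+\alpha}\Sigma(\hat q(t_s)\odot \hat a^*_s) \ge \frac{1}{1+\alpha}\Sigma(\hat q(t_s)\odot a^*)$, so the per-phase regret is controlled by $l_s\Sigma((\hat q(t_s)-q)\odot a'_s)$, i.e. by the confidence widths on the arms actually played. Then the standard MAB counting argument applies: summing $l_s \cdot d^q$ over phases where a sub-optimal arm $(i,m)$ is played, using $l_s \le 2C_u(T_{im}(t_s+1)-T_{im}(t_s))$ from Lemma~\ref{lem:phase_length} to telescope against $\sum 1/\sqrt{T_{im}}$, yields the $\mc O((\frac{1}{\ul\Delta^\alpha}+C_u)\frac{C_u}{C_l^2}NM\bar L\ln T)$ term. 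The extra $\bar L$ factor (versus single-agent~\cite{ito2024bandit}) comes from $\Sigma(q\odot a)$ summing up to $\bar L$ arms; the $C_u$ inside the bracket absorbs the residual-task and Bernstein-variance corrections.

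\textbf{Distribution-free bound.} This follows by the usual peeling/threshold trick: fix a threshold $\delta$, charge arms with $\Delta^\alpha_a \le \delta$ a total regret of at most $\delta \cdot \bar L T$ (at most $\bar L$ arms per round over $T$ rounds), and arms with $\Delta^\alpha_a > \delta$ the gap-dependent bound with $\ul\Delta^\alpha$ replaced by $\delta$, giving $\mc O(\frac{C_u}{C_l^2}NM\bar L\ln T/\delta)$; optimizing $\delta \sim \sqrt{\frac{C_u NM\bar L\ln T}{C_l^2 T}}$ balances the two into $\mc O(\frac{1}{C_l}\sqrt{C_u NM\bar L T\ln T})$, with the remaining $\mc O(\frac{C_u^2}{C_l^2}NM\bar L\ln T)$ and violation-feasibility terms surviving as lower-order additive pieces.

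\textbf{Lower bound.} For the exact algorithm ($\alpha=0$), I would reduce to a standard stochastic MAB lower bound. Construct an instance with $\bar L$ disjoint ``slots'' that can each hold one task-agent pair with reward gap $\Delta$ and deterministic completion time (say all $c\equiv 1$, or generally $\Theta(C_u)$-scaled), so that the per-round optimal assignment must pick one arm out of a constant number per slot; by the classical $\Omega(\sqrt{KT})$ / $\Omega(\frac{\ln T}{\Delta})$ bandit lower bounds applied slot-wise and the horizon scaling $OPT \asymp (T/C_l)\cdot(\text{reward scale})$ from Proposition~\ref{prop:r2q}, any algorithm suffers $\Omega(\frac{1}{C_l}\sqrt{C_u NM\bar L T})$ regret when $T$ is large, and $\Omega(\frac{\bar L T}{C_l})$ (linear) when $T = \Theta(C_u)$ is too small to distinguish the arms — the $\min$ captures the crossover. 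The factor $1/C_l$ appears because each ``pull'' occupies $\Theta(C_l)$ rounds, so a horizon $T$ permits only $\Theta(T/C_l)$ effective decisions.

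\textbf{Main obstacle.} The hard part is the coupling flagged in the introduction: the assignment $a'_s$ is chosen from the \emph{estimated} feasible set $\mc A^\dagger_s$, not $\mc A$, so the optimism inequality $\Sigma(\hat q(t_s)\odot \hat a^*_s)\ge \Sigma(q\odot a^*)$ requires first showing $a^*\in\mc A^\dagger_s$ with high probability; but guaranteeing this uniformly forces the resource-LCB analysis (with its $\max$-over-tasks structure from~\eqref{eq:LCB_f}, not a sum) to be interleaved with the reward-regret counting, and one must be careful that on the bad event $\mc A^\dagger_s\subsetneq\mc A$ the fallback branch~\eqref{eq:LCB_infeasible} does not blow up the regret — this is exactly where the $\sum_{a\in\mc A^c}\frac{\bar L^3\ln T}{C_l(\max_m\Delta^L_{am})^2}$ term is paid, and getting the $\bar L^3$ and $C_l$ dependencies right (rather than, say, $N$ or $\bar L^2$) is the delicate bookkeeping. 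A secondary subtlety is ensuring the $(1+\alpha)$-approximation error is not amplified across phases: because we re-solve with fresh $\hat q(t_s)$ each phase and the oracle guarantee is multiplicative against the \emph{current} optimum $\hat a^*_s$, the error stays at a constant factor $\frac{1}{1+\alpha}$ rather than compounding, but this must be argued at the level of $\Delta^\alpha_a$ using the $\alpha>0$ branch of $\ul\Delta^\alpha$ in~\eqref{eq:gap}, which (unlike the $\alpha=0$ case) includes $A^*$ itself since even the optimal arm can be ``sub-optimal'' relative to $\frac{1}{1+\alpha}OPT$.
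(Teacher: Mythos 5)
Your proposal follows essentially the same route as the paper's proof: the same decomposition into (i) the per-phase gap between $\frac{1}{1+\alpha}\Sigma(q\odot a^*)$ and $\Sigma(q\odot a'_s)$ handled by optimism plus the $(1+\alpha)$-oracle guarantee and a phase-length telescoping argument, (ii) the concentration of realized rewards around $l_s\Sigma(q\odot a'_s)$ costing $\mc O(C_u\bar L/C_l)$ per phase, and (iii) the infeasible-assignment term paid through the same $T^f_a(T)\leq \mc O(\bar L^2\ln T/(\max_m\Delta^L_{am})^2)$ bound from the violation analysis, with the identical threshold trick (optimizing $\epsilon\sim\sqrt{C_uNM\bar L\ln T/(C_l^2T)}$) for the distribution-free bound and the same reduction to a single-agent instance with $NM$ arms and capacity $\bar L$ for the lower bound. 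The approach and the key technical ingredients (including the subtleties you flag about $a^*\in\mc A^\dagger_s$ and the non-compounding of the approximation factor) match the paper's argument.
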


\begin{proofsketch}


Given~\eqref{eq:R_bound_q}, we have 
\begin{align}
\nonumber  R_T^{\alpha} & \leq 
\E\Big[\sum_{t=1}^T  \Sigma \Big(\frac{1}{\alpha+1} q \odot a^*\Big)\\
\nonumber & \quad -  \Sigma \Big(r_{t}\odot  a_{t} \Big)\mathds{1}\{a_t+b_t\in \mc A \} \Big] +  \frac{C_u \ol{L}}{(\alpha+1)C_l} \\
\label{eq:R_bound_12} & \leq  \frac{\ol{L} (\E[t_1]+C_u) }{(\alpha+1) C_l} + R^{(1)}_T+ R^{(2)}_T+ R^{(3)}_T \,,
\end{align}

where 
\begin{align}
\nonumber    R^{(1)}_T &= \E\Big[\sum_{s=1}^S \sum_{t=t_s}^{t_{s+1}-1}  \Sigma \Big(\frac{1}{\alpha+1} q \odot a^* - q\odot  a_{s}' \Big) \Big] \\
\nonumber   R^{(2)}_T &=  \E\Big[\sum_{s=1}^S \sum_{t=t_s}^{t_{s+1}-1}  \Sigma \Big(q\odot  a_{s}'- r_{t}\odot  a_{t} \Big) \mathds{1}\{a_t+b_t\in \mc A \} \Big]\\
\nonumber   R^{(3)}_T &=  \E\Big[\sum_{s=1}^S \sum_{t=t_s}^{t_{s+1}-1}  \Sigma \Big(q\odot  a_{s}'\Big) \mathds{1}\{a_t+b_t\in \mc A^c\} \Big]
\end{align}
and we define $t_{S+1} = T+1$ for convenience. 
We then show that 
\begin{align}
\label{eq:R_1_bound}
   R^{(1)}_T &= \mc O \Big(\ol{L} \sum_{ \forall (i,m) \in I_\mc A } \frac{\tilde{C}_{im} \ln T }{\Delta^\alpha_{im}}  \ln T+ \frac{C_u^2}{C_l^2}NM\ol{L}  \Big)\\
\label{eq:R_2_bound}
    R^{(2)}_T &= \mc O\Big(\frac{C_u^2}{C_l^2}NM\ol{L} \ln T \Big) \\\
\label{eq:R_3_bound}
R^{(3)}_T &= \mc O\Big(\frac{C_u^2}{C_l^2}NM\ol{L} \ln T  +\sum_{a \in \mc A^c} \frac{\bar{L}^3\ln T}{C_l(\max_{m}\Delta^L_{am})^2}\Big) \,,
\end{align}
where $\tilde{C}_{im}\leq \mc O\Big(  \frac{C_u}{\ol{c}_{im}C_l} \Big)$.

Lastly, We obtain the distribution-independent regret bound for $R^0_T$ by modifying the analysis of $R^{(1)}_T$. For detailed analysis, please see Appendix~\ref{sec:proof_subopt_regret_bound}. 
\end{proofsketch}

\section{Numerical Study}
\label{sec:numerical}

We demonstrate the proposed bandit heterogeneous task-assigning algorithm on a case study with 2 teams of different sizes and with different tasks. 
We perform the simulations using Matlab, where we use Gurobi with Yalmip toolbox as the computing oracle A for~\eqref{eq:UCB_opt} at the beginning of each phase $s$\footnote{All our code and data used in the case study can be found in https://github.com/QinshuangCoolWei/Heterogeneous-Task-Assignment-with-Uncertain-Execution-Times-and-Preferences.git.}.

The small team has $N=4$ tasks, $M =2$ team members, $L = \{1.5,1.2\}$, the expected reward, completion time, and resource occupation matrices are: 
\begin{equation*}
    \bar{r} = \begin{bmatrix}
    0.525 & 0.45 \\
    0.45 & 0.525 \\
    0.6 & 0.5 \\
    0.5 & 0.7
\end{bmatrix} , \,
\bar{c} = \begin{bmatrix}
    1.5 & 1.5 \\
    1.5 & 1.5 \\
    2 & 2 \\
    2 & 2
\end{bmatrix}, \,
\bar{f} = \begin{bmatrix}
    0.4 & 0.6 \\
    0.6 & 0.5 \\
    0.4 & 0.6 \\
    0.6 & 0.7 
\end{bmatrix}.
\end{equation*}
We set $T = 2\times 10^6$. We run the trial for 50 times. 

When Algorithm~\ref{alg:bandit} is an exact algorithm, 
Fig.~\ref{fig:small_pf_exact} demonstrates the violation penalty $V_T$ and Fig.~\ref{fig:small_R_sub_exact} demonstrates the exact regret $R^{0}_T$. We can observe that both the violation regret and the approximate regret always scale logarithmically with $T$.

Then, when Algorithm~\ref{alg:bandit} is an 2-approximate algorithm ($\alpha = 1$), we demonstrate the violation penalty, $V_T$, in Fig. 3a and the exact regrets, $R^{0}_T$ (blue curve) and $R^{1}_T$ (green curve), in Fig.~\ref{fig:small_pf_approx}. We can observe that the violation penalty still scales logarithmically with $T$, while the approximate regret $R^{1}_T$ (the green curve in Fig.~\ref{fig:small_R_sub_approx}) goes negative and scales linearly with time. The reason for this negative, linear regret is that the approximate algorithm can outperform the benchmark $\frac{1}{2}OPT$, where $OPT$ is the optimal accumulated reward that the team may obtain. 

However, the blue curve in Fig.~\ref{fig:small_R_sub_approx} illustrates that the regret, $R^{0}_T$, still scales logarithmically with $T$, because the approximate algorithm can sometimes perform as good as the exact algorithm, depending on the detailed parameters.

Notice that we also tested $1.1$ and $1.5$-approximate algorithm ($\alpha = 0.1$ and $0.5$) in this small team, which yield similar results as the 2-approximate algorithm does.


\begin{figure}[h!]
    \centering
    \begin{subfigure}[b]{0.45\columnwidth}
        \centering
        \includegraphics[width=\linewidth]{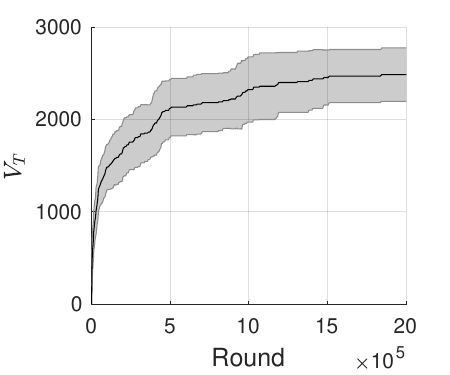}
        \caption{$V_T$ for exact algorithm}
        \label{fig:small_pf_exact}
    \end{subfigure}%
    \hfill
    \begin{subfigure}[b]{0.45\columnwidth}
        \centering
        \includegraphics[width=\linewidth]{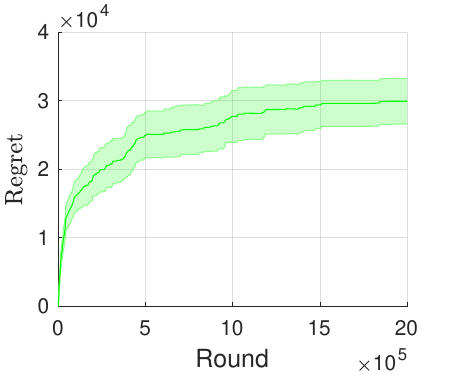}
        \caption{$R^{0}_T$ for exact algorithm}
        \label{fig:small_R_sub_exact}
    \end{subfigure}\\
    \begin{subfigure}[b]{0.45\columnwidth}
        \centering
        \includegraphics[width=\linewidth]{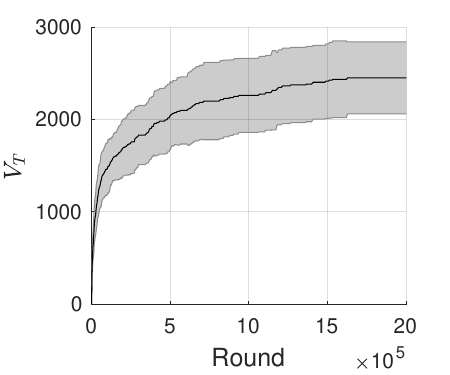}
        \caption{$V_T$ for approximate algorithm}
        \label{fig:small_pf_approx}
    \end{subfigure}%
    \hfill
    \begin{subfigure}[b]{0.45\columnwidth}
        \centering
        \includegraphics[width=\linewidth]{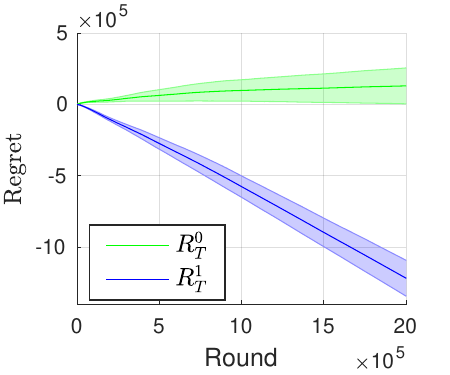}
        \caption{$R^{0}_T$ and $R^{1}_T$ for approximate algorithm}
        \label{fig:small_R_sub_approx}
    \end{subfigure}\\
\caption{Violation Penalties, $V_T$, scale as logarithmic in Fig.~\ref{fig:small_pf_exact} and Fig.~\ref{fig:small_pf_approx}. The exact regrets, $R^{0}_T$, scale as logarithmic in Fig.~\ref{fig:small_R_sub_exact} and Fig.~\ref{fig:small_R_sub_approx} (the green curves). The approximate regret, $R^{1}_T$, scale as negative linear in Fig.~\ref{fig:small_R_sub_approx} (the blue curve). 
}
\label{fig:simulation_small}
\end{figure}

We then simulate the 1.5-approximate algorithm on the large team.
The large team has $N=20$ tasks, $M =5$ team members, and $L_m = 5$ for all $m$. We provide expected rewards, completion times, and resource occupation for the large teams with code given the size of the problem. The 1.5-approximate algorithm can assign the tasks properly to the team at each round in a timely manner. We set $T = 10^7$. All results below are the average of 50 realizations of the same parameters. 

Similar to the results with the small team, in Fig.~\ref{fig:large_pf_approx}, we demonstrate that the violation penalty, $V_T$, is first scaling logarithmically with $T$ and then remain constant, and in Fig.~\ref{fig:large_R_sub_approx}, we show that the approximate regret $R^{0.5}_T$ first increase and then goes negative and scales linearly with time, and thus is outperforming the benchmark, $\frac{1}{1.5}OPT$.
\begin{figure}[h!]
    \centering
    \begin{subfigure}[b]{0.45\columnwidth}
        \centering
        \includegraphics[width=\linewidth]{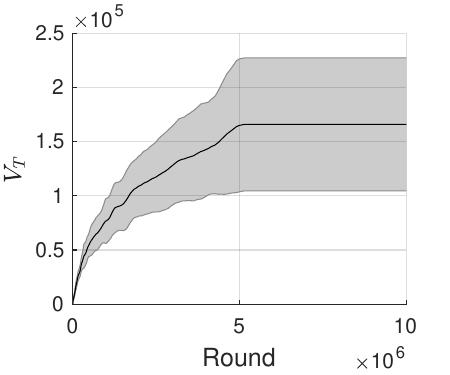}
        \caption{$V_T$ for approximate algorithm}
        \label{fig:large_pf_approx}
    \end{subfigure}%
    \hfill
    \begin{subfigure}[b]{0.45\columnwidth}
        \centering
        \includegraphics[width=\linewidth]{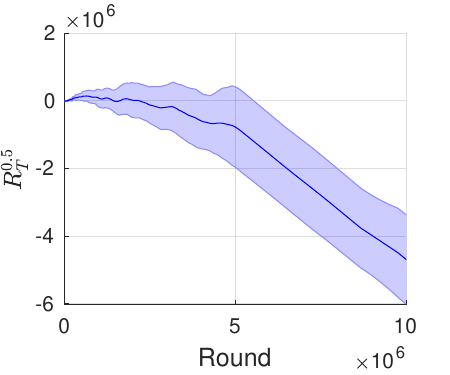}
        \caption{$R^{0.5}_T$ for approximate algorithm}
        \label{fig:large_R_sub_approx}
    \end{subfigure}\\
\caption{Violation penalty, $V_T$, scales as logarithmic in Fig.~\ref{fig:large_pf_approx}. Approximate regret, $R^{0.5}_T$, scales as negative linear in Fig.~\ref{fig:large_R_sub_approx}. 
}
\label{fig:simulation_large}
\end{figure}

\section{Conclusion}
\label{sec:conclusion}
We introduced the heterogeneous sequential task assignment problem with uncertain resource consumption in a multi-agent setting, and studied an efficient task assignment algorithm for both small-scale and large-scale problems where computing the exact optimal solution at each round can be intractable. We theoretically analyzed the violation penalties and the exact and approximate regret bounds for both the exact and the approximate algorithms, respectively, and demonstrated the effectiveness of the algorithm via case studies with teams of 2 sizes.

The proposed task assignment algorithm adapts to heterogeneous teammate preferences and abilities by modeling them as fixed distributions. However, in practice, these attributes can evolve over time, as individual proficiency and interest in tasks may fluctuate based on time, rewards, etc. Therefore, a future direction is to incorporate these evolving preferences and abilities into the task assignment process to plan more adaptively at all times.




\bibliographystyle{IEEEtran}
\bibliography{reference}

\begin{appendices}

\section{Proof for Proposition~\ref{prop:r2q}, Lemma~\ref{lem:phase_length} and~\ref{lem:phase_num}}
\label{sec:proof_r2q}

\begin{proof} 
Our multi-member task assignment setup with $N$ tasks and $M$ members with at most $\ol L$ simultaneous tasks is equivalent to that for a single agent task assignment with $NM$ tasks and $\ol L$ maximal simultaneous tasks. As a result, we can directly apply the proof for~\cite[Proposition 2.2]{ito2024bandit} to Proposition~\ref{prop:r2q} here, and apply~\cite[Lemma 3.1 and 3.2]{ito2024bandit} to Lemma~\ref{lem:phase_length} and~\ref{lem:phase_num} here.
\end{proof}

\section{Proof for Theorem~\ref{thm:inf_regret_stc_f}}
\label{sec:proof_inf_regret}
We prove the result for the regret bounds of violation penalty $V_T$ in Theorem~\ref{thm:inf_regret_stc_f} in this section.


We then define the set of phases
\begin{align}
\label{eq:K_s}    
K_s &= \{s \geq 1 \mid a'_s \in \mc A^c, a'_s \in \mc A^\dagger_s \} \\
J_s &= \{s \geq 1 \mid a'_s \in \mc A^c, a'_s \notin \mc A^\dagger_s \} \,,
\end{align}
and based on our construction of algorithm, $J_s$ is a subset of the set of all phases such that $\mc A^\dagger_s = \emptyset$. 

We define the event and the set of phases for $a\in \mc A^0$:
\begin{align}
\label{eq:G_event}
& G_a(t) = \{|\sum_{i}\hat f_{im}(t)a_{im}-\sum_{i}\bar{f}_{im}a_{im}| \leq d^f_{am}(t) \ \forall m  \} \,\\   
\label{eq:Q_event}
& Q_a(t_s)  = \{ a \in \mc A^\dagger_s \}\\
\label{eq:G_phase}
& G_s(a) = \{s \geq 1 \mid G_a(t_s) \text{ holds} \} 
\end{align}

We denote an infeasible assignment $a$ that is expected to return a higher per-unit reward than optimal as a \emph{deceiver assignment}, i.e.,  $\Sigma(q\odot a) > \Sigma(q\odot a^*)$. We then divide the violation penalties as below:
\begin{align}
\nonumber  R^{v}_T = & \E \Big[\sum_{t=1}^T \sum_{m=1}^M  \mid \bar{f}_{im} (a_{im}(t)+b_{im}(t)) - L_m \mid  \Big] \\
    & \leq \sum_{(i,m) \in I_{\mc A_0}}\Delta_{im}^L \cdot B\cdot C_u +R^{dec}_T+ R^{si}_T
\end{align}
where $R^{dec}_T$ stands for regrets incurred from deceiver assignments and $R^{si}_T$ stands for regrets incurred from infeasible suboptimal assignments, and formally,
\begin{align}
\nonumber    R^{dec}_T &= \E\Big[\sum_{s=1}^S \sum_{t=t_s}^{t_{s+1}-1}   \Delta^L_{a'_s} \mathds{1}[K_s] \Big] \\
\nonumber   R^{si}_T &=  \E\Big[\sum_{s=1}^S \sum_{t=t_s}^{t_{s+1}-1}  \Delta^L_{a'_s} \mathds{1}[J_s]  \Big]
\end{align}
where $S$ is the phase that includes round $T$, $t_{S+1} = T+1$. Similar to the definition of $T_{im}^f(t)$, we define $T_{a}^f(t)$ as the number of rounds that the task assignment $a$ has been executed.

Following the proof in~\cite[Lemma 4.3]{ito2024bandit} for the estimation of rewards, we can derive that
\begin{align}
\label{eq:prf_f_prob}
    &\Pr[\hat f_{im}(t)-\bar{f}_{im} \geq d^f_{im}(t)]  \leq \frac{1}{t^2}\\
\label{eq:prf_f_prob2}
    &\Pr[|\hat f_{im}(t)-\bar{f}_{im}| \geq d^f_{im}(t)]  \leq \frac{2}{t^2}\,,
\end{align}
and that 
\begin{equation}
\label{eq:prf_4}
   \sum_{s=1}^S \frac{l_s}{t_s^2} \leq \frac{5}{t_1}  \,.
\end{equation}

\begin{lemma}
\label{lem:deceiver_bound}
    $R^{dec}_T \leq \frac{6\ln(T+1) \bar{L}^2}{(\max_{m}\Delta^L_{am})^2}  + 10\bar{L} \max_{a\in \mc A^c}\Delta^L_{a} \E[\frac{1}{t_1}]$. 
\end{lemma}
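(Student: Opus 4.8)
\textbf{Proof proposal for Lemma~\ref{lem:deceiver_bound}.}

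The plan is to bound $R^{dec}_T$ by controlling how many phases $s$ can fall into the set $K_s$, i.e.\ phases where the algorithm selects an infeasible assignment $a'_s \in \mc A^c$ that nonetheless passed the LCB feasibility test $a'_s \in \mc A^\dagger_s$. The key observation is that if the confidence event $G_{a'_s}(t_s)$ holds, then the LCB estimate $\sum_i \hat f_{im}(t_s) a'_{s,im} - d^f_{a'_s m}(t_s)$ underestimates the true expected load $\sum_i \bar f_{im} a'_{s,im}$ by at most $2 d^f_{a'_s m}(t_s)$. Since $a'_s$ is genuinely infeasible, for some agent $m$ we have $\sum_i \bar f_{im} a'_{s,im} - L_m = \Delta^L_{a'_s m} \geq \max_m \Delta^L_{a m}$ (using the definition in~\eqref{eq:inf_gap}), yet $a'_s$ passed the test, which forces $2 d^f_{a'_s m}(t_s) \gtrsim \Delta^L_{a'_s m}$. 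Plugging in the form of $d^f_{am}(t)$ from~\eqref{eq:LCB_f}, this means $\min_{i:a'_{s,im}=1} T^f_{im}(t_s) \lesssim \bar L^2 \ln t_s / (\Delta^L_{a'_s m})^2$, so such ``deceiving'' phases can only occur while the relevant task-count is small. First I would split $R^{dec}_T = \E[\sum_s \sum_{t=t_s}^{t_{s+1}-1} \Delta^L_{a'_s} \1[K_s]]$ according to whether $G_s(a'_s)$ holds or not.

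For the term where $G_{a'_s}(t_s)$ holds, I would use Lemma~\ref{lem:phase_length} — which gives $l_s \leq 2 C_u (T_{im}(t_s+1) - T_{im}(t_s))$ for $(i,m) \in A'_s$ and the doubling property $T_{im}(t_{s+1}) \leq 4 T_{im}(t_s)$ — to convert the per-phase length $l_s$ into increments of the execution counters, so that the sum $\sum_s l_s \1[K_s, G_s(a'_s)]$ telescopes (up to the factor $4$ from doubling) into something like $O(\bar L^2 \ln(T+1) / (\max_m \Delta^L_{am})^2)$ after dividing by the squared gap; this is the standard UCB-style counting argument adapted to the phased/blocking setting, and it is where the $\frac{6\ln(T+1)\bar L^2}{(\max_m \Delta^L_{am})^2}$ term comes from. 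For the complementary term where $G_{a'_s}(t_s)$ fails, I would bound $\Delta^L_{a'_s} \leq \max_{a \in \mc A^c} \Delta^L_a$ and use the probability bound~\eqref{eq:prf_f_prob2} (which gives $\Pr[\neg G_{a'_s}(t_s)] \leq \sum_m 2/t_s^2 \leq 2M/t_s^2$, or more carefully a $1/t_s^2$-type bound per the max structure of $d^f_{am}$), then sum $\sum_s l_s / t_s^2 \leq 5/t_1$ via~\eqref{eq:prf_4}; this yields the $10 \bar L \max_{a \in \mc A^c} \Delta^L_a \,\E[1/t_1]$ term, with the extra factor $\bar L$ presumably absorbing the bound $\Delta^L_a \leq \bar L$ or a union-bound factor over the at-most-$\bar L$ agents with nonzero violation.

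The main obstacle I anticipate is the telescoping/counting step in the $G_s(a'_s)$-holds case: one must be careful that the index $(i,m)$ achieving the minimum in $\min_{i:a'_{s,im}=1} T^f_{im}(t_s)$ can change from phase to phase, and that the execution counters $T^f_{im}$ (updated every round the task is active, per line~\ref{line:update_f}) behave differently from the completion counters $T_{im}$ (updated only on completion, per line~\ref{line:update}) that appear in Lemma~\ref{lem:phase_length}. Reconciling these two counters — e.g.\ noting $T^f_{im}(t) \geq C_l T_{im}(t)$ or a similar inequality relating rounds-active to tasks-completed — is the delicate bookkeeping on which the clean $\ln(T+1)$ bound rests. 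A secondary subtlety is that the violation gap $\Delta^L_{a'_s m}$ in the denominator should be lower-bounded by $\max_m \Delta^L_{am}$ uniformly over the deceiving assignments that can actually pass the test, which requires arguing that any $a'_s \in \mc A^c \cap \mc A^\dagger_s$ must have its worst-agent violation at least as large as the quantity appearing in the lemma statement.
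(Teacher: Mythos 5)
Your decomposition of $R^{dec}_T$ by whether the confidence event $G_{a'_s}(t_s)$ holds, the contradiction argument showing that $a'_s\in\mc A^c\cap\mc A^\dagger_s$ under $G_{a'_s}(t_s)$ forces $2d^f_{a'_s m}(t_s)\geq \Delta^L_{a'_s m}$ and hence $\min_{i:a'_{s,im}=1}T^f_{im}(t_s)\leq 6\bar L^2\ln t_s/(\Delta^L_{a'_s m})^2$, and the treatment of the $G^c$ term via $\Pr[G^c_a(t_s)]\lesssim \bar L/t_s^2$ and $\sum_s l_s/t_s^2\leq 5/t_1$ all match the paper. Where you diverge is the counting step for the main term, and the paper's route is both simpler and tighter than the one you sketch. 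You propose converting phase lengths $l_s$ into increments of the \emph{completion} counters $T_{im}$ via Lemma~\ref{lem:phase_length} and telescoping; as you yourself note, this requires reconciling $T^f_{im}$ (incremented every active round) with $T_{im}$ (incremented on completion), and doing so via $T^f_{im}\geq C_l T_{im}$ plus the $l_s\leq 2C_u(T_{im}(t_{s+1})-T_{im}(t_s))$ telescoping costs an extra factor of order $C_u/C_l$ (and a further complication from the argmin index changing across phases), so it would not recover the stated constant $6$. The paper sidesteps all of this: since $T^f_{im}(t)$ increases by exactly one in every round that $(i,m)$ is active, the total number of rounds $T^f_a(T)$ in which the deceiver assignment $a$ is executed satisfies $T^f_a(t)\leq \min_{i:a_{im}=1}T^f_{im}(t)$ directly, so the threshold $6\bar L^2\ln(T+1)/(\max_m\Delta^L_{am})^2$ on $\min_i T^f_{im}$ is itself the bound on the number of offending rounds --- no telescoping, no Lemma~\ref{lem:phase_length}, and no counter conversion. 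Your ``secondary subtlety'' also dissolves: the inequality $2d^f_{am}(t_s)\geq\Delta^L_{am}$ is derived for every agent $m$ (vacuously when $\Delta^L_{am}\leq 0$), so one simply instantiates it at the maximizing agent rather than having to argue that the worst violated agent of $a'_s$ dominates $\max_m\Delta^L_{am}$. Finally, the $\bar L$ in the second term comes from a union bound over the at most $\bar L$ active pairs $(i,m)$ with $a_{im}=1$ (giving $\Pr[G^c_a(t_s)]\leq 2\bar L/t_s^2$), not from $\Delta^L_a\leq\bar L$; combined with $\sum_s l_s/t_s^2\leq 5/t_1$ this yields exactly the $10\bar L\max_{a\in\mc A^c}\Delta^L_a\,\E[1/t_1]$ term.
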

\begin{proof}
We assume that $a \in \mc A^c$ and $a \in \mc A^\dagger_s$ for some $s=1,2,\cdots, S+1$.


When the event $G_a(t_s)$ (defined in~\eqref{eq:G_event}) holds for some $s$, for all $m$, $\sum_{i}\hat f_{im}(t_s)a_{im} \geq \sum_{i}\bar{f}_{im}a_{im} - d^f_{am}(t_s)$. Suppose this lower bound $\sum_{i}\bar{f}_{im}a_{im} - d^f_{am}(t_s) > L_m + d^f_{am}(t)$ for some $m$, then this leads to $\sum_{i}\hat f_{im}(t_s)a_{im} >L_m + d^f_{am}(t_s)$. Then $a \notin \mc A^\dagger_s$, which contradicts to our assumption. Hence we must have $\sum_{i}\bar{f}_{im}a_{im} - d^f_{am}(t_s) \leq L_m + d^f_{am}(t)$ for all $m$, and thus $2 d^f_{am}(t_s) \geq \sum_{i}\bar{f}_{im}a_{im} - L_m$. Further, since $a \in \mc A^c$, there exists at least $m' \in I_\mc M$ such that $ \Delta^L_{am} >0$, and thus $2 d^f_{am'}(t_s) \geq \sum_{i}\bar{f}_{im'}a_{im'} - L_{m'} = \Delta^L_{am'} >0$. We then have
$\sqrt{\frac{1.5\bar{L}^2 \ln t_s }{\min_{i:a_{im}=1}{T^f_{im}(t_s)}}} \geq \frac{1}{2} \Delta^L_{am'}$, and hence $\min_{i:a_{im}=1}{T^f_{im}(t_s)} \leq 6\ln(t_s) \bar{L}^2/(\Delta^L_{am})^2$. Notice that $T^f_a(t) \leq T^f_{im}(t)$ for all $t$ and all $(i,m)$ such that $a_{im}=1$, we then have 
\begin{equation}
    \label{eq:deceiver_bound1}
    T^f_a(T) \leq \frac{6\ln(T+1) \bar{L}^2}{(\max_{m}\Delta^L_{am})^2} \,.
\end{equation}


We then show $Pr\{G_a^c(t)\}\leq  \frac{2\bar{L}}{t_s^2}$ in the following.

\begin{align}
\nonumber    G_a^c(t) &=\{\exists m, |\sum_{i}\hat f_{im}(t)a_{im}-\sum_{i}\bar{f}_{im}a_{im}| > d^f_{am}(t)  \} \\
\nonumber   & \subseteq \bigcup_m \{ \exists i, |\hat{f}_{im}(t_s)  - \bar{f}_{im}(t_s)| a_{im}  > \frac{1}{\bar{L}}d_{am}^f(t_s)\} \\
\label{eq:proof_f_a2i}   & \subseteq \bigcup_m \{ \exists i, |\hat{f}_{im}(t_s)  - \bar{f}_{im}(t_s) | a_{im}  > d_{im}^f(t_s)\} \\
\label{eq:prf_inf_1}    & \subseteq \bigcup_{(i,m): a_{im}=1} \{ |\hat{f}_{im}(t_s)  - \bar{f}_{im}(t_s) | >  d_{im}^f(t_s) \},
\end{align}
where~\eqref{eq:proof_f_a2i} follows from that $ d^f_{am}(t)/\bar{L}  =   \sqrt{\frac{1.5\ln t }{\min_{i:a_{im}=1}{T^f_{im}(t)}}}  \geq \sqrt{\frac{1.5 \ln t }{T^f_{im}(t)}} = d^f_{im}(t)$ for all $i$ such that $a_{im}=1$.

Given~\eqref{eq:prf_inf_1} and~\eqref{eq:prf_f_prob2}, we then have
\begin{align}
\nonumber    Pr\{G^c_{a}&(t_s)\} \\
\nonumber   & \leq  Pr\{ \bigcup_{(i,m): a_{im}=1} \{ |\hat{f}_{im}(t_s)  - \bar{f}_{im}(t_s) | >  d_{im}^f(t_s) \} \} \\
\nonumber   & \leq \sum_{(i,m): a_{im}=1} Pr\{  \{| \hat{f}_{im}(t_s)  - \bar{f}_{im}(t_s) | >  d_{im}^f(t_s) \} \} \\
\label{eq:prf_subopt_probGc}   & \leq \sum_{(i,m): a_{im}=1} \frac{2}{t_s^2} \leq  \frac{2\bar L}{t_s^2}\,.
\end{align}

Therefore,
\begin{align}
\nonumber    R^{dec}_T &= \E\Big[\sum_{s=1}^S \sum_{t=t_s}^{t_{s+1}-1}   \Delta^L_{a'_s} \mathds{1}[K_s] \Big] \\
\nonumber    & = \E\Big[\sum_{s=1}^S  l_s \Delta^L_{a'_s} \big( \mathds{1} [K_s \cap G_s(a'_s)]  + \mathds{1}[K_s \cap G^c_s(a'_s)] \big) \Big] \\
\label{eq:prf_inf_3}    & \leq \sum_{a\in \mc A^c} \Delta^L_a \frac{6\ln(T+1) \bar{L}^2}{(\max_{m}\Delta^L_{am})^2}  + \E\Big[\sum_{s=1}^S  l_s \Delta^L_{a'_s}\frac{2\bar{L}}{t_s^2}\Big] \\
\nonumber    & \leq \sum_{a\in \mc A^c}\frac{6 \Delta^L_a \ln(T+1) \bar{L}^2}{(\max_{m}\Delta^L_{am})^2}  + 2\bar{L} \max_{a\in \mc A^c}\Delta^L_{a} \E\Big[\sum_{s=1}^S \frac{l_s}{t_s^2}\Big] \\
\label{eq:prf_inf_2}    & \leq \sum_{a\in \mc A^c} \frac{6  \Delta^L_a \ln(T+1) \bar{L}^2}{(\max_{m}\Delta^L_{am})^2}  + 10\bar{L} \max_{a\in \mc A^c}\Delta^L_{a} \E[\frac{1}{t_1}]
\end{align}
where we derive~\eqref{eq:prf_inf_3} from~\eqref{eq:deceiver_bound1} and~\eqref{eq:G_phase} and that $Pr\{G_a^c(t)\}\leq  \frac{2\bar{L}}{t_s^2}$ for all $a\in \mc A^c\cap \mc A^\dagger_s$, and we get~\eqref{eq:prf_inf_2} from~\eqref{eq:prf_4}.
\end{proof}

\begin{lemma}
\label{lem:R_inf_sub_bound}
    $R^{si} \leq 5\bar{L} \max_{a\in \mc A^c}\Delta^L_{a} \E[\frac{1}{t_1}]$
\end{lemma}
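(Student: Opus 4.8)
\textbf{Plan for bounding $R^{si}_T$.}
The quantity $R^{si}_T = \E[\sum_{s=1}^S l_s \Delta^L_{a'_s} \mathds{1}[J_s]]$ collects the penalty from phases in which $\mc A^\dagger_s = \emptyset$, so the algorithm falls through to~\eqref{eq:LCB_infeasible} and picks an infeasible assignment. The plan is to show that such phases are rare by arguing that whenever the confidence event $G_{a^*}(t_s)$ holds for the (feasible) optimal per-round assignment $a^*$, we must have $a^* \in \mc A^\dagger_s$, hence $\mc A^\dagger_s \neq \emptyset$ and $s \notin J_s$. Thus $J_s$ occurs only when $G^c_{a^*}(t_s)$ holds.

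\textbf{Key steps.} First I would fix any phase $s$ with $\mc A^\dagger_s = \emptyset$ and use the fact that $a^* \in \mc A$ is feasible, so $\sum_i \bar f_{im} a^*_{im} \leq L_m$ for all $m$. On the event $G_{a^*}(t_s)$ (defined in~\eqref{eq:G_event}), we have $|\sum_i \hat f_{im}(t_s) a^*_{im} - \sum_i \bar f_{im} a^*_{im}| \leq d^f_{a^*m}(t_s)$ for all $m$, which yields $\sum_i \hat f_{im}(t_s) a^*_{im} - d^f_{a^*m}(t_s) \leq \sum_i \bar f_{im} a^*_{im} \leq L_m$ for all $m$; comparing with the definition of $\mc A^\dagger_s$ in line~\ref{line:A_est}, this says $a^* \in \mc A^\dagger_s$, contradicting $\mc A^\dagger_s = \emptyset$. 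Hence $J_s \subseteq G^c_s(a^*)$, the set of phases where $G_{a^*}(t_s)$ fails. Second, by the same computation as in~\eqref{eq:proof_f_a2i}--\eqref{eq:prf_subopt_probGc} applied to the feasible assignment $a^*$, we get $\Pr\{G^c_{a^*}(t_s)\} \leq \frac{2\bar L}{t_s^2}$. Third, I would bound
\begin{align*}
R^{si}_T &= \E\Big[\sum_{s=1}^S l_s \Delta^L_{a'_s} \mathds{1}[J_s]\Big] \leq \E\Big[\sum_{s=1}^S l_s \Delta^L_{a'_s} \mathds{1}[G^c_s(a^*)]\Big] \\
&\leq \max_{a \in \mc A^c} \Delta^L_a \cdot \E\Big[\sum_{s=1}^S l_s \Pr\{G^c_{a^*}(t_s)\}\Big] \leq 2\bar L \max_{a \in \mc A^c} \Delta^L_a \cdot \E\Big[\sum_{s=1}^S \frac{l_s}{t_s^2}\Big],
\end{align*}
and then invoke~\eqref{eq:prf_4}, i.e.\ $\sum_{s=1}^S l_s/t_s^2 \leq 5/t_1$, to conclude $R^{si}_T \leq 10 \bar L \max_{a \in \mc A^c} \Delta^L_a \,\E[1/t_1]$, matching the claimed bound (up to the constant, which I would tighten by noting $\Delta^L_{a'_s} \leq \max_{a\in\mc A^c}\Delta^L_a$ and being slightly more careful, or simply by using that $5/t_1$ already absorbs the constant in the statement's $5\bar L$).

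\textbf{Main obstacle.} The substantive point is the first step: establishing that a failure of feasible-set nonemptiness forces the optimal assignment's confidence event to fail. This requires the LCB construction in line~\ref{line:A_est} to use a \emph{lower} bound $\sum_i \hat f_{im}(t_s)a_{im} - d^f_{am}(t_s)$ rather than, say, the empirical mean, so that a genuinely feasible $a^*$ stays in $\mc A^\dagger_s$ with high probability; I would double-check the direction of the inequality (the $-d^f$ term, not $+d^f$) and that $d^f_{a^*m}(t_s) = \bar L \max_{i:a^*_{im}=1} d^f_{im}(t_s)$ dominates the per-task deviations, exactly as used in~\eqref{eq:proof_f_a2i}. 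Everything downstream is the routine summation already carried out for $R^{dec}_T$ in Lemma~\ref{lem:deceiver_bound}.
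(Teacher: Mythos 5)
Your proposal is correct and follows essentially the same route as the paper: show that $\mc A^\dagger_s=\emptyset$ forces the confidence event for a genuinely feasible assignment to fail, bound that failure probability by a union bound over the at most $\bar L$ active $(i,m)$ pairs, and sum $l_s/t_s^2 \leq 5/t_1$. The only difference is that the paper works with the one-sided deviation $\hat f_{im}-\bar f_{im} > d^f_{im}$ (probability $\leq 1/t_s^2$ per pair, via~\eqref{eq:prf_f_prob}), which is all that is needed here and yields the constant $5\bar L$ rather than the $10\bar L$ you get from the two-sided event $G_{a^*}$ — exactly the tightening you anticipated.
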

\begin{proof}
We first show that the event $Q^c_{a}(t_s)$ defined in~\eqref{eq:Q_event} has probability $Prob\{Q^c_{a}(t_s)\} \leq  \frac{\bar{L}}{t_s^2}\leq  \frac{NM}{t_s^2}$ for all $a \in \mc A$ in the following:
\begin{align}
\nonumber   Q^c_{a}(t_s) &= \{ a \notin \mc A^\dagger_s \}\\
\nonumber   &= \{\exists m,  \sum_{\forall i} \hat{f}_{im}(t_s) a_{im} - d_{am}^f(t_s) > L_m\}\\
\label{eq:proof_L2f}   &\subseteq \{\exists m,  \sum_{\forall i} (\hat{f}_{im}(t_s)  - \bar{f}_{im}(t_s) )a_{im}  > d_{am}^f(t_s)\}\\
\label{eq:proof_f_a2im}   & = \bigcup_{(i,m): a_{im}=1} \{ \hat{f}_{im}(t_s)  - \bar{f}_{im}(t_s)  >  d_{im}^f(t_s) \} 
\end{align}
where~\eqref{eq:proof_L2f} follows from that when $a \in \mc A$, $\sum_{\forall i} \bar{f}_{im}(t_s) a_{im} - d_{am}^f(t_s) \leq L_m$ for all $m$, 
and all other derivations are the same as those in~\eqref{eq:prf_inf_1}.

Given~\eqref{eq:proof_f_a2im} and \eqref{eq:prf_f_prob}, we then have 
\begin{equation}
\label{eq:prf_subopt_probQc}
 Pr\{Q^c_{a}(t_s)\} 
 \leq \sum_{(i,m): a_{im}=1} \frac{1}{t_s^2} 
 \leq  \frac{\bar L}{t_s^2} 
 <  \frac{NM}{t_s^2} \,.
\end{equation}

For any $s>1$, when $J_s$ holds, then $a'_s \in \mc A^c, a'_s \notin \mc A^\dagger_s$, and thus $\mc A^\dagger_s = \emptyset$. Since we know that $\mc A \neq \emptyset$, then for all $a \in \mc A$, we have $a \notin \mc A^\dagger_s$. 
As a result,
\begin{align}
\nonumber &Pr\{\mc A^\dagger_s = \emptyset \} \leq Pr\{\bigcap_{a\in \mc A} \{a \notin \mc A^\dagger_s\} \}\\
\nonumber & \qquad \leq Pr\{\exists a \in \mc A,  a \notin \mc A^\dagger_s \} \leq Pr\{Q^c_{a}(t_s)\} \leq \frac{\bar L}{t_s^2}
\end{align}
where we derive the last line from~\eqref{eq:proof_f_a2im} and~\eqref{eq:prf_subopt_probQc}.

Hence, similar to the derivation of~\eqref{eq:prf_inf_2}, $R^{si}= \E\Big[\sum_{s=1}^S l_s \Delta^L_{a'_s} \mathds{1}[J_s]  \Big]  \leq 5\bar{L} \max_{a\in \mc A^c}\Delta^L_{a} \E[\frac{1}{t_1}]$.
\end{proof}

Given Lemma~\ref{lem:deceiver_bound} and~\ref{lem:R_inf_sub_bound}, we can then derive the violation regret upper bound as below.
\begin{align}
\nonumber  V_T & = \sum_{(i,m) \in I_{\mc A_0}}\Delta_{im}^L \cdot C_u B +R^{dec}_T+ R^{si}_T\\
\nonumber  & = \sum_{(i,m) \in I_{\mc A_0}}\Delta_{im}^L \cdot C_u  B + \frac{6\ln(T+1) \bar{L}^2}{(\max_{m}\Delta^L_{am})^2} \\
\nonumber & \quad + 10\bar{L} \max_{a\in \mc A^c}\Delta^L_{a} \E[\frac{1}{t_1}] + 5\bar{L} \max_{a\in \mc A^c}\Delta^L_{a} \E[\frac{1}{t_1}]\\
\nonumber & = \mc O\Big( \frac{C_u^2}{C_l}\ln(T) + \frac{\ln(T) \bar{L}^2}{(\max_{m}\Delta^L_{am})^2} + \bar{L} \max_{a\in \mc A^c}\Delta^L_{a} \Big)
\end{align}

\section{Proof for Theorem~\ref{thm:sub_optimal_bound}}
\label{sec:proof_subopt_regret_bound}

We first denote the optimal UCB task assignment at round $t_s$ as $\hat a^*_{s}$, i.e.,
\begin{equation}
\label{eq:UCB_best_sol}     
\hat a^*_{s} \in \argmax_{a \in \mc A_s^\dagger} \Sigma(\hat{q}(t_s) \odot a) 
\end{equation} 

Notice that, based on our assumption that Algorithm A yields a $(1+\alpha)$-approximate solution to~\eqref{eq:UCB_opt}, we have 
\begin{equation}
\label{eq:approx_bound}
    \frac{1}{1+\alpha} \sum_{\hat a^*_{s,im}=1} \hat q_{im} (t_s)\leq\sum_{a'_{s,im}=1}\hat q_{im} (t_s)\,.
\end{equation}


We further define the index set for the optimal and chosen assignments  
\begin{equation}
\label{eq:ind_chosen}     
\begin{split}
A'_s = \{(i,m) \in I_{\mc A_0} \mid a'_{s,im} =1\} \,, \\
A^* = \{(i,m) \in I_{\mc A_0} \mid a^*_{im} =1\} \,.
\end{split}
\end{equation} 
We further let $\tilde{A}^0_s = A'_s \backslash A^*_s$ when $\alpha =0$ and $\tilde{A}^\alpha_s = A'_s$ when $\alpha >0$.

For simplicity of expression, we define the quantity 
\begin{equation}
\label{eq:d}     
d_{im}(t) = \sqrt{\frac{\tilde{C}_{im}}{\ol{c}_{im}} \frac{\ln t}{T_{im}(t)}} \,,
 \end{equation}
where
\begin{equation}
\label{eq:c_tilde}
\begin{split}
    \tilde{C}_{im} &= \frac{1}{\ol{c}_{im}} \Big(2\sqrt{1.5} + \frac{4}{\ol{c}_{im}} \Big(\sqrt{3\sigma_{im}^2} \\
    &\qquad \qquad +15(C_u-C_l) \sqrt{\frac{C_l}{90 C_u}} \Big)  \Big)^2 \,,      
\end{split}
\end{equation}
Notice that
\begin{equation}
\label{eq:c_tilde_bound}
\begin{split}
    \tilde{C}_{im} &\leq \mc O\Big(\frac{1}{\ol{c}_{im}} \Big( 1+ \frac{1}{\ol{c}^2_{im}} \Big(\sigma^2_{im} + \frac{(C_u-C_l)^2 C_l}{C_u} \Big) \Big)\Big) \\
    & \leq \mc O\Big(\frac{1}{\ol{c}_{im}} \Big( 1+  \frac{C_u-C_l}{\ol{c}_{im}} \Big)\Big) \leq \mc O\Big(  \frac{C_u}{\ol{c}_{im}C_l} \Big). \,      
\end{split}
\end{equation}
We further define two designated sets of phases 
\begin{equation}
\label{eq:F_s}    
\mc F_s = \{s \geq 1 \mid \Delta^\alpha_{a'_s} \leq \sum_{(i,m) \in \tilde{A}^\alpha_s} d_{im}(t_s) , \Delta^\alpha_{a'_s} >0 \} \, ,
\end{equation}
\begin{equation}
\label{eq:H_s}    
\mc H_s = \{s \geq 1 \mid a^* \in \mc A^\dagger_s\} \cap \mc F_s  \, ,
\end{equation}
and let $\mc H^c_s$ be its complement set.
Finally, we define the reward our algorithm can obtain from the phases in~\eqref{eq:F_s} as $\hat{R}_T $
\begin{equation}
\label{eq:R_hat} 
\hat{R}_T = \sum_{s=1}^S l_s \Delta^\alpha_{a'_s} \1[\mc F_s] \,. 
\end{equation}

\begin{lemma}
\label{lem:subopt_R1_bound}
    For all phases $s \geq 1$, we have $\Delta_{a'_s} \leq \sum_{(i,m) \in \tilde{A}_s} d_i(t_s)$ with a probability at least $1-\frac{7MN}{t_s^2}$. We then have $R^{(1)}_T \leq \E[\hat{R}_T] +\mc O \big(\frac{MN\ol L}{(1+\alpha)C_l t_1} \big)$
\end{lemma}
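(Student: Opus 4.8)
The plan is to prove Lemma~\ref{lem:subopt_R1_bound} in two stages: first a per-phase high-probability bound on the sub-optimality gap $\Delta^\alpha_{a'_s}$ of the chosen assignment, and then a summation over phases that converts the residual ``bad-event'' contributions into an $O(1/t_1)$ term using~\eqref{eq:prf_4}.

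\textbf{Stage 1: per-phase gap bound.} Fix a phase $s$. I would condition on the ``good'' events that the UCB/LCB confidence intervals hold simultaneously: (i) $q_{im} \leq \hat q_{im}(t_s)$ for all $(i,m)$ (which holds with probability at least $1 - 6/t_s^2$ per pair by~\eqref{eq:berns_ineq}), (ii) $\hat q_{im}(t_s) \leq q_{im} + 2\,d_{im}(t_s)$ for the pairs actually used, which follows from the Bernstein-type bounds on $\hat r_{im}$ and $\hat c_{im}$ in~\eqref{eq:berns_ineq} combined with the definition~\eqref{eq:UCB_q} and the algebraic identity relating $d^r, d^c$ to $d_{im}$ in~\eqref{eq:d}--\eqref{eq:c_tilde} (this is exactly where $\tilde C_{im}$ comes from, and where I'd invoke the variance bound $\sigma^2_{im} \leq (C_u-C_l)(\bar c_{im}-C_l)$ and the initialization bound $T_{im}(t_s) \geq 90\frac{C_u}{C_l}\ln T$ from~\eqref{eq:basic_time}), and (iii) $a^* \in \mc A^\dagger_s$, i.e. the true optimum is feasible under the LCB-estimated constraint set — failure of this is controlled by $Q^c_{a^*}(t_s)$, which has probability at most $\bar L/t_s^2$ by~\eqref{eq:prf_subopt_probQc}. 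On the intersection of these events, the $(1+\alpha)$-approximation guarantee~\eqref{eq:approx_bound} gives $\sum_{A'_s}\hat q(t_s) \geq \frac{1}{1+\alpha}\sum_{\hat a^*_s}\hat q(t_s) \geq \frac{1}{1+\alpha}\sum_{A^*}\hat q(t_s) \geq \frac{1}{1+\alpha}\sum_{A^*} q_{im}$, while on the left, $\sum_{A'_s}\hat q(t_s) \leq \sum_{A'_s} q_{im} + 2\sum_{(i,m)\in A'_s} d_{im}(t_s)$. Rearranging and using the definitions in~\eqref{eq:gap} and the set $\tilde A^\alpha_s$, this yields $\Delta^\alpha_{a'_s} \leq 2\sum_{(i,m)\in \tilde A^\alpha_s} d_{im}(t_s)$ (I'll absorb the factor $2$ into the $O(\cdot)$ / redefinition of $d_{im}$ as the paper seems to do, or carry it explicitly). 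Union-bounding the failure probabilities over all $(i,m)$ pairs (at most $NM \le \bar L\cdot$const, but counted as $MN$) gives the claimed probability at least $1 - \tfrac{7MN}{t_s^2}$: roughly $6MN/t_s^2$ from the $\hat q$ events plus an extra $MN/t_s^2$ slack from the feasibility event $Q^c_{a^*}$.

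\textbf{Stage 2: summing over phases.} Write
\begin{align}
\nonumber R^{(1)}_T &= \E\Big[\sum_{s=1}^S l_s\,\Sigma\big(\tfrac{1}{\alpha+1}q\odot a^* - q\odot a'_s\big)\Big]
= \E\Big[\sum_{s=1}^S l_s\,\Delta^\alpha_{a'_s}\Big]\\
\nonumber &= \E\Big[\sum_{s=1}^S l_s\,\Delta^\alpha_{a'_s}\big(\1[\mc F_s] + \1[\mc F_s^c]\big)\Big],
\end{align}
and note that when the Stage-1 good event holds we have $\Delta^\alpha_{a'_s} \le \sum_{\tilde A^\alpha_s} d_{im}(t_s)$, which is exactly the defining inequality of $\mc F_s$ in~\eqref{eq:F_s} (modulo the case $\Delta^\alpha_{a'_s}\le 0$, which contributes nothing nonnegative and can be dropped or handled trivially). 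Hence on $\mc F_s^c$ the bad event must have occurred, so $\Pr[\mc F_s^c \text{ counted}] \le \tfrac{7MN}{t_s^2}$, and since $l_s \Delta^\alpha_{a'_s} \le l_s\, \ol L\cdot\tfrac{1}{(1+\alpha)}$ (each per-round reward $q_{im}\le 1$ and at most $\ol L$ tasks are active, with the $\tfrac{1}{1+\alpha}$ coming from the benchmark scaling in $\Delta^\alpha$), the second sum is bounded by $\ol L\,\tfrac{1}{(1+\alpha)}\sum_{s=1}^S \tfrac{7MN\, l_s}{t_s^2} \le \tfrac{35\,MN\,\ol L}{(1+\alpha)}\,\E[1/t_1]$ by~\eqref{eq:prf_4}, and then $\E[1/t_1] \le 1/(90 C_u\ln T) = O(1/(C_l t_1))$ via~\eqref{eq:basic_time}. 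The first sum is precisely $\E[\hat R_T]$ by the definition~\eqref{eq:R_hat}. This gives $R^{(1)}_T \le \E[\hat R_T] + \mc O\!\big(\tfrac{MN\ol L}{(1+\alpha)C_l t_1}\big)$, as claimed.

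\textbf{Main obstacle.} The delicate step is Stage 1, specifically establishing $\hat q_{im}(t_s) \le q_{im} + 2 d_{im}(t_s)$: the estimator $\hat q = \frac{\min\{1,\hat r + d^r\}}{\max\{C_l,\hat c - d^c\}}$ is a ratio with truncation, so one must carefully propagate the numerator and denominator confidence bounds through the division, control the denominator away from $0$ using $\hat c_{im}(t_s) - d^c_{im}(t_s) \ge C_l$ (guaranteed by the $B$-round initialization and the form of $d^c$), and show the resulting bound has the exact $\sqrt{\tilde C_{im}\ln t_s/(\bar c_{im} T_{im}(t_s))}$ shape with $\tilde C_{im}$ as in~\eqref{eq:c_tilde}. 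This is a somewhat involved but routine calculation closely paralleling~\cite[Lemma 4.3]{ito2024bandit}; the only genuinely new wrinkle versus the single-agent case is that the feasibility event $a^* \in \mc A^\dagger_s$ must be included among the good events (it is vacuous in~\cite{ito2024bandit} since the constraint set there is deterministic), which is what forces the extra $MN/t_s^2$ slack and hence the constant $7$ rather than $6$.
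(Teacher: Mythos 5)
Your proposal follows essentially the same route as the paper: the same good events (UCB validity of $\hat q$, the concentration bound $\hat q_{im}(t_s)\leq q_{im}+d_{im}(t_s)$ with the constants absorbed into $\tilde C_{im}$, and feasibility $a^*\in\mc A^\dagger_s$ controlled by $Q^c_{a^*}$), the same use of the $(1+\alpha)$-approximation guarantee, and the same phase decomposition with the bad-event tail handled via $\Delta^\alpha_{a'_s}\leq \ol L/((1+\alpha)C_l)$ and $\sum_s l_s/t_s^2\leq 5/t_1$. Two minor bookkeeping points: for $\alpha=0$ the paper cancels the common pairs and bounds the gap by $\sum_{(i,m)\in A'_s\setminus A^*}d_{im}(t_s)$ rather than a sum over all of $A'_s$ (which your chain gives), and the $1/C_l$ in the final $\mc O(\cdot)$ term comes from $q_{im}\leq 1/C_l$ in the crude gap bound, not from $\E[1/t_1]$.
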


\begin{proof}
We first show $\Delta_{a'_s} \leq \sum_{(i,m) \in \tilde{A}_s} d_{im}(t_s)$ with a probability $ 1- \frac{7MN}{t_s^2}$ for all $s \geq 1$.

Following the proof in~\cite[Lemma 4.3]{ito2024bandit}, we have~\eqref{eq:q_upper} below. 
\begin{equation}
\label{eq:q_upper}
    \hat q_{im}(t_s) \leq q_{im}(t_s) + d_{im}(t_s) \,.
\end{equation}

We first note that for all $\alpha \geq 0$,
\begin{equation}
    \label{eq:prf_7}
    \Delta^\alpha_{a'_s} \leq \frac{1}{1+\alpha} \sum_{a^*_{im}=1} q_{im} \leq \frac{\ol L}{(1+\alpha)C_l}
\end{equation}
as $q_{im} \leq 1/C_l$, and there exists at most $\ol L$ pairs of $(i,m)$ such that $a^*_{im}=1$.

We then show that $\Delta_{a'_s} \leq \sum_{(i,m) \in \tilde{A}^\alpha_s} d_{im}(t_s)$ with a probability $ 1- \frac{7MN}{t_s^2}$ in two cases, $\alpha =0$ and $\alpha >0$, separately in the following.

When $\alpha =0$, 
\begin{align}
\nonumber  \Delta^0_{a'_s} &  = \sum_{a^*_{im}=1} q_{im} - \sum_{a'_{im}=1} q_{im} \\
\label{eq:prf_1-1}   &  = \sum_{(i,m) \in A^* \backslash A'_s} q_{im} - \sum_{(i,m) \in A'_s \backslash A^*} q_{im} \\
\label{eq:prf_1-2}      & \leq  \sum_{(i,m) \in A^* \backslash A'_s} \hat q_{im}(t_s) -\sum_{(i,m) \in A'_s \backslash A^*} q_{im}
\end{align}

When $a^* \in \mc A^\dagger_s$, $\mc A^\dagger_s \neq \emptyset$. According to our algorithm, $a'_s \in \mc A^\dagger_s$ and $a'_s \in \argmax_{a \in \mc A^\dagger_s} \Sigma(\hat{q}(t_s) \odot a)$, as a result, $\Sigma(\hat{q}(t_s) \odot a'_s) \geq \Sigma(\hat{q}(t_s) \odot a^*)$. And also, $Pr\{ Q^c_{a^*}(t_s) \} \leq  \frac{NM}{t_s^2}$
Therefore, from~\eqref{eq:prf_1-2}, with probability at least $1- \frac{7NM}{t_s^2}$, we have
\begin{align}
\nonumber  \Delta_{a'_s} & \leq  \sum_{(i,m) \in A^* \backslash A'_s} \hat q_{im}(t_s) -\sum_{(i,m) \in A'_s \backslash A^*} q_{im}\\
\nonumber & \leq  \sum_{(i,m) \in  A'_s\backslash A^*} \hat q_{im}(t_s) -\sum_{(i,m) \in A'_s \backslash A^*} q_{im} \\
\label{eq:proof_1-4} & = \sum_{(i,m) \in \tilde{A}_s} d_{im}(t_s) \,.
\end{align}

When $\alpha>0$, similar to the case when $\alpha =0$, since $a^* \in \mc A^\dagger_s$ with probability greater then $1-\frac{NM}{t_s^2}$, and $a^* = \argmax_{a\in \mc A^\dagger_s} \Sigma(\hat{q} \odot a)$ under this event, then with probability at least $1- \frac{7NM}{t_s^2}$, we have
\begin{align}
\nonumber  \Delta^\alpha_{a'_s}&  = \frac{1}{1+\alpha} \sum_{a^*_{s,im}=1} q_{im} - \sum_{a'_{s,im}=1} q_{im} \\
\label{eq:prf_1}      & \leq \frac{1}{1+\alpha} \sum_{a^*_{s,im}=1} \hat q_{im}(t_s) -\sum_{a'_{s,im}=1} q_{im}\\
\label{eq:prf_2}      & \leq  \frac{1}{1+\alpha} \sum_{\hat a^*_{s,im}=1} \hat q_{im}(t_s) -\sum_{a'_{s,im}=1} q_{im} \\
\nonumber   & =  \frac{1}{1+\alpha} \sum_{\hat a^*_{s,im}=1} \hat q_{im} (t_s)-\sum_{a'_{s,im}=1}\hat q_{im}(t_s)\\
\nonumber   & \qquad \qquad  +\sum_{a'_{s,im}=1}(\hat q_{im}(t_s) - q_{im}(t_s) ) \\     
\label{eq:prf_3}   & \leq \sum_{a'_{s,im}=1}\hat q_{im}(t_s)-\sum_{a'_{s,im}=1}\hat q_{im}(t_s) + \sum_{a'_{s,im}=1} d_{im}(t_s)
\end{align}
where~\eqref{eq:prf_1} is true with probability $1- \frac{6NM}{t^2}$ based on~\eqref{eq:berns_ineq}. \eqref{eq:prf_2} holds based on the optimum definition of $\hat a^*_{im}=1$ in~\eqref{eq:UCB_best_sol} when $a^* \in \mc A^\dagger_s$, and~\eqref{eq:prf_3} follows from~\eqref{eq:approx_bound}.

Next, we rewrite and bound $R^{(1)}_T$ in the rest of the proof.
\begin{align}
\nonumber R^{(1)}_T &= \E\Big[\sum_{s=1}^S \sum_{t=t_s}^{t_{s+1}-1}  \Sigma \Big( \frac{1}{\alpha+1} q \odot a^* - q\odot  a_{s}' \Big) \Big] \\
\nonumber & = \E\Big[\sum_{s=1}^S  l_s   \Delta_{a'_s}  \Big] \\
\nonumber& = \E\Big[\sum_{s=1}^S  l_s   \Delta^\alpha_{a'_s} \1[\mc H_s] + \sum_{s=1}^S  l_s   \Delta_{a'_s} \1[\mc H^c_s] \Big] \\
\label{eq:prf_R_1_bound} & = \E[\hat R_T] + \E\Big[\sum_{s=1}^S  l_s   \Delta_{a'_s} \1[\mc H^c_s] \Big] \,,
\end{align}
where $\mc H_s$ is defined in~\eqref{eq:H_s} and
\begin{align}
\nonumber & \E\Big[\sum_{s=1}^S  l_s \Delta^\alpha_{a'_s} \1[\ol {\mc H}_s] \Big] \\
\nonumber  & \leq \E\Big[\sum_{s=1}^S  l_s \Delta^\alpha_{a'_s} \1[\Delta_{a'_s} > \sum_{(i,m) \in \tilde{A}^\alpha_s} d_{im}(t_s) ] \Big]\\ 
\label{eq:prf_sub_5} & \leq \E\Big[\sum_{s=1}^S  l_s \frac{\ol L}{(1+\alpha)C_l} \frac{7MN}{t_s^2} \Big] 
= \frac{7MN\ol L}{(1+\alpha)C_l} \E\Big[\sum_{s=1}^S \frac{l_s}{t_s^2}  \Big] \\
\label{eq:prf_sub_6} & \leq \frac{35MN\ol L}{(1+\alpha)C_l} \E\Big[ \frac{1}{t_1} \Big] \,, 
\end{align}
where~\eqref{eq:prf_sub_5} follows from that~\eqref{eq:proof_1-4} and~\eqref{eq:prf_3} holds with probability at least $1- \frac{7NM}{t_s^2}$ and~\eqref{eq:prf_7}, and~\eqref{eq:prf_sub_6} follows from~\eqref{eq:prf_4}\,.

Then~\eqref{eq:prf_R_1_bound} and~\eqref{eq:prf_sub_6} together yields the bound on $R^{(1)}_T$ in Lemma~\ref{lem:subopt_R1_bound}. 
\end{proof}

To analyze the upper bound of $\E[\hat{R}_T]$, we first obtain Lemma~\ref{lem:count_lowerbound} by directly applying the proof for~\cite[Lemma A.4]{ito2024bandit} to the lemma by revising the task $i$ in its proof to agent-task pair $(i,m) \in I_\mc A$.
\begin{lemma}
\label{lem:count_lowerbound}
For all $s \leq 1$ and $(i,m) \in  A'_s $, we have 
\begin{equation}
    \label{eq:count_lower}
    \E[T_{im}(t_{s+1})- T_{im}(t_{s}) \mid l_s] \geq \frac{l_s - 2C_u}{\ol{c}_{im}} \,.
\end{equation}
Therefore, we have
\begin{equation}
    \label{eq:ls_upper}
    l_s \leq 2\ol{c}_{im} \E[T_{im}(t_{s+1})-T_{im}(t_{s}) \mid l_s,T_{im}(t_{s}) ]\,.
\end{equation}
\end{lemma}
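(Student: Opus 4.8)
The plan is to argue directly from the dynamics of Algorithm~\ref{alg:bandit} restricted to a single phase, which reduces the claim to Wald's identity for a renewal stream with bounded inter-event times; this is precisely the single-agent argument of~\cite[Lemma A.4]{ito2024bandit} carried over after relabelling a task $i$ by an agent--task pair $(i,m)$, since within a phase distinct pairs evolve as independent streams. Fix $s\ge 1$ and $(i,m)\in A'_s$, so $a'_{s,im}=1$. By Line~\ref{line:set_act_start} and the rule that a completed task with $a'_{s,im}=1$ is immediately regenerated and reassigned to agent $m$, once the (at most $C_u$-long) leftover assignments inherited at $t_s$ from phase $s-1$ have cleared, agent $m$ runs a back-to-back stream of copies of task $i$ for the rest of phase $s$, the last copy overrunning $t_{s+1}$ by at most $C_u$. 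Each copy takes an i.i.d.\ time with support $\{C_l,\dots,C_u\}$ and mean $\ol c_{im}$, drawn independently of everything observed up to $t_s$ and in particular of $l_s$ and $T_{im}(t_s)$. Writing $\Delta_s:=T_{im}(t_{s+1})-T_{im}(t_s)$ for the number of copies completed inside the phase, it suffices to prove $\E[\Delta_s\mid l_s]\ge(l_s-2C_u)/\ol c_{im}$.

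Next I would set up the renewal bookkeeping. Let $\tau_0\le t_s+C_u$ be the start time of the first such copy and $\tau_0<\tau_1<\cdots$ its successive completion times, so the increments $\tau_k-\tau_{k-1}$ are the i.i.d.\ execution times. Then the $(\Delta_s+1)$-th copy is the one still running at $t_{s+1}$, whence $\tau_{\Delta_s+1}-\tau_0\ge (t_{s+1}-t_s)-C_u=l_s-C_u$. The key observation is that $\nu:=\Delta_s+1$ is a stopping time for the i.i.d.\ increment sequence --- whether $\nu\le k$ depends only on the first $k$ increments, whereas $\Delta_s$ itself peeks one increment ahead --- with $\E[\nu\mid l_s]\le l_s/C_l+1<\infty$. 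Applying Wald's identity conditionally on the start-of-phase information gives $\E[\tau_\nu-\tau_0\mid l_s]=\ol c_{im}\,\E[\nu\mid l_s]$; combining with $\tau_\nu-\tau_0\ge l_s-C_u$ and $\ol c_{im}\le C_u$ yields $\E[\Delta_s\mid l_s]\ge(l_s-2C_u)/\ol c_{im}$, which is~\eqref{eq:count_lower}. Conditioning additionally on $T_{im}(t_s)$ changes nothing, since the in-phase draws are independent of it.

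Finally,~\eqref{eq:ls_upper} follows by rearrangement: by~\eqref{eq:basic_time} we have $l_s\ge 90\,C_u\ln T\ge 4C_u$ under the standing assumption $T>MNBC_u$, so $l_s-2C_u\ge l_s/2$ and hence $l_s\le 2\ol c_{im}\,\E[T_{im}(t_{s+1})-T_{im}(t_s)\mid l_s,T_{im}(t_s)]$. I expect the only delicate point to be the measurability bookkeeping --- isolating $\nu=\Delta_s+1$ (not $\Delta_s$) so that Wald's identity genuinely applies --- together with justifying that, for $(i,m)\in A'_s$, agent $m$ incurs no idle time on task $i$ within the phase beyond the $\le C_u$ clearing window, which is what caps both the left and right boundary losses at $C_u$ and is read off from Line~\ref{line:set_act_start} and the regeneration rule.
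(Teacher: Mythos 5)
Your proposal is correct and follows essentially the same route as the paper, which proves this lemma simply by invoking the argument of \cite[Lemma A.4]{ito2024bandit} with each task $i$ relabelled as an agent--task pair $(i,m)$; you have merely written out in full the renewal/Wald's-identity argument that the cited lemma contains, including the $\le C_u$ boundary losses at each end of the phase and the $l_s\ge 4C_u$ rearrangement for~\eqref{eq:ls_upper}. The only caveats are cosmetic: the Wald stopping time should be $\nu=\Delta_s+1$ exactly as you isolate it, and your $\tau_0$ should be read as the start time of the first in-phase copy rather than a completion time.
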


We now bound $\E[\hat{R}_T]$ used in the upper bound of $R^{(1)}_T$ in Lemma~\ref{lem:subopt_R1_bound}.
\begin{lemma}
\label{lem:subopt_Rhat_bound}
    $\E[\hat{R}_T] = \mc O\Big(\ol{L} \sum_{(i,m) \in \tilde{A}^\alpha_s} \frac{\tilde{C}_{im} \ln T}{\Delta^\alpha_{im}} \Big)$
\end{lemma}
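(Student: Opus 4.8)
The plan is to bound $\E[\hat{R}_T]$ by grouping the phases appearing in the sum~\eqref{eq:R_hat} according to which agent--task pair witnesses the slow growth of the counter $T_{im}$. Recall $\hat{R}_T = \sum_{s=1}^S l_s \Delta^\alpha_{a'_s}\1[\mc F_s]$, and on the event $\mc F_s$ we have $\Delta^\alpha_{a'_s} \le \sum_{(i,m)\in \tilde A^\alpha_s} d_{im}(t_s)$ with $d_{im}(t_s) = \sqrt{(\tilde C_{im}/\ol c_{im})(\ln t_s / T_{im}(t_s))}$. First I would substitute this inequality and use Lemma~\ref{lem:count_lowerbound} (specifically $l_s \le 2\ol c_{im}\,\E[T_{im}(t_{s+1})-T_{im}(t_s)\mid\cdot]$ for $(i,m)\in A'_s$) to replace the factor $l_s$ by $2\ol c_{im}$ times the per-phase increment of the counter. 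This turns each summand into (up to constants) $\sum_{(i,m)\in\tilde A^\alpha_s}\sqrt{\tilde C_{im}\,\ol c_{im}\,\ln t_s/T_{im}(t_s)}\;(T_{im}(t_{s+1})-T_{im}(t_s))$, at which point the sum over $s$ telescopes in the usual bandit fashion.

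Next I would exchange the order of summation to sum over each fixed pair $(i,m)\in I_\mc A$ the contributions from all phases $s$ with $(i,m)\in\tilde A^\alpha_s$ and $\mc F_s$. On those phases the definition of $\mc F_s$ in~\eqref{eq:F_s} forces $\Delta^\alpha_{a'_s}>0$, hence $\Delta^\alpha_{a'_s}\ge \Delta^\alpha_{im}$ by the definition~\eqref{eq:gap} of the gap $\Delta^\alpha_{im}$; combining with $\Delta^\alpha_{a'_s}\le \sum_{(i,m)\in\tilde A^\alpha_s} d_{im}(t_s)$ gives a lower bound of the form $\Delta^\alpha_{im}\le |\tilde A^\alpha_s|\max_{(i,m)} d_{im}(t_s)\le \ol L\max d_{im}(t_s)$, which bounds the counter $T_{im}(t_s)$ below by something like $\ol L^2\tilde C_{im}\ln T/(\ol c_{im}(\Delta^\alpha_{im})^2)$ once $(i,m)$ is the witnessing pair. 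Then I would use the standard summation-by-parts / integral bound $\sum_k \frac{x_{k+1}-x_k}{\sqrt{x_k}}\le 2\sqrt{x_{\max}}$ applied to the increments $T_{im}(t_{s+1})-T_{im}(t_s)$ of the per-pair counter, with $x_{\max}=T_{im}(T)\le O(T)$; more carefully, because we only want a $\ln T$ bound and not a $\sqrt T$ bound, I would instead cap the counter growth: once $T_{im}(t_s)$ exceeds the threshold $\asymp \ol L^2\tilde C_{im}\ln T/(\ol c_{im}(\Delta^\alpha_{im})^2)$ the phase can no longer satisfy $\mc F_s$ with $(i,m)$ as witness, so the total increment of $T_{im}$ over such phases is at most that threshold times a constant (using Lemma~\ref{lem:phase_length} to control the overshoot, $T_{im}(t_{s+1})\le 4T_{im}(t_s)$). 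Multiplying by the $\sqrt{\tilde C_{im}\ol c_{im}\ln T}/\sqrt{T_{im}(t_s)}$ weight and the $2\ol c_{im}$ factor, and noting $d_{im}$ on the last such phase is $\asymp \Delta^\alpha_{im}/\ol L$, yields a per-pair contribution of order $\ol L\,\tilde C_{im}\ln T/\Delta^\alpha_{im}$, and summing over the $(i,m)\in\tilde A^\alpha_s$ gives the claimed bound.

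The main obstacle I anticipate is handling the telescoping cleanly in the multi-witness setting: each phase $s$ contributes a \emph{sum} over several pairs $(i,m)\in\tilde A^\alpha_s$ rather than a single pair, so the naive rearrangement double-counts, and Lemma~\ref{lem:count_lowerbound} only bounds $l_s$ by the increment of \emph{one} chosen pair's counter at a time. The fix is to charge each phase's contribution to the pair achieving $\max_{(i,m)} d_{im}(t_s)$ (or to split $\Delta^\alpha_{a'_s}$ evenly among the at most $\ol L$ active pairs, absorbing the factor $\ol L$ that already appears in the target bound), and then argue pairwise. A secondary technicality is that $\ul\Delta^\alpha$ could be zero when $\alpha>0$ only through pairs never in any $\tilde A^\alpha_s$, so one must be careful that the sum in the statement runs over exactly the pairs that can appear; since $\tilde A^\alpha_s \subseteq I_\mc A$ and the gaps $\Delta^\alpha_{im}$ are well-defined and positive on the relevant index set by~\eqref{eq:gap}, this is bookkeeping rather than a real difficulty. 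Throughout I would keep the expectations intact (conditioning on $l_s$ and $T_{im}(t_s)$ where Lemma~\ref{lem:count_lowerbound} requires it) and invoke~\eqref{eq:prf_4} only if any residual $\sum_s l_s/t_s^2$ terms appear from low-probability events, though on $\mc F_s$ they should not.
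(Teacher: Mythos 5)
Your overall skeleton matches the paper's: on $\mc F_s$ you use $\Delta^\alpha_{a'_s}\le\sum_{(i,m)\in\tilde A^\alpha_s}d_{im}(t_s)$, convert $l_s$ into counter increments via Lemma~\ref{lem:count_lowerbound}, telescope per pair using $\sum_s (T_{im}(t_{s+1})-T_{im}(t_s))/\sqrt{T_{im}(t_s)}=\mc O(\sqrt{T_{im}^{\max}})$ together with Lemma~\ref{lem:phase_length}, and cap the counter at a threshold of order $\ol L^2\tilde C_{im}\ln T/(\ol c_{im}(\Delta^\alpha_{im})^2)$. The genuine gap is in how you resolve the multi-witness double counting, which you correctly identify as the main obstacle but do not actually overcome. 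Charging each phase entirely to the pair attaining $\max_{(i,m)}d_{im}(t_s)$ costs a factor $\ol L$ twice: once from $\Delta^\alpha_{a'_s}\le\ol L\,d_{im}(t_s)$ and once from $\sqrt{T_{im}^{\max}}\le\ol L\sqrt{\tilde C_{im}\ln T/\ol c_{im}}\,/\Delta^\alpha_{im}$, so the per-pair contribution is $\mc O(\ol L^2\tilde C_{im}\ln T/\Delta^\alpha_{im})$, not the $\mc O(\ol L\,\tilde C_{im}\ln T/\Delta^\alpha_{im})$ you assert---your own arithmetic in the second paragraph produces the extra $\ol L$. The alternative of splitting $\Delta^\alpha_{a'_s}$ evenly among the at most $\ol L$ active pairs does not repair this either: the share assigned to a non-witnessing pair $(i,m)$ need not be bounded by $d_{im}(t_s)$, so you lose the cap on $T_{im}(t_s)$ for that pair and its telescoped sum degrades to order $\sqrt{T}$ rather than $\ln T$.

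The paper closes exactly this hole with the peeling device of~\cite[Lemma A.2]{ito2024bandit} (Kveton-style layering): it fixes decreasing sequences $\{\alpha_k\},\{\beta_k\}$ with $\sum_k\alpha_k/\beta_k=\mc O(1)$, shows $\1[\mc F_s]\le\sum_{k}\frac{1}{\ol L\beta_k}\sum_{(i,m)\in\tilde A^\alpha_s}\1[\mc G_{s,k,im}]$ where $\mc G_{s,k,im}$ requires $\Delta^\alpha_{a'_s}\le\ol L\sqrt{\alpha_k}\,d_{im}(t_s)$, and then runs your per-pair telescoping separately at each level $k$; the prefactor $1/(\ol L\beta_k)$ cancels one of the two factors of $\ol L$, and $\sum_k\alpha_k/\beta_k=\mc O(1)$ keeps the level sum bounded. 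Without this (or an equivalent) device your argument establishes the lemma only with $\ol L^2$ in place of $\ol L$, which would propagate an extra factor of $\ol L$ into the bound of Theorem~\ref{thm:sub_optimal_bound}.
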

\begin{proof}
By leveraging~\cite[Lemma A.2]{ito2024bandit} with $\{\alpha_k\}$ and $\{\beta_k\}$ such that $\alpha_1>\alpha_2 >\hdots >0$, $1=\beta_0 > \beta_1>\beta_2 >\hdots >0$, and $\sum_{k=1}^\infty \frac{\alpha_k}{\beta_k} = \mc O(1)$, if $\mc F_s$ defined in~\eqref{eq:F_s} occurs, we then have $\sum_{(i,m) \in  \tilde{A}^\alpha_s} \1[(i,m),\Delta^\alpha_{a'_s} \leq \ol{L} \sqrt{\alpha_k}d_{im}(t_s)] \geq \beta_k$ for some $k$. Additionally, since $\mc H_s \subseteq \mc F_s$
\begin{equation}
    \1[\mc H_s] \leq \1[\mc F_s] \leq \sum_{k=1}^{\infty}\frac{1}{\ol{L}\beta_k} \sum_{(i,m) \in  \tilde{A}^\alpha_s} \1[\mc G_{s,k,im}]
\end{equation}
where $\mc G_{s,k,im} = \{(i,m) \in  \tilde{A}^\alpha_s \mid  \Delta_{a'_s} \leq \sqrt{\alpha_k}d_{im}(t_s) \}$.

We then bound $\E[\hat{R}_T]$ as below
\begin{align}
\nonumber    & \E[\hat{R}_T]  = \E\Big[\sum_{s=1}^S  l_s \Delta^\alpha_{a'_s} \1[\mc H_s]  \Big] \\
\nonumber    & \leq \sum_{(i,m) \in  \tilde{A}^\alpha_s} \sum_{k=1}^{\infty}\frac{1}{\ol{L}\beta_k} \sum_{s=1}^S l_s\Delta^\alpha_{a'_s} \1[\mc G_{s,k,im}] \\
\nonumber    & \leq \sum_{(i,m) \in  \tilde{A}^\alpha_s} \sum_{k=1}^{\infty}\frac{1}{\beta_k} \sum_{s=1}^S l_s \sqrt{\alpha_k}d_{im}(t_s) \1[\mc G_{s,k,im}] \\
\nonumber    & = \sum_{(i,m) \in  \tilde{A}^\alpha_s} \sum_{k=1}^{\infty}\frac{\sqrt{\alpha_k}}{\beta_k} \sum_{s=1}^S l_s \sqrt{\frac{\tilde{C}_{im}}{\ol{c}_{im}} \frac{\ln t_s}{T_{im}(t_s)}}  \1[\mc G_{s,k,im}] \\
\label{eq:R_hat_pf_d}    & \leq \sqrt{\ln T} \sum_{(i,m) \in \tilde{A}^\alpha_s}\sqrt{\frac{\tilde{C}_{im}}{\ol{c}_{im}}} \sum_{k=1}^{\infty}\frac{\sqrt{\alpha_k}}{\beta_k} \sum_{s=1}^S \frac{l_s  \1[\mc G_{s,k,im}] }{\sqrt{T_{im}(t_s)}} \,.
\end{align}

With given $k$ and $i$, we then bound the term $\sum_{s=1}^S \frac{l_s  \1[\mc G_{s,k,im}] }{\sqrt{T_{im}(t)}}$ in~\eqref{eq:R_hat_pf_d}. Suppose that $\mc G_{s,k,im}$ occurs, we then have $\Delta^\alpha_{im} \leq \Delta_{a'_s} \leq \ol{L} \sqrt{\alpha_k}d_{im}(t_s) \leq \ol{L} \sqrt{\frac{\alpha_k \tilde{C}_{im}\ln T }{\ol{c}_{im}T_{im}(t_s)} }$, where the first inequality follows from the definition of $\Delta_{im}$. As a result, $\sqrt{T_{im}(t_s)} \leq \frac{\ol L}{\Delta_{im}} \sqrt{\frac{\alpha_k \tilde{C}_{im}\ln T }{\ol{c}_{im}}}$, and we define it as $\gamma$. Based on this a sufficient but not necessary condition, we can derive that
\begin{equation}
    \label{eq:R_hat_pf_a}
    \sum_{s=1}^S \frac{l_s  \1[\mc G_{s,k,im}]}{\sqrt{T_{im}(t_s)}} \leq  \sum_{s=1}^S \frac{l_s  \1[ (i,m) \in  \tilde{A}^\alpha_s] \1[\sqrt{T_{im}(t_s)} \leq \gamma]}{\sqrt{T_{im}(t_s)}}\,.
\end{equation}

Using~\eqref{eq:ls_upper} in Lemma~\ref{lem:count_lowerbound}, we then have 

\begin{equation}
\label{eq:R_hat_pf_b}
\begin{split}    
&\E \Big[ \frac{l_s  \1[ (i,m) \in \tilde{A}^\alpha_s]}{\sqrt{T_{im}(t_s)}} \Big] \\
& \leq \E \Big[ \frac{2\ol{c}_{im} \E[T_{im}(t_{s+1})-T_{im}(t_{s}) \mid l_s,T_{im}(t_{s}) ]}{\sqrt{T_{im}(t_s)}} \Big] \\
& = 2\ol{c}_{im} \sum_{\substack{\forall \ l_s, T_{im}(t_{s})}} Pr\{l_s, T_{im}(t_{s})\} \frac{1}{\sqrt{T_{im}(t_s)}}  \cdot \\ &  \sum_{T_{im}(t_{s+1})} Pr\{T_{im}(t_{s+1}) \mid l_s, T_{im}(t_{s})\} (T_{im}(t_{s+1})-T_{im}(t_{s})) \\
& = 2\ol{c}_{im} \sum_{\substack{ l_s, T_{im}(t_{s}),\\ T_{im}(t_{s+1})}} Pr\{l_s, T_{im}(t_{s}), T_{im}(t_{s+1})\}  \cdot \\& \hspace{5cm} \frac{T_{im}(t_{s+1})-T_{im}(t_{s})}{\sqrt{T_{im}(t_s)}} \\
& = 2\ol{c}_{im} \E \Big[ \frac{T_{im}(t_{s+1})-T_{im}(t_{s}) }{\sqrt{T_{im}(t_s)}} \Big]\,,
\end{split}  
\end{equation}
where we have 
\begin{equation*}
\label{eq:R_hat_pf_c}
\begin{split}    
& \frac{T_{im}(t_{s+1})-T_{im}(t_{s}) }{\sqrt{T_{im}(t_s)}}  = 3\frac{T_{im}(t_{s+1})-T_{im}(t_{s}) }{\sqrt{4T_{im}(t_s)}+\sqrt{T_{im}(t_s)}} \\
& \leq 3\frac{T_{im}(t_{s+1})-T_{im}(t_{s}) }{\sqrt{T_{im}(t_{s+1})}+\sqrt{T_{im}(t_s)}} 
 = 3\sqrt{T_{im}(t_{s+1})-T_{im}(t_{s})} \,,
\end{split}  
\end{equation*}
and the inequality follows from Lemma~\ref{lem:phase_length}.

Given~\eqref{eq:R_hat_pf_a}--\eqref{eq:R_hat_pf_c} and define $s_0$ as the phase such that $T_{im}(t_{s_0+1}) > \gamma $ while $T_{im}(t_{s_0}) \leq \gamma $, we then have  
\begin{align*}
&  \sum_{s=1}^S \frac{l_s  \1[\mc G_{s,k,im}]}{\sqrt{T_{im}(t_s)}} \\
&  \leq  2\ol{c}_{im}  \E \Big[ \sum_{s=1}^S \frac{T_{im}(t_{s+1})-T_{im}(t_{s}) }{\sqrt{T_{im}(t_s)}}  \1[\sqrt{T_{im}(t_s)} \leq \gamma]\Big] \\
& \leq 6\ol{c}_{im}  \E \Big[ \sum_{s=1}^S \sqrt{T_{im}(t_{s+1})-T_{im}(t_{s})}   \1[\sqrt{T_{im}(t_s)} \leq \gamma]\Big] \\
& = 6\ol{c}_{im}  \E \Big[ \sum_{s=1}^{s_0}\sqrt{T_{im}(t_{s+1})-T_{im}(t_{s})}   \1[\sqrt{T_{im}(t_s)} \leq \gamma]\Big] \\
& =  6\ol{c}_{im} T_{im}(t_{s_0+1}) 
\leq 12 \ol{c}_{im} T_{im}(t_{s_0+1}) 
\leq 12 \ol{c}_{im}\gamma \\
& =  \frac{12 \ol L \ol{c}_{im}}{\Delta^\alpha_{im}} \sqrt{\frac{\alpha_k \tilde{C}_{im}\ln T }{\ol{c}_{im}}} 
= \frac{12 \ol L }{\Delta^\alpha_{im}} \sqrt{\alpha_k \ol{c}_{im} \tilde{C}_{im}\ln T }\,.
\end{align*}

Then, from~\eqref{eq:R_hat_pf_d} and the above, we conclude that 
\begin{align*}
& \E[\hat{R}_T]  \leq \sqrt{\ln T} \sum_{(i,m) \in \tilde{A}^\alpha_s}\sqrt{\frac{\tilde{C}_{im}}{\ol{c}_{im}}} \sum_{k=1}^{\infty}\frac{\sqrt{\alpha_k}}{\beta_k} \sum_{s=1}^S \frac{l_s  \1[\mc G_{s,k,im}] }{\sqrt{T_{im}(t_s)}} \\
& \leq \sqrt{\ln T} \sum_{(i,m) \in \tilde{A}^\alpha_s}\sqrt{\frac{\tilde{C}_{im}}{\ol{c}_{im}}} \sum_{k=1}^{\infty}\frac{\sqrt{\alpha_k}}{\beta_k} \frac{12 \ol L }{\Delta^\alpha_{im}} \sqrt{\alpha_k \ol{c}_{im} \tilde{C}_{im}\ln T } \\
& = \mc O\Big(\ol L \ln T \sum_{(i,m) \in \tilde{A}^\alpha_s} \frac{\tilde{C}_{im}}{\Delta^\alpha_{im}}  \frac{\alpha_k}{\beta_k}  \Big)
= \mc O\Big(\ol{L} \sum_{(i,m) \in \tilde{A}^\alpha_s} \frac{\tilde{C}_{im} \ln T}{\Delta^\alpha_{im}} \Big) .
\end{align*}
\end{proof}


\begin{lemma}
    \label{lem:R2_bound}
    $R^{(2)}_T = \E\Big[\sum_{s=1}^S \sum_{t=t_s}^{t_{s+1}-1}  \Sigma \Big(q\odot  a_{s}'- r(t)\odot  a(t) \Big)  \mathds{1}\{a(t)+b(t)\in \mc A \} \Big] = \mc O\Big(\frac{C_u^2}{C_l^2}NM\ol{L} \ln T \Big)$
\end{lemma}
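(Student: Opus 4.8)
The plan is to bound $R^{(2)}_T$ one phase at a time, showing that each phase $s\ge 1$ contributes $\mc O(C_u\ol L/C_l)$ in expectation, and then multiplying by the phase count $S=\mc O(\frac{C_u}{C_l}NM\ln T)$ from~\eqref{eq:S_bound} (note that $R^{(2)}_T$ sums only over $s\ge 1$, so the initialization phase is not counted here). Write $Y_s:=\sum_{t=t_s}^{t_{s+1}-1}\Sigma\big(q\odot a'_s - r(t)\odot a(t)\big)\mathds 1\{a(t)+b(t)\in\mc A\}$, so that $R^{(2)}_T=\E\big[\sum_{s=1}^S Y_s\big]$, and condition throughout on the information available at round $t_s$, which fixes $a'_s$, $l_s$, and the state of the tasks carried over from phase $s-1$. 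The first step is to describe the within-phase dynamics: for an initial ``transient'' of at most $C_u$ rounds the carried-over tasks may fail $b(t)\le a'_s$, so no new assignment is made and $a(t)=0$ (this window is at most $C_u$ rounds because any task started in phase $s-1$ completes within $C_u$ steps, and $l_s\ge 90 C_u\ln T$); once $b(t)\le a'_s$ holds, the team is in a steady state in which, for every $(i,m)$ with $a'_{s,im}=1$, member $m$ re-starts task $i$ the instant it completes. Hence $\sum_{t=t_s}^{t_{s+1}-1}\Sigma(r(t)\odot a(t))=\sum_{(i,m):\,a'_{s,im}=1}\sum_{j=1}^{A_{im}(s)}r^{(j)}_{im}$, where $A_{im}(s)$ is the number of (re-)assignments of $(i,m)$ in the phase and the $r^{(j)}_{im}$ are i.i.d.\ reward draws.

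The second step exploits that $A_{im}(s)$ is a measurable function of the completion-time process inside the phase, hence---given the start of phase $s$---independent of the phase-$s$ reward draws, so a Wald-type factorization gives $\E\big[\sum_{t=t_s}^{t_{s+1}-1}\Sigma(r(t)\odot a(t))\bigm| t_s\big]=\sum_{(i,m):\,a'_{s,im}=1}\bar r_{im}\,\E[A_{im}(s)\mid t_s]$. Because every completion inside the phase comes either from an in-phase assignment or from the single task carried in from phase $s-1$, and at most one in-phase assignment is unfinished at $t_{s+1}$, we have $A_{im}(s)\ge T_{im}(t_{s+1})-T_{im}(t_s)-1$; combining this with Lemma~\ref{lem:count_lowerbound} (applied with the phase's true length, which equals $l_s$ except for the last phase where $t_{S+1}=T+1$) and $\bar c_{im}\le C_u$ yields $\E[A_{im}(s)\mid t_s]\ge (l_s-3C_u)/\bar c_{im}$ for each $(i,m)\in A'_s$. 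Summing over the pairs in $a'_s$ and using $q_{im}=\bar r_{im}/\bar c_{im}$ gives the key estimate $\E\big[\sum_{t=t_s}^{t_{s+1}-1}\Sigma(r(t)\odot a(t))\bigm| t_s\big]\ge (l_s-3C_u)\,\Sigma(q\odot a'_s)$: the realized reward over a phase equals the ideal per-round reward $\Sigma(q\odot a'_s)$ times the phase length, up to an $\mc O(C_u)$ boundary deficit.

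The third step is a case split on the feasibility of $a'_s$. If $a'_s\in\mc A$, every steady-state round is feasible while every infeasible (transient) round has $a(t)=0$, so the indicator $\mathds 1\{a(t)+b(t)\in\mc A\}$ deletes nothing from the realized-reward sum but caps the $\Sigma(q\odot a'_s)$ sum at $l_s$ rounds; hence $\E[Y_s\mid t_s]\le l_s\,\Sigma(q\odot a'_s)-(l_s-3C_u)\,\Sigma(q\odot a'_s)=3C_u\,\Sigma(q\odot a'_s)\le 3C_u\ol L/C_l$, using $q_{im}\le 1/C_l$ and $|\{(i,m):a'_{s,im}=1\}|\le\ol L$. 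If $a'_s\in\mc A^c$, every steady-state round is infeasible and so contributes $0$ to $Y_s$; only the $\le C_u$ transient rounds matter, each contributing at most $\Sigma(q\odot a'_s)$, and the number of phases in which a truly infeasible $a'_s$ is used is controlled exactly by the deceiver / infeasible-suboptimal accounting behind Theorem~\ref{thm:inf_regret_stc_f} (the counting bound~\eqref{eq:deceiver_bound1}, the estimate $\Pr[\mc A^\dagger_s=\emptyset]\le\ol L/t_s^2$, and~\eqref{eq:prf_4}), which again gives a contribution of order $\mc O(\frac{C_u^2}{C_l^2}NM\ol L\ln T)$. Finally, since the event $\{s\le S\}$ is determined by the information available at round $t_s$, summing $\E[Y_s\mid t_s]$ over $s=1,\dots,S$ and using $S=\mc O(\frac{C_u}{C_l}NM\ln T)$ delivers the claim.

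I expect the main obstacle to be the renewal-type coupling in the second step: one must verify carefully that, given the start of phase $s$, the re-assignment count $A_{im}(s)$ is a function of the i.i.d.\ completion times alone and is independent of the rewards realized within that phase (so the Wald factorization is valid), and that the $\le 1$ slack between $A_{im}(s)$ and $T_{im}(t_{s+1})-T_{im}(t_s)$, together with the $2C_u$ overshoot in Lemma~\ref{lem:count_lowerbound}, exactly absorbs the start-of-phase transient---this is precisely what produces the per-phase $\mc O(C_u)$ boundary term. The interaction of the feasibility indicator with the transient/steady-state decomposition is a secondary complication that the case split handles cleanly.
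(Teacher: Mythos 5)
Your proof is correct and follows essentially the same route as the paper's: lower-bound the realized per-phase reward by $(l_s-3C_u)\,\Sigma(q\odot a'_s)$ via the completion-count argument of Lemma~\ref{lem:count_lowerbound}, obtain a per-phase deficit of at most $3C_u\ol L/C_l$ using $q_{im}\le 1/C_l$ and $|A'_s|\le\ol L$, and multiply by the phase count $S$ from~\eqref{eq:S_bound}. The only real difference is that you treat the feasibility indicator with an explicit case split on $a'_s\in\mc A$ versus $a'_s\in\mc A^c$, whereas the paper simply discards the indicator with the single inequality $\E[\,\cdot\,\mathds{1}\{\cdot\}]\le\E[\,\cdot\,]$; your version is, if anything, slightly more careful on that point.
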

\begin{proof}
First, following the proof of~\cite[Lemma A.3]{ito2024bandit}, we can derive that 
\begin{equation} 
\label{eq:R2_bound_pf_0}
    \E\Big[\sum_{t=t_s}^{t_{s+1}-1}  \Sigma \Big(q\odot  a_{s}'- r(t)\odot  a(t) \Big) \Big] \leq 3 \ol L \frac{C_u}{C_l} \,,
\end{equation}
given any $a_{s}', t_s$, and $l_s$. For clarity, we restate the proof in our problem setting in the following. 

Given $s \geq 1$, for any task-member pair $(i,m) \in I_\mc A$, the number of times that the member $m$ will start the task $i$ during the phase $s$ is at least $T_{im}(t_{s+1})-T_{im}(t_{s})-1$. Therefore, we have
\begin{align}
\nonumber &\E\Big[\sum_{t=t_s}^{t_{s+1}-1}  \Sigma \big( r(t) \odot  a(t) \big) \Big] \\
\nonumber &\geq \E\Big[\sum_{(i,m) \in A'_s}  \ol r_{im} (T_{im}(t_{s+1})-T_{im}(t_{s})-1) \Big] \\
\nonumber &\geq \sum_{(i,m) \in A'_s} \ol r_{im} \Big[ \sum_{\substack{ l_s, T_{im}(t_{s}),\\ T_{im}(t_{s+1})}} Pr\{l_s, T_{im}(t_{s}), T_{im}(t_{s+1})\} \cdot \\
\nonumber &  \hspace{4cm}(T_{im}(t_{s+1})-T_{im}(t_{s})-1) \Big] \\
\nonumber &\geq \sum_{(i,m) \in A'_s} \ol r_{im} \Big[  \sum_{\substack{ l_s,\\ T_{im}(t_{s})}}  Pr\{l_s\}\cdot \sum_{ T_{im}(t_{s+1})}  \\
\nonumber &  Pr\{T_{im}(t_{s+1} ), T_{im}(t_{s}))\mid l_s \} (T_{im}(t_{s+1})-T_{im}(t_{s})-1) \Big] \\
\nonumber &= \sum_{(i,m) \in A'_s} \ol r_{im}  \E\Big[T_{im}(t_{s+1})-T_{im}(t_{s})-1 \mid l_s \Big]  \\
\label{eq:R2_bound_pf_a} & \geq \sum_{(i,m) \in A'_s} \ol r_{im}  \E\Big[ \frac{l_s - 2C_u}{\ol{c}_{im}} -1  \Big]  \\
\nonumber & = \sum_{(i,m) \in A'_s} \ol r_{im}  \E\Big[ \frac{l_s - 2C_u-\ol{c}_{im}}{\ol{c}_{im}} \Big] \\
\label{eq:R2_bound_pf_b} & \geq \sum_{(i,m) \in A'_s} \ol r_{im}  \E\Big[ \frac{l_s - 3C_u}{\ol{c}_{im}} \Big] 
= \sum_{(i,m) \in A'_s}  \E\Big[ q_{im} (l_s - 3C_u) \Big] \\
\label{eq:R2_bound_pf_c} & \geq \sum_{(i,m) \in A'_s}  \E\Big[ q_{im} l_s \Big] -3 \frac{C_u}{C_l} \ol L \\
\nonumber & =    \E\Big[ l_s  \Sigma(q_{im} \odot a'_s) \Big] -3 \frac{C_u}{C_l} \ol L \,,
\end{align}
where the inequality~\eqref{eq:R2_bound_pf_a} follows from Lemma~\ref{lem:count_lowerbound}, the inequality in~\eqref{eq:R2_bound_pf_b} follows from that $\ol{c}_{im} \leq C_u$,~\eqref{eq:R2_bound_pf_c} follows from that the team can execute at most $\ol L$ tasks at the same time, while $q_{im} \leq 1/C_l$. We then directly obtain~\eqref{eq:R2_bound_pf_0} and derive the upper bound for $R^{(2)}_T$.
\begin{align*}
R^{(2)}_T &= \E\Big[\sum_{s=1}^S \sum_{t=t_s}^{t_{s+1}-1}  \Sigma \Big(q\odot  a_{s}'- r(t)\odot  a(t) \Big)\cdot \\
& \hspace{4cm} \mathds{1}\{a(t)+b(t)\in \mc A \} \Big] \\
& \leq \E\Big[\sum_{s=1}^S \sum_{t=t_s}^{t_{s+1}-1}  \Sigma \Big(q\odot  a_{s}'- r(t)\odot  a(t) \Big)\Big] \\
& \leq S\cdot 3 \frac{C_u}{C_l} \ol L     
\leq  \mc O (\frac{C_u}{C_l}NM\ln T  \cdot 3 \frac{C_u}{C_l} \ol L) \\
& =  \mc O (\frac{C_u^2}{C_l^2}NM \ol L\ln T )\,.
\end{align*}
where the last inequality that bounds the number of total phases $S$ follows from~\eqref{eq:S_bound}.
\end{proof}

\begin{lemma}
    \label{lem:R3_bound}
    $R^{(3)}_T =  \E\Big[\sum_{s=1}^S \sum_{t=t_s}^{t_{s+1}-1}  \Sigma \Big(q\odot  a_{s}'\Big) \mathds{1}\{a(t)+b(t)\in \mc A^c\} \Big] = \mc O \Big(\sum_{a \in \mc A^c} \frac{\bar{L}^3\ln T}{C_l(\max_{m}\Delta^L_{am})^2} +\frac{C^2_u \bar{L}}{C^2_l}NM\ln T \Big)$
\end{lemma}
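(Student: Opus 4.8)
The plan is to reduce $R^{(3)}_T$ to the same infeasibility-counting argument already used for the violation penalty $V_T$ in Theorem~\ref{thm:inf_regret_stc_f}, plus a short transition correction, exploiting that $R^{(3)}_T$ only accumulates reward in rounds whose realized assignment $a(t)+b(t)$ is infeasible. \emph{Reduction:} at most $\bar L$ pairs $(i,m)$ have $a'_{s,im}=1$ and $q_{im}\le 1/C_l$, so $\Sigma(q\odot a'_s)\le \bar L/C_l$ and it suffices to count rounds with $a(t)+b(t)\in\mc A^c$. By Line~\ref{line:set_act_start} of Algorithm~\ref{alg:bandit}, inside phase $s$ one has $a(t)+b(t)=a'_s$ as soon as $b(t)\le a'_s$, and $a(t)+b(t)=b(t)$ only in the initial window of the phase, which lasts at most $C_u$ rounds and during which $b(t)\le a'_{s-1}$ (the tasks still running at $t_s$ are copies of tasks assigned under $a'_{s-1}$). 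Since $\mc A$ is closed under inclusion, this gives, per phase,
\[
\sum_{t=t_s}^{t_{s+1}-1}\mathds{1}\{a(t)+b(t)\in\mc A^c\}\ \le\ l_s\,\mathds{1}\{a'_s\in\mc A^c\}+C_u\,\mathds{1}\{a'_{s-1}\in\mc A^c\},
\]
with $a'_0$ read as the configuration left active after initialization, hence
\[
R^{(3)}_T\ \le\ \frac{\bar L}{C_l}\,\E\Big[\sum_{s=1}^S l_s\,\mathds{1}\{a'_s\in\mc A^c\}\Big]+\frac{C_u\bar L}{C_l}\,\E\Big[\sum_{s=1}^S\mathds{1}\{a'_{s-1}\in\mc A^c\}\Big].
\]

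For the first expectation I would re-run the book-keeping of Lemmas~\ref{lem:deceiver_bound}--\ref{lem:R_inf_sub_bound} with the per-round weight $\Delta^L_{a'_s}$ replaced by $1$: split $\mathds{1}\{a'_s\in\mc A^c\}=\mathds{1}[K_s]+\mathds{1}[J_s]$ as in~\eqref{eq:K_s}. On $K_s\cap G_s(a'_s)$, the argument leading to~\eqref{eq:deceiver_bound1} shows that as long as an infeasible assignment $a\in\mc A^c$ still lies in $\mc A^\dagger_s$ and $G_s(a)$ holds, $T^f_a(t_s)\le\min_{i:a_{im}=1}T^f_{im}(t_s)\le 6\bar L^2\ln(T+1)/(\max_m\Delta^L_{am})^2$; combining this with the fact that $a'_s$ is executed in all but at most $C_u$ rounds of its phase, and bounding the single (possibly oversized) last such phase via~\eqref{eq:ls} and Lemma~\ref{lem:phase_length}, gives $\sum_{s:\,a'_s=a}l_s\,\mathds{1}[K_s\cap G_s(a)]=\mc O\big(\bar L^2\ln T/(\max_m\Delta^L_{am})^2+C_u\big)$. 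On $K_s\cap G^c_s(a'_s)$ and on $J_s$ the relevant events have probability $\mc O(\bar L/t_s^2)$ by~\eqref{eq:prf_subopt_probGc} and~\eqref{eq:prf_subopt_probQc}, so by~\eqref{eq:prf_4} they contribute only $\mc O(\bar L\,\E[1/t_1])=\mc O(1)$. Summing over the at most $S$ distinct assignments ever played,
\[
\E\Big[\sum_{s=1}^S l_s\,\mathds{1}\{a'_s\in\mc A^c\}\Big]=\mc O\Big(\sum_{a\in\mc A^c}\frac{\bar L^2\ln T}{(\max_m\Delta^L_{am})^2}+S\,C_u\Big).
\]

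Then I would assemble: using $S=\mc O(\tfrac{C_u}{C_l}NM\ln T)$ from~\eqref{eq:S_bound}, multiplying the last display by $\bar L/C_l$ yields $\mc O\big(\sum_{a\in\mc A^c}\tfrac{\bar L^3\ln T}{C_l(\max_m\Delta^L_{am})^2}+\tfrac{C^2_u\bar L}{C^2_l}NM\ln T\big)$; the transition term is $\tfrac{C_u\bar L}{C_l}\E[\sum_s\mathds{1}\{a'_{s-1}\in\mc A^c\}]\le\tfrac{C_u\bar L}{C_l}\,S=\mc O(\tfrac{C^2_u\bar L}{C^2_l}NM\ln T)$ as well, and adding the two gives the claimed bound.

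The main obstacle is the reduction step: one must argue carefully that the realized assignment during phase $s$ is infeasible only when either $a'_s\in\mc A^c$ or we are still in the $\le C_u$-round residual window carrying over the possibly-infeasible $a'_{s-1}$, which rests on closure of $\mc A$ under inclusion and on the structural fact that the tasks uncompleted at the start of phase $s$ form a sub-assignment of $a'_{s-1}$. The remaining subtlety is turning the execution-count bound~\eqref{eq:deceiver_bound1} into a bound on $\sum_s l_s$; the $\mc O(S\,C_u)$ overshoot from the last phase of each infeasible assignment is harmless because it merges with the $\tfrac{C^2_u\bar L}{C^2_l}NM\ln T$ term.
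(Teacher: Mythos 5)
Your proposal is correct and follows essentially the same route as the paper's proof: peel off the at most $C_u$ transition rounds per phase (contributing $\mc O(C_u S\bar L/C_l)$), reduce to counting phases with $a'_s\in\mc A^c$, split via $K_s\cap G_s$, $K_s\cap G^c_s$, and $J_s$, bound the first through the execution-count bound~\eqref{eq:deceiver_bound1} and the latter two through the $\mc O(\bar L/t_s^2)$ probability estimates together with~\eqref{eq:prf_4}, and finish with~\eqref{eq:S_bound}. Your extra care about when the residual window is actually infeasible and about the $l_s$-versus-$T^f_a$ accounting is slightly finer than the paper's treatment but lands on the same bound.
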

\begin{proof}

\begin{align}
\nonumber   & \E\Big[\sum_{s=1}^S \sum_{t=t_s}^{t_{s+1}-1}  \Sigma (q\odot  a_{s}') \mathds{1}\{a(t)+b(t)\in \mc A^c\} \Big]\\
\label{eq:R3prf_1} & \leq \frac{C_u S \bar{L} }{C_l} +\E\Big[\sum_{s=1}^S \sum_{t=t_s}^{t_{s+1}-1}  \Sigma (q\odot  a_{s}') \mathds{1}\{a_{s}' \in \mc A^c\} \Big] \\
\nonumber & = \frac{C_u S \bar{L} }{C_l} + \E\Big[\sum_{s=1}^S \sum_{t=t_s}^{t_{s+1}-1}  \Sigma (q\odot  a_{s}') \big[\mathds{1}\{K_s, G_{a_{s}'}(t_s)\} \\
\nonumber & \quad +\mathds{1}\{K_s, G^c_{a_{s}'}(t_s)\} +\mathds{1}\{J_s\} ] \Big] \\
\nonumber & = \E\Big[\sum_{s=1}^S \sum_{t=t_s}^{t_{s+1}-1}  \Sigma (q\odot  a_{s}')\mathds{1}\{K_s, G_{a_{s}'}(t_s)\} \Big]  \\
\label{eq:R3prf_2} & \quad + \frac{C_u S \bar{L} }{C_l} + \E\Big[\sum_{s=1}^S \sum_{t=t_s}^{t_{s+1}-1} \frac{3 \bar L^2}{C_l t_s^2} \Big] \\
\label{eq:R3prf_3} & \leq \sum_{a \in \mc A^c} \E \Big[ \Sigma (q\odot  a) T_a^f(T)\} \Big]  +\frac{3 \bar L^2}{C_l}  \E\Big[\sum_{s=1}^S \frac{l_s}{ t_s^2} \Big] + \frac{C_u S \bar{L} }{C_l}  \\
\label{eq:R3prf_4} & \leq  \sum_{a \in \mc A^c} \frac{\bar L}{C_l } \cdot \frac{6\ln(T+1) \bar{L}^2}{(\max_{m}\Delta^L_{am})^2} + \frac{15 \bar L^2}{C_l}  \E \Big[\frac{1}{t_1}\Big] + \frac{C_u S \bar{L} }{C_l}  \\
\label{eq:R3prf_5} & \leq \mc O \Big(\sum_{a \in \mc A^c} \frac{\bar{L}^3\ln T}{C_l(\max_{m}\Delta^L_{am})^2} +\frac{C^2_u \bar{L}}{C^2_l}NM\ln T \Big) \,,
\end{align}
where the~\eqref{eq:R3prf_1} is because for each phase, $a(t) +b(t) \neq a'_s$ for at most $C_u$ rounds at the beginning, and that $\Sigma (q\odot  a_{s}') \leq \bar{L} / C_l$.
From~\eqref{eq:prf_subopt_probQc}, we have $Pr\{G_a^c(t)\}\leq  \frac{2\bar{L}}{t_s^2}$, and from the proof of Lemma~\ref{lem:R_inf_sub_bound}, we have $ Pr\{J_s\} \leq \frac{\bar L}{t_s^2}$, and we then obtain~\eqref{eq:R3prf_2}. We derive~\eqref{eq:R3prf_3} from~\eqref{eq:prf_4}.
Further, we have $T^f_a(T) \leq \frac{6\ln(T+1) \bar{L}^2}{(\max_{m}\Delta^L_{am})^2}$ from~\eqref{eq:deceiver_bound1}, we then derive~\eqref{eq:R3prf_4}. Next, by~\eqref{eq:S_bound}, we have $S \leq \mc O (\frac{C_u}{C_l}NM\ln T)$, and further, $1/t_1 \leq 1$ and $\bar{L} \leq NM$, we then derive~\eqref{eq:R3prf_5}. We therefore complete the proof.
\end{proof}

Given Lemmas~\ref{lem:subopt_R1_bound}, \ref{lem:subopt_Rhat_bound}, \ref{lem:R2_bound} , and~\ref{lem:R3_bound}, we then cast bounds on $R^\alpha_T$ following $R^{(1)}_T$, $R^{(2)}_T$, and $R^{(3)}_T$. 
We show the proof for Theorem~\ref{thm:sub_optimal_bound} below.
\begin{proof}
When the Algorithm~\eqref{alg:bandit} is an exact algorithm, i.e., $\alpha =0$. For this problem setting, we can consider the multi-agent problem as a generalization of task assignment for a single agent with $NM$ tasks and maximal capacity $\ol{L}$, with extra constraints on the set of feasible actions to ensure that the planner will not assign the same task to multiple agents simultaneously, and additional concern of the stochastic resources that each task takes from each agent. We can then leverage the analysis from~\cite[Theorem 4.5]{ito2024bandit} to obtain the lower bound on the regret $R^0_T$.

We then analyze the upper bound for the regret.
Given Lemmas~\ref{lem:subopt_R1_bound} and~\ref{lem:subopt_Rhat_bound}, we have 
\begin{align}
\nonumber    R^{(1)}_T &= \E\Big[\sum_{s=1}^S \sum_{t=t_s}^{t_{s+1}-1}  \Sigma \Big(\frac{1}{\alpha+1} q \odot a^* - q\odot  a_{s}' \Big) \Big] \\
\nonumber & \leq \E[\hat{R}_T] +\mc O \big(\frac{MN\ol L}{C_l t_1} \big) \\
\nonumber & = \mc O\Big(\ol{L} \sum_{(i,m) \in \tilde{A}^\alpha_s} \frac{\tilde{C}_{im} \ln T}{\Delta^\alpha_{im}} + \frac{MN\ol L}{C_l t_1} \Big) \\
\label{eq:thm_pf_a} & = \mc O\Big(\ol{L} \sum_{(i,m) \in \tilde{A}^\alpha_s} \frac{\tilde{C}_{im} \ln T}{\Delta^\alpha_{im}} + \frac{NM\ol{L}}{C_l} \Big) \,.
\end{align}

Combining this with Lemma~\ref{lem:R2_bound} and Lemma~\ref{lem:R3_bound}, and that $\E[t_1] = \mc O \Big(\frac{C^2_u N M \ln T}{C_l}  \Big)$ from~\eqref{eq:basic_time}, we then get the gap-dependent bound $R^\alpha_T \leq \mc O\Big(\ol{L} \sum_{ \forall m,i } \frac{\tilde{C}_{im} \ln T }{\Delta^\alpha_{im}} +\frac{C_u^2}{C_l^2}NM\ol{L} \ln T +\sum_{a \in \mc A^c} \frac{\bar{L}^3\ln T}{C_l(\max_{m}\Delta^L_{am})^2} \Big) 
\leq \mc O\Big(\frac{\max_{ \forall m,i }\tilde{C}_{im} }{\ul \Delta^\alpha } NM\ol{L} \ln T  +\frac{C_u^2}{C_l^2}NM\ol{L} \ln T  +\sum_{a \in \mc A^c} \frac{\bar{L}^3\ln T}{C_l(\max_{m}\Delta^L_{am})^2}\Big)
\leq \mc O\Big(\Big(\frac{1 }{\ul \Delta^\alpha } + C_u\Big)\frac{C_u}{C_l^2}NM\ol{L} \ln T+\sum_{a \in \mc A^c} \frac{\bar{L}^3\ln T}{C_l(\max_{m}\Delta^L_{am})^2} \Big).$

Lastly, to obtain the gap-independent regret bound, we modify the analysis of $\hat R_T$. Given any $\epsilon >0$, we can then bound $\hat R_T$ from above
\begin{align}
\nonumber \E & [\hat{R}_T] = \E\Big[\sum_{s=1}^S  l_s \Delta_{a'_s} \1[\mc H_s]  \Big] \\
\nonumber & \leq \sum_{s=1}^S  \E\Big[l_s \Delta^\alpha_{a'_s} \1[\mc H_s, \Delta^\alpha_{a'_s} >\epsilon] + l_s \Delta^\alpha_{a'_s} \1[\mc H_s, \Delta^\alpha_{a'_s} \leq \epsilon] \Big] \\
\label{eq:thm_pf_b} & \leq \mc O\Big(\ol{L} \sum_{(i,m) \in \tilde{A}_s} \frac{\tilde{C}_{im} \ln T}{\Delta^\alpha_{im}}\1[ \Delta^\alpha_{a'_s} > \epsilon] \Big] +  \epsilon T\Big)  \\
\nonumber & \leq \mc O\Big(\ol{L} \sum_{(i,m) \in \tilde{A}_s} \frac{\tilde{C}_{im} \ln T}{\epsilon } + \epsilon T \Big) 
\end{align}
where the first part of~\eqref{eq:thm_pf_b} follows exactly from the proof of Lemma~\ref{lem:subopt_Rhat_bound}, and the latter part of~\eqref{eq:thm_pf_b} follows from that $l_s <T$. Then by setting $\epsilon = \sqrt{\frac{C_u MN \ol L\ln T}{C_l^2 T}}$, we obtain that $\E [\hat{R}_T] \leq \mc O \Big( \frac{1}{C_l} \sqrt{C_u MN \ol L T \ln T} \Big) $. And we then get the gap-free bound for $R^{\alpha}_T$ as in Theorem~\ref{thm:sub_optimal_bound} that
\begin{align*}  
        R^{\alpha}_T \leq & \mc O\Big(\frac{1}{C_l} \sqrt{ C_u N M\ol{L} T\ln T} + \frac{C^2_u}{C_l^2}NM\ol{L} \ln T  \\ &\qquad +\sum_{a \in \mc A^c} \frac{\bar{L}^3\ln T}{C_l(\max_{m}\Delta^L_{am})^2}\Big) \,.
\end{align*}
\end{proof}

\end{appendices}

\begin{IEEEbiography}[{\includegraphics[width=1in,height=1.25in,clip,keepaspectratio]{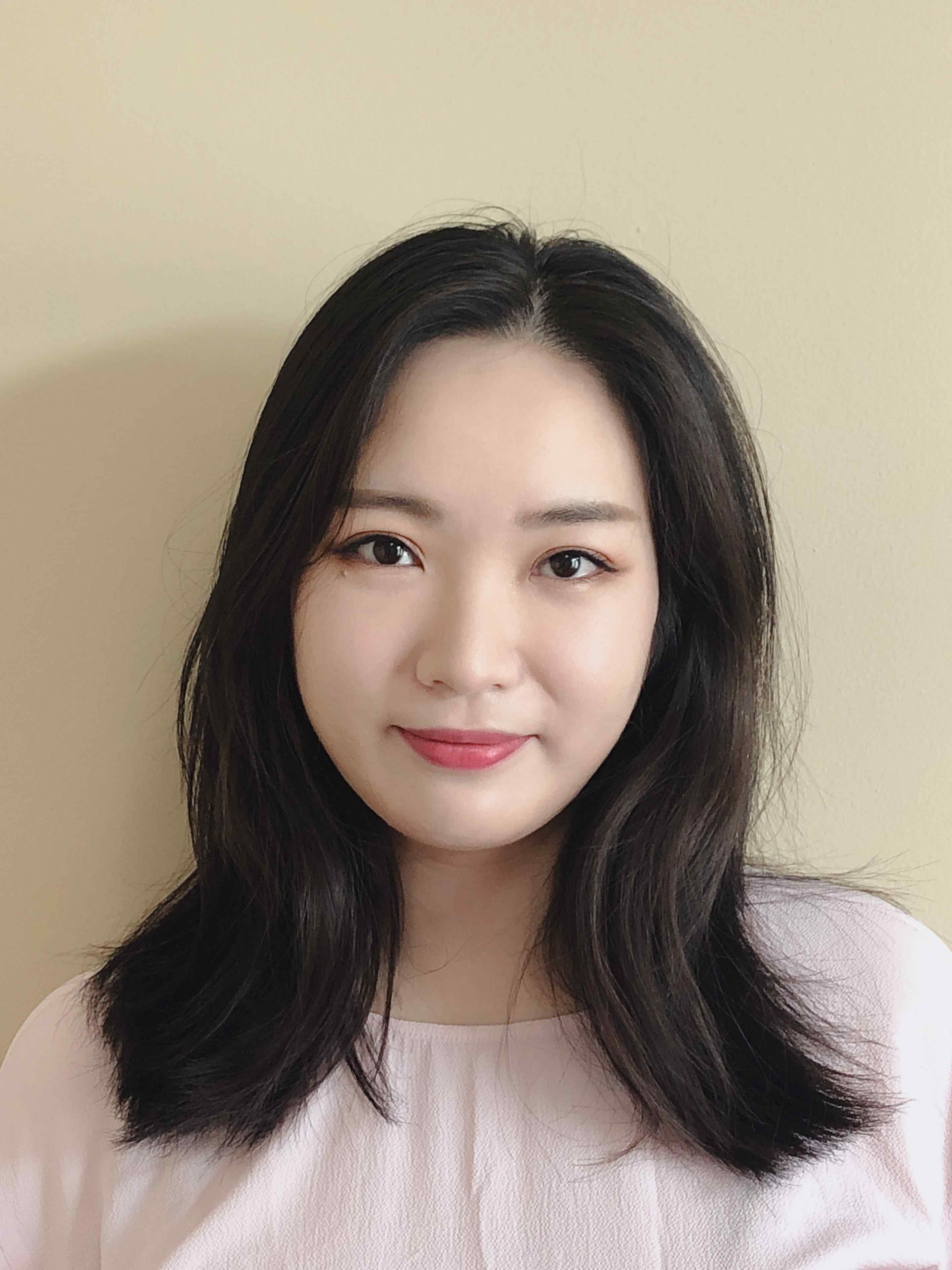}}]{Qinshuang Wei} received the B.S., M.S., and Ph.D. degrees in electrical engineering from Georgia Institute of Technology, Atlanta, Georgia, USA in 2017, 2018, and 2022, where she also received the B.S. degree in applied mathematics in 2017. She was a postdoctoral fellow in the University of Texas at Austin at 2022 and 2023. She is currently a postdoctoral research associate in the Purdue University.
Her research interests include cyber-physical systems with emphasis on multi-agent systems and transportation networks, optimization, and control systems. 
\end{IEEEbiography}

\begin{IEEEbiography}[{\includegraphics[width=1in,clip,keepaspectratio]{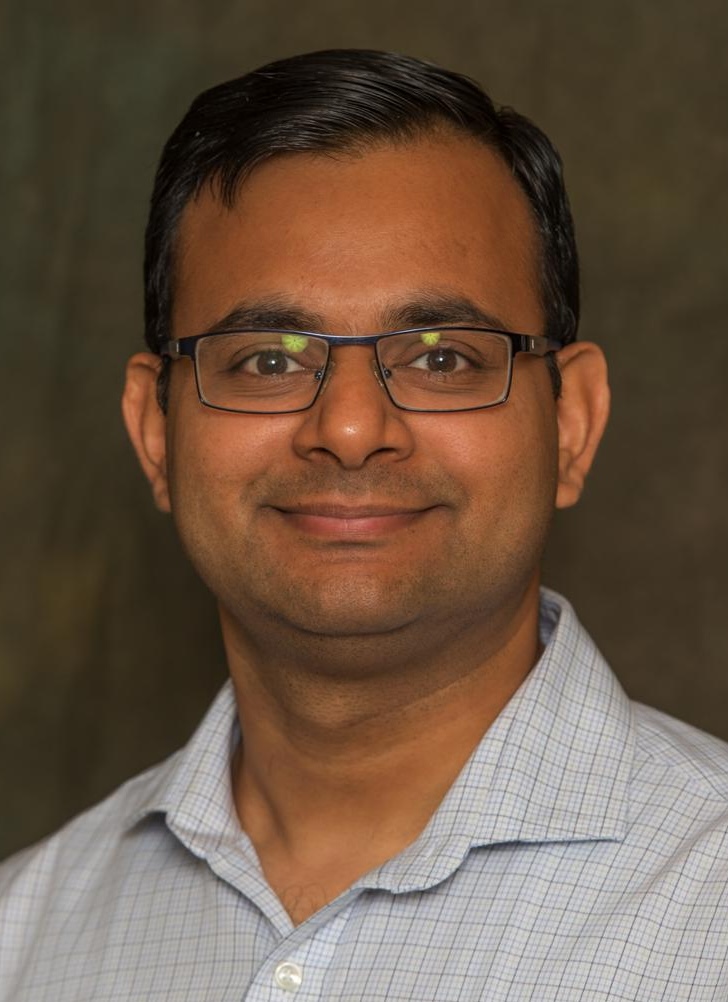}}]{Vaibhav Srivastava}~(S'08, M'14, SM'19) received the B.Tech. degree in mechanical engineering (2007) from the Indian Institute of Technology Bombay, Mumbai, India; the M.S. degree in mechanical engineering (2011), the M.A. degree in statistics (2012), and the Ph.D. degree in mechanical engineering (2012) from the University of California at Santa Barbara, Santa Barbara, CA. 	
He is currently an Associate Professor of Electrical and Computer Engineering at Michigan State University.  His research focuses on Cyber Physical Human Systems with emphasis on mixed human-robot systems, networked multi-agent systems, and autonomous vehicles.
 \end{IEEEbiography}

\begin{IEEEbiography}[{\includegraphics[width=1in,height=1.25in,clip,keepaspectratio]{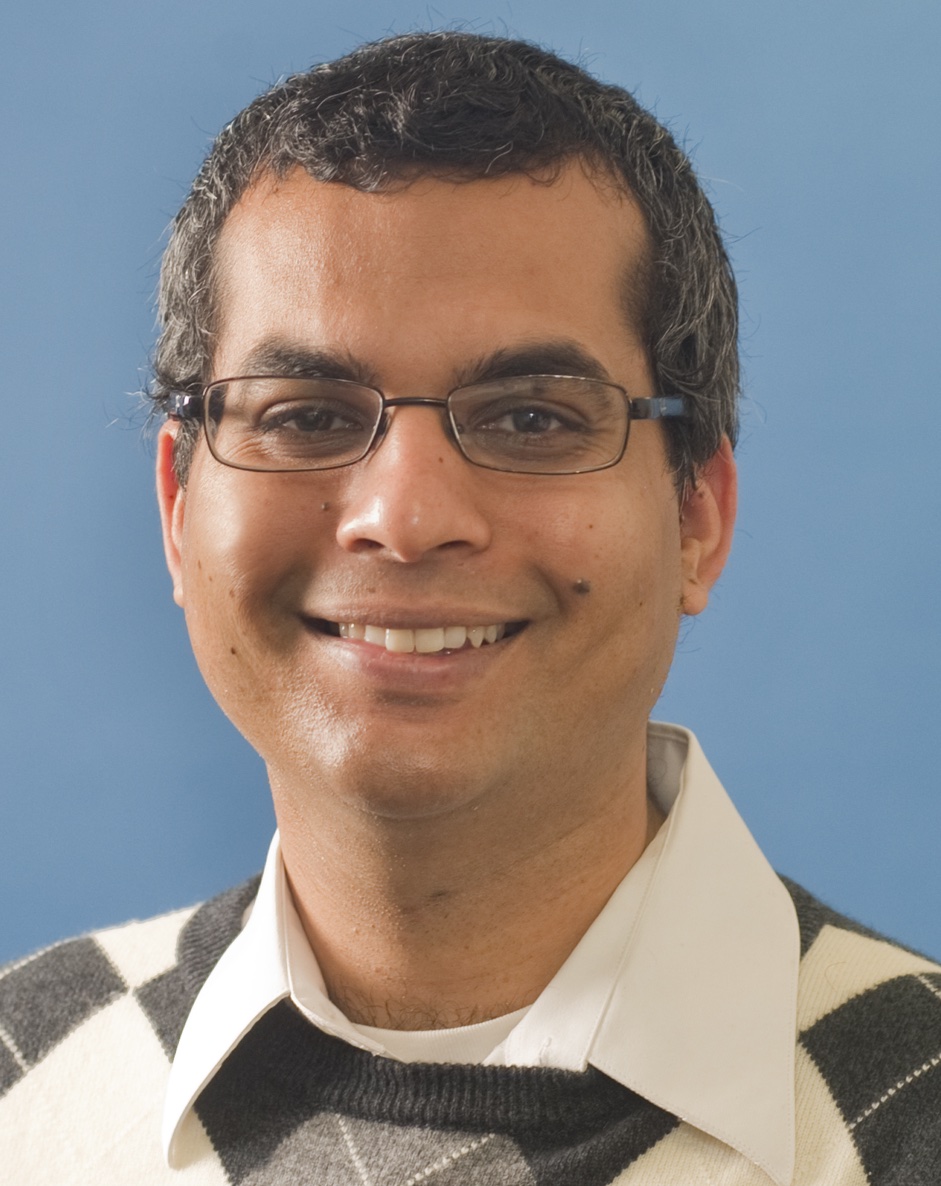}}]{Vijay Gupta} is in the Elmore Family School of Electrical and Computer Engineering at Purdue University. He received his B.~Tech degree at the Indian Institute of Technology, Delhi, and his M.S. and Ph.D. at California Institute of Technology, all in Electrical Engineering. He received the 2018 Antonio J Rubert Award from the IEEE Control Systems Society, the 2013 Donald P. Eckman Award from the American Automatic Control Council, and a 2009 National Science Foundation (NSF) CAREER Award. His research interests are broadly at the interface of communication, control, distributed computation, and human decision making.
\end{IEEEbiography}

\end{document}